\newtheorem{theorem}{Theorem}
\newtheorem{informal}[theorem]{Informal Theorem}
\newtheorem{corollary}{Corollary}
\newtheorem{lemma}{Lemma}
\newtheorem{proposition}{Proposition}
\newtheorem{definition}{Definition}
\newtheorem{observation}{Observation}
\newtheorem{remark}{Remark}
\newenvironment{prevproof}[2]{\noindent {\em {Proof of {#1}~\ref{#2}:}}}{$\Box$\vskip \belowdisplayskip}
\newcommand{\junk}[1]{}
\newcommand{\poly}{{\rm poly}}
\newcommand{\costasnote}[1]{{\color{red}{#1}}}
\newcommand{\costasfnote}[1]{{\color{red}{\footnote{\costasnote{{\bf Costas says:}~#1}}}}}
\newcommand{\mattnote}[1]{{\color{black}{#1}}}
\newcommand{\notshow}[1]{{}}
\def \obj {{\cal O}}
\DeclareMathOperator{\argmax}{argmax}
\definecolor{MyGray}{rgb}{0.8,0.8,0.8}
\begin{document}
\title{Bayesian Truthful {\em Mechanisms} for Job Scheduling\\ from Bi-criterion Approximation {\em Algorithms}}
\author {
Constantinos Daskalakis\thanks{Supported by a Sloan Foundation Fellowship, a Microsoft Research Faculty Fellowship and NSF Award CCF-0953960 (CAREER) and CCF-1101491.}\\
EECS, MIT \\
\tt{costis@mit.edu}
\and
S. Matthew Weinberg\thanks{Supported by a Microsoft Research Graduate Fellowship, NSF Graduate Research Fellowship and NSF award CCF-1101491.}\\
EECS, MIT\\
\tt{smw79@mit.edu}
}

\addtocounter{page}{-1}
\maketitle
\begin{abstract}

We provide polynomial-time approximately optimal Bayesian mechanisms for makespan minimization on unrelated machines as well as for max-min fair allocations of indivisible goods, with approximation factors of $2$ and $\min\{m-k+1, \tilde{O}(\sqrt{k})\}$ respectively, matching the approximation ratios of best known polynomial-time \emph{algorithms} (for max-min fairness, the latter claim is true for certain ratios of the number of goods $m$ to people $k$). Our mechanisms are obtained by establishing a polynomial-time approximation-sensitive reduction from the problem of designing approximately optimal {\em mechanisms} for some arbitrary objective ${\cal O}$ to that of designing bi-criterion approximation {\em algorithms} for the same objective ${\cal O}$ plus a linear allocation cost term. Our reduction is itself enabled by extending the  celebrated ``equivalence of separation and optimization''~\cite{GrotschelLS81,KarpP80} to also accommodate bi-criterion approximations. Moreover, to apply the reduction to the specific problems of makespan and max-min fairness we develop polynomial-time bi-criterion approximation algorithms for makespan minimization with costs and max-min fairness with costs, adapting the algorithms of \cite{ShmoysT93},~\cite{BezakovaD05} and~\cite{AsadpourS07} to the type of bi-criterion approximation that is required by the reduction. 

\end{abstract}
\thispagestyle{empty}

\newpage

\section{Introduction} \label{sec:intro}

\vspace{-3pt}Job shop scheduling is a fundamental problem that has been intensively studied in operations research and computer science in several different flavors. The specific one that we consider in this paper, called {\em scheduling unrelated machines}, pertains to the allocation of $m$ indivisible jobs for execution to $k$ machines so as to minimize the time needed for the last job to be completed, called the {\em makespan} of the schedule. The input is the processing time $p_{ij}$ of each machine $i$ for each job $j$. The problem is ${\rm NP}$-hard to $(3/2-\epsilon)$-approximate, for any $\epsilon >0$, but a polynomial-time $2$-approximation algorithm is known~\cite{LenstraST87}. An overview of algorithmic work on this problem~can~be~found~in~\cite{Hochbaum96}.

\smallskip Starting in the seminal work of Nisan and Ronen~\cite{NisanR99}, scheduling unrelated machines has also become paradigmatic for investigating the relation between the complexity of mechanism and algorithm design. Mechanism design can be viewed as the task of optimizing an objective over ``strategic inputs.'' In comparison to algorithm design where the inputs are known, in mechanism design the inputs are owned by rational agents who must be incentivized to share enough information about their input so that the desired objective can be optimized. The question raised by~\cite{NisanR99} is how much this extra challenge degrades our ability to optimize objectives:

\medskip \noindent ~~\begin{minipage}[h]{15.8cm}
{\em How much more difficult is mechanism design for a certain objective compared to algorithm design for that same objective?}
\end{minipage}

\medskip \noindent In the context of scheduling unrelated machines, suppose that the machines are rational agents who know their own processing times for the jobs, but want to minimize the sum of processing times of the jobs assigned to them minus the payment made to them by the mechanism. If the machines are rational, is it still possible to (approximately) minimize makespan? 

\smallskip Indeed, there are two questions pertaining to the relation of algorithm and mechanism design that are important to answer. The first is comparing the performance of the optimal mechanism to that of the optimal algorithm. In our setting, the question is whether there are mechanisms whose makespan is (approximately) optimal with respect to the real $p_{ij}$'s, which (at least a priori) are only known to the machines. Nisan and Ronen show that the classical VCG mechanism achieves a factor $k$ approximation to the optimal makespan~\cite{NisanR99}, but since their work no constant factor approximation has been obtained. We overview known upper and lower bounds in Section~\ref{sec:relatedwork}.

 The second question pertaining to the relation of algorithm and mechanism design is of computational nature. The question is whether polynomial-time (approximately) optimal  mechanisms exist  for objectives for which polynomial-time (approximately) optimal algorithms exist. In our context, there exist polynomial-time algorithms whose makespan is approximately optimal with respect to the optimal makespan of any feasible schedule~\cite{LenstraST87}, so the question is whether there exist polynomial-time mechanisms whose makespan is approximately optimal with respect to that of any mechanism. This is the question that we study in this paper.

Before proceeding it is worth mentioning that (outside of makespan minimization) this question has been intensively studied,  and the results are discouraging. In particular, a sequence of recent results~\cite{PapadimitriouSS08,BuchfuhrerDFKMPSSU10,Dobzinski11,DobzinskiV12} have identified welfare maximization problems for which polynomial-time constant factor approximation algorithms exist, but where no polynomial-time mechanism is better than a polynomial-factor approximation, subject to well-believed complexity theoretic assumptions. At the same time, we have also witnessed a recent surge in the study of mechanisms in Bayesian settings, where the participants of the mechanism (in our case machines) have types (in our case processing times for jobs) drawn from a prior distribution that is common knowledge. The existence of priors has been shown~\cite{HartlineL10,HartlineKM11,BeiH11} to sidestep several intractability results including the ones for welfare maximization referenced above. 
In view of this experience, it is natural to ask:

\vspace{5pt} \noindent ~~\begin{minipage}[h]{15.8cm}
{\em Are there approximately optimal, computationally efficient mechanisms for makespan minimization in Bayesian settings?}
\end{minipage}



 We provide a positive answer to this question, namely (see Theorem~\ref{thm:truthfulmakespan} for a formal statement)
\begin{informal}\label{informal thm: makespan mechanisms}
In Bayesian settings, there is a polynomial-time {2}-approximately optimal mechanism for makespan minimization for unrelated machines.
\end{informal}
\noindent In particular,  the approximation factor achieved by our mechanism exactly matches the best known approximation factor achieved by polynomial-time algorithms~\cite{LenstraST87}. In fact, our proof establishes a polynomial-time, approximation sensitive, black-box reduction from the problem of designing a mechanism for makespan minimization to the problem of designing a bi-criterion approximation for the generalized assignment problem~\cite{ShmoysT93,ShmoysT93b}. We explain our reduction and the type of bi-criterion approximation that is required in Section~\ref{sec:black-box intro}. We discuss prior work on mechanisms for makespan minimization in Section~\ref{sec:relatedwork}, noting here that the best known approximation factors prior to our work were polynomial, in general.


\medskip A problem related to makespan minimization is that of {\em max-min fair allocation of indivisible goods}, abbreviated to {\em max-min fairness}. In the language of job scheduling, this can be described as looking for an assignment of jobs to machines that maximizes the minimum load---rather than minimizing the maximum load, which is the goal in makespan minimization. While the two problems are related, the best known polynomial-time approximation algorithms for max-min fairness achieve factors that are polynomial in the number of jobs or machines. We overview algorithmic work on the problem in Section~\ref{sec:relatedwork}, noting here that there are several, mutually undominated approximation algorithms, whose approximation guarantees have different dependences on the number of jobs $m$, machines $k$, and other parameters of the problem.
Our contribution here is to obtain polynomial-time Bayesian mechanisms matching the approximation factor of some of those algorithms, namely (see Theorem~\ref{thm:truthfulfairness} for a formal statement)
\begin{informal}\label{informal thm: fairness results}
In Bayesian settings, there are polynomial-time approximately-optimal mechanisms for max-min fairness whose fairness guarantees are respectively:
\begin{enumerate}
\item within a factor of $\tilde{O}(\sqrt{k})$ of ${\rm OPT}$;
\item within a factor of $O(m-k+1)$ of ${\rm OPT}$.
\end{enumerate}
where ${\rm OPT}$ is the fairness achieved by the optimal mechanism.
\end{informal}
\noindent In particular, our approximation guarantees match those of the approximation algorithms provided by~\cite{AsadpourS07} and 
\cite{BezakovaD05}, which both lie on the Pareto boundary of what is achievable by polynomial-time algorithms. Our contribution here, too, can be viewed as pushing mechanism design up to speed with algorithm design for the important objective of max-min fairness. Our proof is enabled by a polynomial-time, approximation sensitive, black-box reduction from mechanism design for max-min fairness to bi-criterion approximation algorithm design {\em for max-min fairness with allocation costs}, for which we recover approximation guarantees matching those of~\cite{AsadpourS07,BezakovaD05} in Section~\ref{sec:makespan}.

Our mechanism to algorithm reduction, enabling Theorems~\ref{informal thm: makespan mechanisms} and~\ref{informal thm: fairness results} is discussed next.

\subsection{Black-Box Reductions in Mechanism Design} \label{sec:black-box intro}
 A natural approach towards Theorems~\ref{informal thm: makespan mechanisms} and~\ref{informal thm: fairness results} is establishing a polynomial-time reduction from (approximately) optimizing over mechanisms to (approximately) optimizing over algorithms. This approach has already been shown fruitful for welfare maximization. Indeed, recent work establishes such a reduction for welfare maximization in Bayesian settings~\cite{HartlineL10,HartlineKM11,BeiH11}. Roughly speaking, it is shown that black-box access to an $\alpha$-approximation algorithm for an arbitrary welfare maximization problem can be leveraged to obtain an $\alpha$-approximately optimal mechanism for the same welfare maximization problem. 

Unfortunately, recent work has ruled out such black-box reduction for makespan minimization~\cite{ChawlaIL12}. This impossibility result motivated recent work by the authors, where it is shown that adding a linear allocation cost term to the algorithmic objective can bypass this impossibility~\cite{CaiDW12,CaiDW13,CaiDW13b}. Specifically, it is shown in~\cite{CaiDW13b} that finding an ($\alpha$-approximately) optimal mechanism for an arbitrary objective ${\cal O}$ can be reduced to polynomially many black-box calls to an ($\alpha$-approximately) optimal algorithm for the same objective ${\cal O}$, perturbed by an additive allocation cost term.\footnote{Technically, their result holds for maximization objectives, but our work here provides the necessary modifications for minimization objectives, as well as  the important generalization to $(\alpha,\beta)$-approximations discussed below.} This reduction was used to find polynomial-time (approximately) optimal mechanisms for the important objective of revenue~\cite{CaiDW12,CaiDW13} as well as non-linear objectives such as max-min fairness for divisible goods~\cite{CaiDW13b}. 

On the other hand, adding a (possibly negative) allocation cost term may turn an objective~${\cal O}$ that can be (approximately) optimized in polynomial-time into one that cannot be optimized to within any finite factor. This is precisely what happens if we try to carry out the reduction of~\cite{CaiDW13b} for makespan minimization or max-min fairness with indivisible goods. More precisely:
\begin{itemize}
\item To find a polynomial-time $\alpha$-approximately optimal mechanism for makespan minimization, the reduction of~\cite{CaiDW13b} requires a polynomial-time $\alpha$-approximately optimal algorithm for the problem of {\em scheduling unrelated machines with costs}. This is similar to {scheduling unrelated machines}, except that now it also costs $c_{ij}$ (which may be positive, negative, or $0$) to assign job $j$ to machine $i$, and we are looking for an allocation $\vec{x} \in \{0,1\}^{km}$ of jobs to machines that minimizes
\begin{align}M(\vec{x}) + \sum_{ij}c_{ij} x_{ij},\label{eq:funny objective}\end{align}
where $M(\vec{x})=\max_i \sum_j p_{ij} x_{ij}$ is the makespan of the allocation $\vec{x}$. In words, we want to find a schedule that minimizes the sum of makespan and cost of the allocation. However, it is easy to see that it is NP-hard to optimize~\eqref{eq:funny objective} to within any finite factor even when restricted to instances whose optimum is guaranteed to be positive.\footnote{This can be seen via a simple modification of an inapproximability result given in~\cite{LenstraST87}. For the problem of scheduling unrelated machines, they construct instances with integer-valued makespan that is always $\ge 3$ and such that it is NP-hard to decide whether the makespan is $3$ or $\ge 4$. We can modify their instances to scheduling unrelated machines with costs instances by giving each job a cost of $\frac{z-3}{2n+m}$ on every machine for an arbitrary $z>0$. Then the total cost of any feasible solution is exactly $z-3$. So their proof immediately shows that it is NP-hard to determine if these instances have optimal makespan + cost that is $z$ or $\ge 1+z$. Since $z$ was arbitrary, this shows that no finite approximation factor is possible.} 
\item Similarly, to find a polynomial-time $\alpha$-approximately optimal mechanism for max-min fairness, the reduction of~\cite{CaiDW13b} requires a polynomial-time $\alpha$-approximately optimal algorithm for the problem of {\em max-min fairness with allocation costs.} In the notation of the previous bullet, we are looking for an allocation $\vec{x} \in \{0,1\}^{km}$ of jobs to machines that maximizes
\begin{align}{F}(\vec{x}) + \sum_{ij}c_{ij} x_{ij},\label{eq:funny objective fairness}\end{align}
where ${F}(\vec{x}) = \min_i \sum_j p_{ij} x_{ij}$ is the load of the least loaded machine under allocation $\vec{x}$. Again, it is easy to see that it is NP-hard to optimize~\eqref{eq:funny objective fairness} to within any finite factor.\footnote{Indeed, Bezakova and Dani~\cite{BezakovaD05} present a family of max-min fairness instances such that it is {\rm NP}-hard to distinguish between ${\rm OPT}\ge 2$ and ${\rm OPT} \le 1$. To each of these instances add a special machine and a special job such that the processing-time and cost of the special machine for the special job are $2$ and $-1$ respectively, while the processing-time and cost of the special machine for any non-special job or of any non-special machine for the special job are $0$ and $0$ respectively. Also, assign $0$ cost to any non-special machine non-special job pair. In the resulting max-min fairness with costs instances it is ${\rm NP}$-hard to distinguish between ${\rm OPT} \ge 1$ and ${\rm OPT}=0$, hence no finite approximation is possible.}
\end{itemize}

\subsubsection{A single-criterion to bi-criterion approximation-sensitive reduction} 
The inapproximability results identified above motivate us to develop a novel reduction that is more robust to adding the allocation cost term to the mechanism design objective.
We expect that our new reduction will reach a much broader family of mechanism design objectives, and indeed as a corollary of our new reduction we obtain Theorems~\ref{informal thm: makespan mechanisms} and~\ref{informal thm: fairness results} for the important objectives of makespan and max-min fairness, where the reduction of~\cite{CaiDW13b} fails. 

Our new approach is based on the concept of $(\alpha,\beta)$-approximation of objectives modified by allocation costs, defined in Section~\ref{sec:prelims}. Instead of presenting the concept in full generality here, let us describe it in the context of the makespan minimization objective and its resulting scheduling unrelated machines with costs problem. For $\beta \le 1 \le \alpha$, we will say that an allocation $\vec{x} \in \{0,1\}^{km}$ of jobs to machines is an $(\alpha,\beta)$-approximation to a scheduling unrelated machines with costs instance  iff
\begin{align}\beta \cdot M(\vec{x}) + \sum_{ij}c_{ij} x_{ij} \le \alpha \cdot \min_{\vec{x}' \in \{0,1\}^{km}}\left(M(\vec{x}') + \sum_{ij}c_{ij} x'_{ij}\right);\label{eq:alpha-beta}\end{align}
that is, we discount the makespan term in the objective, before comparing to the optimum. 

Setting $\beta=1$ in~\eqref{eq:alpha-beta} recovers the familiar notion of $\alpha$-approximation, but taking $\beta <1$ might make the problem easier. Indeed, we argued earlier that it is NP-hard to achieve any finite $\alpha$ when $\beta=1$. On the other hand, we can exploit the bi-criterion result of Shmoys and Tardos for the generalized assignment problem~\cite{ShmoysT93} to get a polynomial-time algorithm achieving {$\beta={1 \over 2}$} and $\alpha=1$~\cite{ShmoysT93}. The proof of the following proposition is presented in Section~\ref{sec:makespan}.
\begin{proposition} \label{inf thm:alpha beta for makespan with costs} There is a polynomial-time {$(1,{1 \over 2})$}-approximation algorithm for scheduling unrelated machines with (possibly negative) costs.
\end{proposition}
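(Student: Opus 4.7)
My plan is to adapt the classical bi-criterion result of Shmoys and Tardos for the generalized assignment problem \cite{ShmoysT93} to the scheduling-with-costs setting. Let $\vec{x}^*$ denote an optimal integral allocation, with $T^* := M(\vec{x}^*)$ and $C^* := \sum_{ij} c_{ij} x^*_{ij}$, so that $\text{OPT} = T^* + C^*$. Since $T^*$ equals some sum of $p_{ij}$'s, it is enough to enumerate over all $O(km)$ distinct candidate values $T \in \{p_{ij}\} \cup \{0, \ldots\}$ (or, more cleanly, over a polynomial-size set of thresholds known to contain $T^*$), and return the best schedule found.

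For each candidate $T$, I would solve the following linear program, which is the standard Shmoys--Tardos LP:
\begin{align*}
\min \quad & \sum_{ij} c_{ij} x_{ij} \\
\text{s.t.} \quad & \sum_i x_{ij} = 1 \qquad \forall j, \\
& \sum_j p_{ij} x_{ij} \le T \qquad \forall i, \\
& x_{ij} = 0 \quad \text{whenever } p_{ij} > T, \\
& x_{ij} \ge 0.
\end{align*}
When $T = T^*$, the integral optimum $\vec{x}^*$ is feasible (by definition of $T^*$), so the LP optimum is at most $C^*$. Now I would invoke the Shmoys--Tardos rounding procedure, which is a combinatorial (bipartite matching / extreme point) argument that, from any fractional feasible $\vec{x}$, produces an integral $\tilde{\vec{x}}$ with $\sum_i \tilde{x}_{ij} = 1$ for every job, $\sum_j p_{ij} \tilde{x}_{ij} \le 2T$ for every machine, and $\sum_{ij} c_{ij} \tilde{x}_{ij} \le \sum_{ij} c_{ij} x_{ij}$. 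Plugging in the LP-optimal $\vec{x}$ at $T = T^*$ thus yields $M(\tilde{\vec{x}}) \le 2T^*$ and $\sum_{ij} c_{ij} \tilde{x}_{ij} \le C^*$, from which
\[
\tfrac{1}{2} M(\tilde{\vec{x}}) + \sum_{ij} c_{ij} \tilde{x}_{ij} \;\le\; T^* + C^* \;=\; \text{OPT},
\]
which is exactly the desired $(1, \tfrac{1}{2})$-approximation guarantee defined in~\eqref{eq:alpha-beta}.

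The only step that warrants any care is the claim that Shmoys--Tardos rounding is valid with \emph{negative} costs, since in the original application one has $c_{ij} \ge 0$. This is where I expect the main (minor) obstacle to lie. The point to verify is that the rounding is purely a combinatorial procedure on an extreme point of the LP: one constructs a bipartite graph on (fractional) machine ``slots'' and jobs, observes that a perfect matching exists by Hall's theorem, and argues that every perfect matching has objective value no larger than the fractional objective because each slot's cost equals a convex combination of the edge costs it could be matched to. This argument is sign-agnostic, so it carries through verbatim; the sole adjustment from~\cite{ShmoysT93} is replacing ``$\le C$'' with ``$\le$ LP value'' in the cost guarantee. Finally, since the algorithm tries all polynomially many candidate $T$'s, one of them recovers $T^*$ and the best overall solution satisfies the required inequality, completing the proof.
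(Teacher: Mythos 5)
Your reliance on the Shmoys--Tardos rounding is correct, and your observation that the cost guarantee ``rounded cost $\le$ fractional cost'' is sign-agnostic is exactly the point the paper exploits; that is not where the difficulty lies. The genuine gap is in your enumeration step. You propose to enumerate candidate values $T$ for the \emph{optimal makespan} $T^* = M(\vec{x}^*)$ and assert it suffices to try $T \in \{p_{ij}\}$ (or ``a polynomial-size set of thresholds known to contain $T^*$''). But $T^*$ is a \emph{sum} of processing times on the bottleneck machine, not an individual $p_{ij}$, so it need not lie in $\{p_{ij}\}$, and the set of achievable subset-sums is exponentially large. Binary or ternary search is not an easy patch either: the constraint $x_{ij}=0$ whenever $p_{ij}>T$ makes the LP's feasible region jump discontinuously as $T$ crosses each $p_{ij}$, so $T \mapsto T + \mathrm{LP}(T)$ is not a convex function you can search over, and approximating $T^*$ to within $(1+\epsilon)$ would degrade the guarantee to $(1+\epsilon, \tfrac12)$ rather than the claimed $(1,\tfrac12)$.

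The paper avoids this by decoupling the \emph{threshold} from the \emph{makespan bound}. It enumerates only over the threshold parameter $t$ (the largest $p_{ij}$ of any job used), which takes only $km$ distinct values, while treating the makespan bound $T$ as an LP \emph{variable} constrained by $T \ge t$, and minimizing $\sum_{ij} c_{ij} x_{ij} + T$ jointly. Then (Lemma~\ref{lem:LPopt}) choosing $t$ equal to the largest processing time used by the integral optimum $\vec{y}$ makes $\vec{y}$ LP-feasible with objective $M(\vec{y})+C(\vec{y})$, so the best LP value over all $t$ is at most $\mathrm{OPT}$; and the Shmoys--Tardos rounding then delivers makespan $\le T + t \le 2T$ and cost $\le C$, giving $\tfrac12 M(\vec{x}) + C(\vec{x}) \le T + C \le \mathrm{OPT}$. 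You should adopt this joint LP over $(\vec{x}, T)$ parameterized by $t$ rather than fixing $T$.
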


Given such $(\alpha,\beta)$-approximation algorithms for objectives modified by allocation cost, we show how to obtain $({\alpha \over \beta})$-approximately optimal mechanisms, by establishing an appropriate mechanism to algorithm reduction described informally below. 

\begin{informal} \label{inf thm: alpha-beta}There is a generic, polynomial-time reduction from the problem of computing an $({\alpha \over \beta})$-approximately optimal polynomial-time mechanism for some arbitrary objective $\cal O$ (under arbitrary feasibility constraints and allowable bidder types) to the problem of $(\alpha,\beta)$-approximately optimizing that same objective $\cal O$ modified by virtual welfare (under the same constraints and allowable bidder types). Whenever the allowable bidder types are additive in the mechanism design instance,  the algorithmic objective becomes $\cal O$ plus a linear allocation cost term.
\end{informal}
\noindent See Theorem~\ref{thm:objective} in Section~\ref{sec:BMeD and approximation} for a formal statement. The main technical challenge in establishing our reduction is extending the celebrated ``equivalence of separation and optimization''~\cite{GrotschelLS81,KarpP80} to also accommodate $(\alpha,\beta)$-approximations---see Theorem~\ref{thm:LPalphabeta}. Theorem~\ref{informal thm: makespan mechanisms} is obtained by combining  Proposition~\ref{inf thm:alpha beta for makespan with costs} and Theorem~\ref{inf thm: alpha-beta}. 

\medskip To apply our mechanism to algorithm reduction to max-min fairness, we need $(\alpha,\beta)$-approximation algorithms for max-min fairness with allocation costs. Since we have a maximization objective, we are now looking to compute allocations $\vec{x} \in \{0,1\}^{km}$ such that
\begin{align}\beta \cdot {F}(\vec{x}) + \sum_{ij}c_{ij} x_{ij} \ge \alpha \cdot \max_{\vec{x}' \in \{0,1\}^{km}}\left({F}(\vec{x}') + \sum_{ij}c_{ij} x'_{ij}\right),\label{eq:alpha-beta fairness}\end{align}
for some $\beta \ge 1 \ge \alpha$. In this case, we are allowed to boost the fairness part of the objective before comparing to $\alpha \cdot {\rm OPT}$. Again, even though no finite $\alpha$ is achievable in polynomial time when $\beta=1$, we can adapt the algorithms of~\cite{AsadpourS07,BezakovaD05} to obtain finite $(\alpha, \beta)$-approximation algorithms for max-min fairness with costs.

\begin{proposition} \label{inf thm:alpha beta for max min fairness with costs} There is a polynomial-time {$(\frac{1}{2},\tilde{O}(\sqrt{k}))$}-approximation algorithm as well as a $(1,O(m-k+1))$-approximation algorithm for max-min fairness with costs.
\end{proposition}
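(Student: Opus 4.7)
The plan is to adapt the LP-based algorithms of Bez\'akov\'a-Dani and Asadpour-Saberi to the cost-augmented objective. For both bounds the template is: (i) formulate an LP relaxation that maximizes $T + \sum_{ij} c_{ij} x_{ij}$, where $T$ is an auxiliary variable lower-bounding every machine's load $\sum_j p_{ij} x_{ij}$, together with the usual assignment (or configuration) constraints; (ii) observe that the LP value $T^* + C^*$ (with $C^* = \sum c_{ij} x^*_{ij}$) upper-bounds $\mathrm{OPT}$ since every integer assignment is LP-feasible; (iii) round $\vec{x}^*$ to an integer assignment while tracking both the fairness and the cost simultaneously.

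For the $(1, O(m-k+1))$-approximation I would use the standard assignment LP. Bez\'akov\'a-Dani's rounding produces an integer $\vec{x}$ with $F(\vec{x}) \ge T^*/(m-k+1)$ by exploiting a bipartite matching structure on the support of $\vec{x}^*$. The key observation is that wherever their procedure has freedom to choose which fractional edges to keep (e.g., inside alternating cycles of the support graph), the choice can be made to maximize cost via a weighted matching / min-cost flow subroutine. Because cost is linear in $\vec{x}$, this modification leaves their load-preservation analysis intact while guaranteeing $\sum c_{ij} x_{ij} \ge C^*$. Combining yields $(m-k+1) \cdot F(\vec{x}) + \sum c_{ij} x_{ij} \ge T^* + C^* \ge \mathrm{OPT}$, as required.

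For the $(\frac{1}{2}, \tilde{O}(\sqrt{k}))$-approximation I would use the configuration LP underlying Asadpour-Saberi. Their iterative rounding produces an integer $\vec{x}_F$ with $F(\vec{x}_F) \ge T^*/\tilde{O}(\sqrt{k})$, and when implemented so as to respect the LP marginals $\Pr[x_{ij}=1] = x^*_{ij}$ it also satisfies $\E[\sum c_{ij} x_{ij}] = C^*$; so in expectation $\tilde{O}(\sqrt{k}) F(\vec{x}_F) + \sum c_{ij} x_{ij} \ge T^* + C^* \ge \mathrm{OPT}$. To convert this into a deterministic polynomial-time guarantee I would compare $\vec{x}_F$ with the cost-optimal (but possibly very unfair) integer assignment $\vec{x}_C$ that sends each job $j$ to $\arg\max_i c_{ij}$, dropping the job if all its costs are negative, so that $\sum c_{ij} x^C_{ij} \ge C^*$; outputting whichever of $\vec{x}_F$, $\vec{x}_C$, and the empty assignment maximizes $\tilde{O}(\sqrt{k}) F + \sum c x$ gives, via an averaging argument on the two objective values, a lower bound of $(T^* + C^*)/2 \ge \mathrm{OPT}/2$. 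The factor of $\frac{1}{2}$ absorbs the loss from this best-of-candidates derandomization.

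The main obstacle is the AS adaptation: Bez\'akov\'a-Dani's matching-based rounding cleanly slots in a cost-maximizing subroutine at each local choice, so no $\alpha$ loss is incurred, whereas AS's iterative rounding does not obviously admit a cost-aware local move and must instead be analyzed through its marginal-preservation property and combined with an auxiliary cost-optimal candidate, which is what forces the $\frac{1}{2}$ loss. The delicate point in closing the argument is verifying that AS's specific iterative procedure can in fact be implemented (or analyzed) with marginals matching the LP fractional solution; this is true for essentially all LP-based randomized rounding schemes, but needs to be checked against the particular combinatorial steps of their $\tilde{O}(\sqrt{k})$ algorithm.
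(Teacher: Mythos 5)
Your high-level template matches the paper's (LP relaxation with cost in the objective, then round tracking both fairness and cost, with a fallback to the cost-greedy assignment), but there are two concrete gaps, both in the Asadpour--Saberi half.

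First, you assert that the AS rounding can be implemented so that $\Pr[y_{ij}=1] = x^*_{ij}$ exactly, which is what lets you claim $\E[\sum c_{ij} y_{ij}] = C^*$. This is false: the AS procedure has a conflict-resolution step in which a job selected by both the matching stage and the small-configuration stage is awarded only once, and jobs selected by nobody are thrown away. The consequence is only a one-sided bound $\E[y_{ij}] \le x^*_{ij}$, so the expected cost can be strictly below $C^*$ when positive-cost edges exist. You flag this as "the delicate point \ldots which needs to be checked," but it is precisely the point at which the argument you sketched breaks.

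Second, the "averaging argument" that compares $\vec{x}_F$ to $\vec{x}_C$ is not an expectation bound at all in the paper; it is a deterministic sign decomposition. The paper splits $C(\vec{x}) = C^+(\vec{x}) + C^-(\vec{x})$ into contributions from positive-cost and negative-cost edges, and observes that the inequality $\E[y_{ij}] \le x^*_{ij}$ can only \emph{help} on the negative part (since shrinking a weight on a negative-cost edge raises the total), while the positive part $C^+(\vec{x})$ is dominated by the cost of your $\vec{x}_C$. Then for any $\gamma$, either $\beta F(\vec{x}) + C^-(\vec{x}) \ge \gamma\bigl(\beta F(\vec{x}) + C(\vec{x})\bigr)$, in which case $\vec{y}$ works, or $C^+(\vec{x}) \ge (1-\gamma)\bigl(\beta F(\vec{x}) + C(\vec{x})\bigr)$, in which case $\vec{x}_C$ works. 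This is the content of the paper's Theorem~\ref{thm:useful}, stated and proved separately precisely because it is the mechanism that converts a marginal-shrinking rounding into a bi-criterion guarantee; your proposal needs this lemma and does not supply it. (A smaller technical point: the configuration LP cannot have $T$ as a variable because the set of valid configurations depends on $T$, so the paper parameterizes by $T$ and discretizes over powers of two, losing a factor of 2 already at the fractional level; your "$T^* + C^* \ge \mathrm{OPT}$" elides this.)

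For the Bez\'akov\'a--Dani part your route is genuinely different: you propose keeping the assignment LP and modifying the rounding so that cost is maximized at each local choice (e.g., when resolving alternating cycles in the support), claiming this preserves both the $(m-k+1)$ fairness loss and $C(\vec{x}) \ge C^*$. The paper instead bypasses LP rounding entirely and, for each guessed threshold $T$, builds a bipartite graph with $m-k$ dummy machines and real-machine edges only when $p_{ij} \ge T$, then takes a single maximum-weight matching; correctness follows from an exchange argument on the optimal integral assignment rather than from any rounding analysis. Your route may be made to work, but it requires verifying that cost-greedy local moves never interfere with the load-preservation invariant of the BD rounding; the paper's matching reformulation avoids this interaction entirely and gives the $(1,\,m-k+1)$ bound with no slack, so it is the safer path.
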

 \noindent The proof of Proposition~\ref{inf thm:alpha beta for max min fairness with costs} is given in Section~\ref{sec:makespan}. It extends the algorithms of~\cite{AsadpourS07,BezakovaD05} to the presence of allocation costs, showing that the natural linear programming relaxation can be rounded so that the cost term does not increase, while the fairness term decreases by a factor of $\tilde{O}(\sqrt{k})$ or $O(m-k+1)$ respectively. Combining Proposition~\ref{inf thm:alpha beta for max min fairness with costs} with Theorem~\ref{inf thm: alpha-beta} gives Theorem~\ref{informal thm: fairness results}.
\subsection{Related Work}\label{sec:relatedwork}

Makespan minimization and max-min fairness have been extensively studied in both algorithm and mechanism design settings and numerous different models. It would be impossible to survey all related literature, but we highlight some results most related to ours below.

\paragraph{Makespan.} A long line of work following the seminal paper of Nisan and Ronen~\cite{NisanR99} addresses the question of ``how much better can the optimal makespan be when compared to the optimal makespan obtained by a truthful mechanism?'' The same paper showed that the answer is at most a factor of $k$, and also that the answer is at least $2$ for deterministic, dominant strategy truthful, prior-free mechanisms.\footnote{A mechanism is called ``prior-free'' if it does not make any distributional assumption about the processing times of the machines.} It was later shown that the answer is at least $1+\phi$ (the golden ratio) as $k\rightarrow \infty$~\cite{ChristodoulouKV07,KoutsoupiasV07}, and that the answer is in fact $k$ for the restricted class of anonymous mechanisms~\cite{AshlagiDL12}. It is conjectured that the answer is indeed $k$ for all deterministic, prior-free mechanisms. Similar (but slightly different) bounds are known for the same question with respect to randomized prior-free mechanisms~\cite{NisanR99,MualemS07,ChristodoulouKK07,LuY08a,LuY08b,Lu09}. More recently, the same question has been studied for prior-independent (rather than prior-free) mechanisms~\cite{ChawlaHMS13}. Prior-independent mechanisms make distributional assumptions about the processing times, but do not use the specifics of the distributions, just their properties. In particular, when the processing times are drawn from a machine-symmetric product distribution with Monotone Hazard Rate (MHR) marginals, Chawla, Hartline, Malec and Sivan show that the answer is at most a factor of $O(m/k)$, and at most a factor of $O(\sqrt{\log k})$, when all processing times are i.i.d.~\cite{ChawlaHMS13}. Without the MHR assumption, they obtain bicriterion results.\footnote{Specifically, they obtain the same bounds with respect to a different benchmark, namely the optimal expected makespan using only a fraction of the $k$ machines.} The question has also been studied in \emph{related machines} settings, where each job has a size (public) and each machine has a speed (private). Due to the single-dimensional nature of the problem, the answer is now exactly a factor of $1$~\cite{ArcherT01}. Thus, focus has  shifted towards the same question for computationally efficient truthful mechanisms, and constant-factor approximations~\cite{ArcherT01} and PTAS's~\cite{DhangwatnotaiDDR08,ChristodoulouK10} are known. 



The focus of our work is different than most previous works, in that we do not study the gap between the algorithmic and the mechanismic optimum. Instead, our focus is computational, aiming for (approximately) optimal and computationally efficient mechanisms, regardless of how their performance compares to the performance of optimal algorithms. Still, prior work has already made some progress on this problem: we know that the VCG mechanism is a computationally efficient $k$-approximation~\cite{NisanR99}, and that the mechanisms of~\cite{ChawlaHMS13} provide an approximation ratio of $O(m/k)$ when the prior is a machine-symmetric product distribution with MHR marginals, and a ratio of $O(\sqrt{\log k})$ if additionally the marginals are i.i.d. In addition, the NP-hardness result of~\cite{LenstraST87} implies that our problem is also NP-hard to approximate better than $3/2-\epsilon$, for any $\epsilon>0$. On this front, our work greatly improves  the state-of-the-art as we give the first constant-factor approximations for unrestricted settings. (The guarantees of~\cite{ChawlaHMS13} are constant in settings where $m = O(k)$ or $k = O(1)$ and the prior is a machine-symmetric product distribution with MHR marginals.) Indeed, our approximation guarantee (factor of $2$) matches that of the best polynomial-time algorithm for makespan minimization~\cite{LenstraST87}.

\paragraph{Max-Min Fairness.} Fair division has been studied extensively in Mathematics and Economics for over 60 years; see e.g.~\cite{Steinhaus48,Knaster46,BramsT96}. Several different flavors of the problem have been studied: divisible or indivisible goods (in our terminology: jobs), with or without monetary transfers to the players (in our terminology: machines), and several different notions of fairness. For the allocation of indivisible goods, virtually all mechanisms proposed in the literature do not optimize max-min fairness, aiming instead at other fairness guarantees (such as envy-freeness or proportionality), very commonly trade off value from received items by the players with monetary transfers to or from the players in the fairness guarantee, and are often susceptible to strategic manipulations. For max-min fairness, Bezakova and Dani~\cite{BezakovaD05} propose a prior-free mechanism for $2$ players, which guarantees half of the optimal fairness, albeit under restrictions on the strategies that the players can use. They also show that max-min fairness cannot be optimally implemented truthfully in prior-free settings. In fact, Mualem and Schapira show that no truthful deterministic prior-free mechanism can obtain any approximation to the optimal max-min fairness~\cite{MualemS07}. 

Perhaps the poor state-ot-the-art on mechanisms for max-min fairness owes to the fact that, already as an algorithmic problem (i.e. even when the players' true values for the items are assumed exactly known), max-min fairness has proven quite challenging, indeed significantly more so than makespan minimization. In particular, all state-of-the-art algorithms only provide polynomial approximation guarantees~\cite{BezakovaD05,AsadpourS07, KhotP07, BateniCG09, ChakrabartyCK09}, while the best known computational hardness result is just a factor of $2$~\cite{BezakovaD05}. Specifically, the guarantees lying on the Pareto boundary of what is achievable in polynomial time for the unrestricted problem are approximation factors of $m-k+1$ (\cite{BezakovaD05}), $\tilde{O}(\sqrt{k})$ (\cite{AsadpourS07}), and $O(m^{1/\epsilon})$ (\cite{BateniCG09, ChakrabartyCK09}). Due to this, a \emph{restricted} version of the problem is often studied, where every job has a fixed processing time $p_j$, and every machine is either capable or incapable of processing each job (that is, $p_{ij} \in \{p_j,0\}$). For the restricted version, the state of the art is an $O(\log \log k /\log \log \log k)$-approximation due to Bansal and Sviridenko~\cite{BansalS06}. Asadpour, Feige, and Saberi~\cite{AsadpourFS08} also proved that the integrality gap of the configuration LP used in~\cite{BansalS06} has an integrality gap of $5$, and provided a heuristic (but not polynomial-time) rounding algorithm. $O(1)$-approximations were obtained by Bateni, Charikar and Guruswami and independently by Chakrabarty, Chuzhoy and Khanna by further restricting the graph structure of which machines can process which jobs (i.e. by limiting the number of machines that can process each specific job or requiring that this graph be acyclic)~\cite{BateniCG09, ChakrabartyCK09}.

In view of this literature, our results provide the first approximately optimal mechanisms for max-min fairness. Indeed, our approximation factors match those of approximation algorithms on the Pareto boundary of what is achievable in polynomial time~\cite{BezakovaD05,AsadpourS07}. So, in particular, our mechanisms cannot be strictly improved without progress on the algorithmic front. {Obtaining these mechanisms is already quite involved (see Section~\ref{sec:makespan} and Appendix~\ref{app:makespanfairness}), and we leave open for future investigation the problem of matching the bounds obtained in~\cite{BateniCG09,ChakrabartyCK09} for the general problem,~\cite{AsadpourFS08, BansalS06} for the restricted problem, and~\cite{BateniCG09, ChakrabartyCK09} for the futher restricted version.}

\paragraph{Black-Box Reductions in Mechanism Design, and Inapproximability.} We have already reviewed several results studying black-box reductions from mechanism to algorithm design. In fact, the classical VCG mechanism can already be viewed as a mechanism to algorithm reduction for the important objective of welfare, and Myerson's celebrated mechanism~\cite{Myerson81} as a mechanism to algorithm design reduction, indeed a special case of that in~\cite{CaiDW13b}, for revenue. We also identified in Section~\ref{sec:black-box intro} mechanism design objectives for which the reduction results in inapproximable algorithmic problems. Indeed, there is a precedence of this phenomenon explored in~\cite{HaghpanahIMM11, BateniHSZ13} for the simpler problem of revenue maximization in {\em single-dimensional} settings. In the settings studied by these papers, the algorithmic problem resulting from the reduction (namely virtual welfare maximization) is highly inapproximable. Nevertheless, they side-step this intractability by exploiting the fact that the algorithmic problem need only be solved well in an average-case sense (in particular, in expectation over the bidder's virtual values) rather than on an instance-to-instance basis, as well as the fact that, in single-dimensional settings, the virtual values have very well-understood structure. This allows for the design of polynomial-time algorithms that obtain a reasonable approximation guarantee on average, while possibly performing poorly on some instances. While this approach is fruitful in single-dimensional settings and the revenue objective, we expect that considerably more effort is required in order to apply it to multidimensional settings or non-revenue objectives, due to the lack of structural understanding of virtual values in these settings. Indeed, the revenue maximization hardness of approximation result of~\cite{CaiDW13b} displays that beyond simple single-dimensional settings, even average-case approximation algorithms for virtual welfare can be computationally infeasible.

\notshow{\subsubsection{Black Box Reductions in Bayesian Mechanism Design}\label{sec:relatedBlackBox}

We have already reviewed prior work on black box reductions from mechanism to algorithm design in Bayesian settings, describing results for maximizing welfare~\cite{HartlineL10,HartlineKM11,BeiH11,ChawlaIL12}, revenue~\cite{CaiDW12,CaiDW12b,CaiDW13,CaiDW13b}, and other objectives~\cite{ChawlaIL12,CaiDW13b}. The results closer to our work are those in~\cite{CaiDW13,CaiDW13b}, which are obtained by establishing approximation preserving versions of the celebrated ``equivalence of separation and optimization''~\cite{GrotschelLS81,KarpP80}. Within this context, our main contribution is a further extension of the equivalence of separation and optimization to accommodate $(\alpha,\beta)$-approximations (see Theorem~\ref{thm:LPalphabeta}), which immediately allows our mechanism to algorithm reduction framework (Theorem~\ref{thm:objective}) to accommodate $(\alpha,\beta)$-approximations to virtual welfare modified objectives, instead of only traditional approximation algorithms as in~\cite{CaiDW13b}. Many important problems (for instance, minimizing makespan or maximizing max-min fairness) are NP-hard to approximate within any finite factor once virtual welfare is added to the objective, but $(O(1),O(1))$-approximations may be tractable (as indeed is the case for makespan). Our results on makespan are important in their own right, but also serve to provide evidence that this extended framework may allow for the development of interesting approximately optimal mechanisms for problems that were previously intractable.}

\section{Preliminaries} \label{sec:prelims}

Our formal setup is a special case of that in~\cite{CaiDW13b}. We repeat it here for completeness. Throughout the preliminaries and entire paper, we state our definitions and results when the goal is to minimize an objective (such as makespan). Everything extends to maximization objectives (such as fairness) with the obvious changes (switching $\leq$ to $\geq$, $\min$ to $\max$, etc.). We often note~the~required~changes.

\paragraph{Mechanism Design Setting.} The mechanism designer has a set of feasible outcomes $\mathcal{F}$ to choose from. Each bidder participating in the mechanism may have several possible {\em types}. A bidder's {type} determines a value for each possible outcome in $\mathcal{F}$. Specifically, a bidder's type induces a function $t: \mathcal{F} \rightarrow\mathbb{R}$. $T_i$ denotes the set of all possible types of bidder $i$, which we assume to be finite. Bidder $i$'s type is drawn from a distribution $\mathcal{D}_i$ over $T_i$, which is known to the designer and all other bidders. Bidder $i$ knows his own type. Bidders are {\em quasi-linear} and {\em risk-neutral}. That is, the utility of a bidder of type $t$ for a randomized outcome (distribution over outcomes) $X \in \Delta(\mathcal{F})$, when he is charged (a possibly random price with expectation) $p$, is $\mathbb{E}_{x \leftarrow X}[t(x)] - p$. Therefore, we may extend $t$ to take as input distributions over outcomes as well, with $t(X) = \mathbb{E}_{x \leftarrow X}[t(x)]$. A {\em type profile} $\vec{t}=(t_1,\ldots,t_k)$ is a collection of types for each bidder. We assume that the types of the bidders are independent so that $\mathcal{D} = \times_i \mathcal{D}_i$ is the distribution over type profiles. 

\paragraph{Mechanisms.} A (direct) mechanism consists of two functions, a (possibly randomized) allocation rule and a (possibly randomized) price rule, and we allow these rules to be correlated. The allocation rule takes as input a type profile $\vec{t}$ and (possibly randomly) outputs an allocation $A(\vec{t}) \in \mathcal{F}$. The price rule takes as input a profile $\vec{t}$ and (possibly randomly) outputs a price vector $P(\vec{t})$. A direct mechanism invites bidders to report their type to the mechanism, and the bidders may or may not  report their type truthfully. When the profile $\vec{t}$ is reported to the mechanism $M = (A,P)$, the (possibly random) allocation $A(\vec{t})$ is selected and each bidder $i$ is charged the (possibly random) price $P_i(\vec{t})$. In the definitions below, we discuss the \emph{interim allocation rule} of a mechanism. This is a function that takes as input a bidder $i$ and a type $t_i \in T_i$ and outputs the distribution of allocations that bidder $i$ sees when reporting type $t_i$ over the randomness of the mechanism and the other bidders' types, if they report truthfully. Specifically, if the interim allocation rule of $M=(A,P)$ is $X$, then $X_i(t_i)$ is a distribution satisfying $$\Pr[x\leftarrow X_i(t_i) ] = \mathbb{E}_{\vec{t}_{-i} \leftarrow \mathcal{D}_{-i}}\left[\Pr[A(t_i;\vec{t}_{-i}) = x]\right],$$
where $\vec{t}_{-i}$ is the vector of types of all bidders but bidder $i$ in $\vec{t}$, and ${\cal D}_{-i}$ is the distribution~of~$\vec{t}_{-i}$. {Similarly, the {\em interim price rule} of the mechanism maps some bidder $i$ and type $t_i \in T_i$ of that bidder to the expected price bidder $i$ sees when reporting $t_i$, i.e. $p_i(t_i)=\mathbb{E}\left[\mathbb{E}_{\vec{t}_{-i} \leftarrow \mathcal{D}_{-i}}[P_i(\vec{t})]\right].$}

With these definitions, a mechanism is said to be {\em Bayesian Incentive Compatible (BIC)} if it is in every bidder's best interest to truthfully report their type to the mechanism, if all other bidders truthfully report their type. That is, if {$X,p$ are the interim allocation and price rules} of the mechanism, then $t_i(X_i(t_i)) - p_i(t_i) \geq t_i(X_i(t'_i)) - p_i(t'_i)$ for all $i$ and $t_i,t'_i \in T_i$. \mattnote{A mechanism is $\epsilon$-BIC if each bidder can gain at most an additive $\epsilon$ by misreporting their type. That is, $t_i(X_i(t_i)) - p_i(t_i) \geq t_i(X_i(t'_i)) - p_i(t'_i) - \epsilon$. In order to make this additive guarantee meaningful, we assume that all types have been normalized so that $t_i(X) \in [0,1]$ for all $X\in \mathcal{F}$.} A mechanism is said to be {\em Individually Rational (IR)} if it is in every bidder's best interest to participate in the mechanism, no matter their type. That is, $t_i(X_i(t_i)) - p_i(t_i) \geq 0$, for all $i$, $t_i \in T_i$. \mattnote{A mechanism is said to be \emph{ex-post IR} if furthermore every bidder receives non-negative utility by telling the truth for every realization of the other agents' types and the randomness in the mechanism (and not just in expectation).}

\paragraph{Goal of the designer.} The designer's goal is to design a BIC and IR mechanism that minimizes (or maximizes) the expected value of some objective function, $\obj$, when encountering a bidder profile sampled from $\mathcal{D}$ and the bidders report truthfully, which is in their interest to do if the mechanism is BIC. For simplicity of notation, we restrict our attention in this paper to objective functions $\obj$ that take as input a type profile $\vec{t}$ and a randomized outcome $X \in \Delta(\mathcal{F})$, and output a quantity $\obj(\vec{t},X) \in \mathbb{R}_+$. \mattnote{We say that $\mathcal{O}$ is $b$-bounded if whenever $t_i(X) \in [0,1]$ for all $i$, $\obj(\vec{t},X) \in [0,b]$. Because our results accommodate an additive $\epsilon$ error, we will restrict attention only to $\mathcal{O}$ that are $\poly(k)$-bounded. Note that makespan and fairness are both $1$-bounded, and welfare is $k$-bounded, so this is not a restrictive assumption.}

 We also restrict attention to objective functions $\obj$ such that $\obj(\vec{t},X) = \mathbb{E}_{x \leftarrow X}[\obj(\vec{t},x)]$. In other words, $\obj$ is really just a function of types and outcomes in $\mathcal{F}$, and is extended to randomized outcomes in $\Delta({\cal F})$ by taking expectations (makespan and fairness are examples of such an objective).\footnote{This is a special case of the setting considered in~\cite{CaiDW13b}. We restrict our attention to this case just to simplify notation, because more generality is not needed for makespan and fairness. However, Theorem~\ref{thm:objective} applies to objectives that also depend on the prices charged, as well as objectives that are sensitive to randomness in non-linear ways (but still must be concave in distributions over outcomes/prices for maximization objectives and convex for minimization objectives). Note that ``deterministic objectives'' such as makespan, fairness, welfare, and revenue that are extended to randomized outcomes by taking expectation behave linearly with respect to randomness and are therefore both concave and convex in the sense that is relevant for the theorem.}

\paragraph{Formal Problem Statements.} We define the computational problems \textbf{B}ayesian \textbf{Me}chanism \textbf{D}esign (BMeD) and \textbf{G}eneralized \textbf{O}bjective \textbf{O}ptimization \textbf{P}roblem (GOOP), which played a central role in~\cite{CaiDW13b} and will also play a central role in this paper.\footnote{These problems were named MDMDP (Multi-Dimensional Mechanism Design Problem) and 2-SADP (Solve-Any Differences Problem) in~\cite{CaiDW13b}. We change their names here for a more accurate description of the problems.} We state both BMeD and GOOP for minimization objectives, but both problems are also well-defined for maximization objectives with the obvious modifications, discussed below. In the following definitions, we denote by ${\cal V}$ a set of types (i.e. functions mapping ${\cal F}$ to $\mathbb{R}$), and by ${\cal V}^{\times}$ the closure of ${\cal V}$ under~addition~and~scalar~multiplication.

\smallskip \textbf{BMeD($\mathcal{F}$,$\mathcal{V}$,$\obj$):} {\sc Input:} For each bidder $i \in [k]$, a finite set of types $T_i \subseteq {\cal V}$ and a distribution ${\cal D}_i$ over $T_i$. {\sc Goal:} Find a feasible (outputs an outcome in $\mathcal{F}$ with probability $1$), BIC, and IR mechanism~$M$ that minimizes $\obj$ in expectation, when $k$ bidders with types sampled from $\mathcal{D} = \times_i \mathcal{D}_i$ play $M$ truthfully, where the minimization is with respect to all feasible, BIC, and IR mechanisms. $M$ is said to be an $\alpha$-approximation to BMeD if the expected value of $\obj$ is at most $\alpha$ times the optimal one.\footnote{\mattnote{By ``find a mechanism'' we formally mean ``output a computational device that will take as input a profile of types $\vec{t}$ and output (possibly randomly) an outcome in $\mathcal{F}$ and a price to charge each agent.'' The runtime of this device is of course relevant, and will be addressed in our formal theorem statements.}}

\textbf{GOOP($\mathcal{F}$, $\mathcal{V}$, $\mathcal{O}$):}  {\sc Input:}  $f \in \mathcal{V}^{\times}$, $g_i \in \mathcal{V}$ ($1\leq i \leq k$), and a multiplier $w \geq 0$. {\sc Goal:}  find a feasible (possibly randomized) outcome $X \in \Delta(\mathcal{F})$ such that:
\begin{align*}
\left(w \cdot \obj((g_1,\ldots,g_k),X)\right) + f(X) = \min_{X' \in \mathcal{F}} \left\{\left(w \cdot \obj((g_1,\ldots,g_k),X')\right) + f(X')\right\}.
\end{align*}

We define a bi-criterion notion of approximation for GOOP, saying that $X$ is an {\em $(\alpha,\beta)$-approximation} to GOOP for some $\beta \le 1 \le \alpha$ iff:
\begin{align}
\beta \left(w \cdot \obj((g_1,\ldots,g_k),X)\right) + f(X) \leq \alpha \left( \min_{X' \in \mathcal{F}} \left\{\left(w \cdot \obj((g_1,\ldots,g_k),X')\right) + f(X')\right\}\right). \label{eq:alpha-beta approximation}
\end{align}
An $X$ satisfying~\eqref{eq:alpha-beta approximation} with $\beta=1$ is an {\em $\alpha$-approximation} to GOOP, which is the familiar notion of approximation for minimization problems. If $\beta <1$, our task becomes easier as the contribution of the ${\cal O}$ part to the objective we are looking to minimize is discounted. 

Within the context of our reduction from BMeD to GOOP, as well as that of previous reductions, one should interpret the function $f$ in the GOOP instance as representing virtual welfare, and each $g_i$ as representing the type reported by agent $i$.

\begin{remark} For maximization objectives $\cal O$, we replace $\min$ by $\max$ in the definition of GOOP, and we invert the direction of the inequality in~\eqref{eq:alpha-beta approximation}. Moreover, the feasible range of parameters for an $(\alpha,\beta)$-approximation are now $\alpha \le 1 \le \beta$.
\end{remark}

\paragraph{Scheduling Unrelated Machines with Costs.} There are $k$ machines and $m$ indivisible jobs. Each machine~$i$ can process job $j$ in time $p_{ij} \ge 0$. Additionally, processing job $j$ on machine $i$ costs $c_{ij}$ units of currency, where $c_{ij}$ is unrestricted and in particular could be negative. An assignment of jobs to machines is a $km$-dimensional vector $\vec{x}$ such that $x_{ij} \in \{0,1\}$, for all $i$ and $j$, where job $j$ is assigned to machine $i$ iff $x_{ij} = 1$.  We denote by $M(\vec{x}) = \max_{i} \{\sum_j x_{ij} p_{ij}\}$ the makespan of an assignment, by $F(\vec{x}) = \min_i \{\sum_j x_{ij}p_{ij}\}$ the fairness of an assignment, and by $C(\vec{x}) = \sum_i \sum_j x_{ij} c_{ij}$ the cost of an assignment. In the makespan minimization problem, an assignment is valid iff $\sum_i x_{ij} = 1$ for all jobs $j$, while in the fairness maximization problem, an assignment is valid iff $\sum_i x_{ij} \le 1$. To avoid carrying these constraints around, we use the convention that $M(\vec{x}) = \infty$, if $\sum_i x_{ij} \neq 1$ for some $j$, and $F(\vec{x}) = -\infty$, if $\sum_i x_{ij} > 1$ for some $j$. It will also be useful for analysis purposes to consider fractional assignments of jobs, which relax the constraints $x_{ij} \in \{0,1\}$ to $x_{ij} \in [0,1]$. This corresponds to assigning an $x_{ij}$-fraction of job $j$ to machine $i$ for all pairs $(i,j)$. Notice that $M(\vec{x})$, $F(\vec{x})$ and $C(\vec{x})$ are still well-defined for fractional assignments.

The goal of makespan minimization with costs is to find an assignment $\vec{x} \in \{0,1\}^{km}$ satisfying $M(\vec{x}) + C(\vec{x}) = \min_{\vec{x}' \in \{0,1\}^{km}}\{M(\vec{x}') + C(\vec{x}')\}$. In the language of GOOP, this is GOOP($\{0,1\}^{km}$, $\{$additive functions with non-negative coefficients$\}$, $M$). \mattnote{The processing times $\vec{p}_i$ correspond to the functions $g_i$ that are input to GOOP, and the costs $\vec{c}$ corresponds to the function $f$.} For $\alpha \ge 1 \ge \beta$, an $(\alpha,\beta)$-approximation for this problem is an assignment $\vec{x} \in \{0,1\}^{km}$ with $\beta M(\vec{x}) + C(\vec{x}) \leq \alpha \min_{\vec{x}' \in \{0,1\}^{km}}\{M(\vec{x}') + C(\vec{x}')\}$. 

The goal of fairness maximization with costs is to find an assignment $\vec{x} \in \{0,1\}^{km}$ satisfying $F(\vec{x}) + C(\vec{x}) = \max_{\vec{x}' \in \{0,1\}^{km}}\{F(\vec{x}') + C(\vec{x}')\}$. In the language of GOOP, this is GOOP($\{0,1\}^{km}$, $\{$additive functions with non-negative coefficients$\}$, $F$). \mattnote{Again, the processing times $\vec{p}_i$ correspond to the functions $g_i$ that are input to GOOP, and the costs $\vec{c}$ corresponds to the function $f$.} For $\alpha \le 1 \le \beta$, an $(\alpha,\beta)$-approximation for this problem is an assignment $\vec{x} \in \{0,1\}^{km}$ with $\beta F(\vec{x}) + C(\vec{x}) \geq \alpha \max_{\vec{x}' \in \{0,1\}^{km}}\{F(\vec{x}') + C(\vec{x}')\}$. 

Note that in the case of maximizing fairness, sometimes the jobs are thought of as gifts and the machines are thought of as children (and the problem is called the Santa Claus problem). In this case, it makes sense to think of the children as having \mattnote{value for the gifts (and preferring more value to less value)} instead of the machines having \mattnote{processing time for jobs (and preferring less processing time to more)}. For ease of exposition, we will stick to the jobs/machines interpretation, although our results extend to the gifts/children interpretation as well.

\paragraph{Implicit Forms.} For any feasible mechanism $M = (A, P)$ for a $BMeD(\mathcal{F},\mathcal{V},\mathcal{O})$ instance, we define (as in~\cite{CaiDW13b}) the three components of its implicit form $\vec{\pi}_I^M = (O^M, \vec{\pi}^M, \vec{p}^M)$ as follows. 
\begin{itemize}
\item $O^M = \mathbb{E}_{\vec{t} \leftarrow \mathcal{D}}[O(\vec{t}, A(\vec{t}) )]$. That is, $O^M$ is the expected value of $\mathcal{O}$ when agents sampled from $\mathcal{D}$ play mechanism $M$.
\item For all agents $i$ and types $t, t' \in T_i$, $\pi^M_i(t, t') = \mathbb{E}_{\vec{t}_{-i} \leftarrow \mathcal{D}_{-i}}[t( A ( t',\vec{t}_{-i} ) )]$. That is, $\pi^M_i(t, t')$ is the expected value of agent $i$ with real type $t$ from reporting type $t'$ to the mechanism $M$. The expectation is taken over any randomness in $M$ as well as the other agents' types, assuming they are sampled from $\mathcal{D}_{-i}$. 
\item For all agents $i$ and types $t \in T_i$, $p^M_i(t) = \mathbb{E}_{\vec{t}_{-i} \leftarrow \mathcal{D}_{-i}} [P_i(t, \vec{t}_{-i})]$. That is, $p^M_i(t)$ is the expected price paid by agent $i$ when reporting type $t$ to the mechanism $M$. The expectation is taken over any randomness in $M$ as well as the other agents' types, assuming they are sampled from $\mathcal{D}_{-i}$. 
\end{itemize}

We can also talk about implicit forms separately from mechanisms, and call any $(1+\sum_i |T_i|^2 + \sum_i |T_i|)$-dimensional vector an implicit form. We say that an implicit form $\vec{\pi}_I = (O, \vec{\pi}, \vec{p})$ is \emph{feasible} for a specific $BMeD(\mathcal{F},\mathcal{V},\mathcal{O})$ instance if there exists a feasible mechanism $M$ for that instance such that $O \geq O^M$, $\vec{\pi} = \vec{\pi}^M$, and $\vec{p} = \vec{p}^M$. We say that the mechanism $M$ \emph{implements} the implicit form $\vec{\pi}_I$ when these inequalities hold. For maximization objectives, we instead constrain $O \leq O^M$.\footnote{The relaxations $O \geq O^M$, for minimization, and $O \leq O^M$, for maximization objectives, instead of $O = O^M$, is required for technical reasons.} We denote by $F(\mathcal{F},\mathcal{D},\mathcal{O})$ the set of all feasible implicit forms (with respect to a specific instance of $BMeD(\mathcal{F},\mathcal{V},\mathcal{O})$). It is shown in~\cite{CaiDW13b} that $F(\mathcal{F},\mathcal{D},\mathcal{O})$ is a convex set.

\notshow{
\paragraph{Representation Questions.} Notice that both MDMDP and 2-SADP are parameterized by ${\cal F}$ and ${\cal V}$. As we aim to leave these sets unrestricted, we assume that their elements are represented in a computationally meaningful way. That is, we assume that elements of $\mathcal{F}$ can be indexed using $O(\log |{\cal F}|)$ bits {and are input to functions that evaluate them via this representation. We assume that elements $f \in \mathcal{V}$ are input to functions either via a description of a Turing machine that evaluates $f$ (and the size of the description counts towards the size of the input), or as a black box. All our reductions presented in this paper apply whether or not the input functions are given explicitly or as a black box.\footnote{{When we claim that we can solve problem $\mathcal{P}_1$ given black-box access to a solution to problem $\mathcal{P}_2$, we mean that the functions input to problem $\mathcal{P}_1$ may be given either explicitly or as a black box, and that they are input in the same form to $\mathcal{P}_2$.}}} Whenever we evaluate the running time of an algorithm for either MDMDP or 2-SADP, or of a reduction from one problem to another, we count the time spent in evaluating a function input to these problems as one.}

\section{BMeD and Approximation} \label{sec:BMeD and approximation}
The goal of this section is to prove the following theorem, generalizing Theorem~4 of~\cite{CaiDW13b} to accommodate both $(\alpha,\beta)$-approximations and minimization objectives. Informally, Theorem~\ref{thm:objective} states that there is a polynomial-time black box reduction from $(\frac{\alpha}{\beta})$-approximating BMeD($\mathcal{F}$,$\mathcal{V}$,$\mathcal{O}$) to $(\alpha,\beta)$-approximating GOOP($\mathcal{F}$,$\mathcal{V}$,$\mathcal{O}$). Theorem~4 of~\cite{CaiDW13b} is a special case of our theorem here for maximization objectives and $(\alpha,1)$-approximations. Throughout this section, we will use $b$ to represent an upper bound on $\max_i\{|T_i|\}$ and the bit complexity of $\obj(\vec{t},X)$, $\Pr[t_i]$, and $t_i(X)$, for all $i, t_i \in T_i, X \in \mathcal{F}$ for the given $BMeD(\mathcal{F},\mathcal{V},\mathcal{O})$ instance. We will also use the notation $rt_G(x)$ to denote an upper bound on the running time of algorithm $G$ on input of bit complexity $x$. 


\begin{theorem}\label{thm:objective}
Let $G$ be an $(\alpha,\beta)$-approximation algorithm for GOOP($\mathcal{F}$,$\mathcal{V}$,$\mathcal{O}$), for some $\alpha \ge 1 \ge \beta>0$, and some minimization objective $\cal O$. Also, fix any $\epsilon>0$. Then there is an approximation algorithm for BMeD($\mathcal{F}$,$\mathcal{V}$,$\mathcal{O}$) that makes $\poly(b, k ,1/\epsilon)$ calls to $G$, and runs in time $\poly(b,k,1/\epsilon, {\rm rt}_{G}(\poly(b,k,1/\epsilon)))$. If $OPT$ is the optimal obtainable expected value of $\obj$ for some BMeD instance, then the mechanism $M$ output by the algorithm on that instance yields {$\mathbb{E}[\obj(M)] \leq \frac{\alpha}{\beta} OPT + \epsilon$}, and is $\epsilon$-BIC. These guarantees hold with probability at least $1-\exp(\poly(b,k,1/\epsilon))$. \mattnote{Furthermore, the output mechanism is feasible, ex-post individually rational, and can be implemented in time $\poly(b, k, 1/\epsilon,{\rm rt}_G(\poly(b, k, 1/\epsilon)))$. These guarantees hold with probability $1$.}
\end{theorem}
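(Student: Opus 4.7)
The plan is to follow the implicit-form LP approach of [CaiDW13b], but with the ellipsoid method driven by an $(\alpha,\beta)$-approximation oracle instead of an exact separation oracle. Recall that the set of feasible implicit forms $F(\mathcal{F},\mathcal{D},\mathcal{O})$ is convex, and the BIC and IR constraints translate into finitely many linear constraints on $\vec{\pi}_I=(O,\vec\pi,\vec p)$. For a \emph{minimization} objective, our LP is: minimize $O$ subject to the BIC constraints $\pi_i(t,t)-p_i(t) \le \pi_i(t,t')-p_i(t')$, the IR constraints $p_i(t)\le \pi_i(t,t)$, and $\vec{\pi}_I \in F(\mathcal{F},\mathcal{D},\mathcal{O})$. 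All constraints except the last are explicit linear inequalities, so the only nontrivial task is separating/optimizing over $F(\mathcal{F},\mathcal{D},\mathcal{O})$.

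The central step is to show that the optimization problem over $F(\mathcal{F},\mathcal{D},\mathcal{O})$, for a linear objective of the form $w\cdot O + \sum_{i,t,t'} c_{i,t,t'}\pi_i(t,t') + \sum_{i,t}d_{i,t}p_i(t)$, reduces to an instance of GOOP. As in [CaiDW13b], prices appear only linearly and only in the $d_{i,t}$ term, so the prices can be set to minimize/maximize independently of the allocation (the IR and BIC constraints on prices are handled separately inside the LP). After eliminating prices, the allocation-dependent part of the objective, given a reported profile $\vec t$, is a linear combination of $O(\vec t, A(\vec t))$ and of terms $t(A(t',\vec t_{-i}))$. Because this breaks up across type-profiles, optimizing in expectation reduces, per realized profile $\vec t$, to a single call to GOOP($\mathcal{F}$, $\mathcal V$, $\mathcal O$) with multiplier $w$, the $g_i$'s set to (reweighted) reported types, and $f$ encoding the virtual-welfare contribution aggregated from the $\pi_i(t,t')$ coefficients. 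This is exactly the same reduction used for the $(\alpha,1)$-case in [CaiDW13b]; the only new ingredient is that here our oracle returns an $(\alpha,\beta)$-approximation rather than an $\alpha$-approximation.

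Next I would invoke the $(\alpha,\beta)$-approximation-sensitive equivalence of separation and optimization (Theorem~\ref{thm:LPalphabeta} in the paper) to run the ellipsoid method on the LP above using $G$ as an $(\alpha,\beta)$-approximate oracle. The output is an implicit form $\vec{\pi}_I^*$ that is feasible for a slightly relaxed version of the LP where the $O$-component is scaled by $\beta$ while other constraints are scaled by $\alpha$, yielding the bound $\beta \cdot O^* \le \alpha\cdot OPT$, i.e.\ $O^* \le \tfrac{\alpha}{\beta}OPT$. An additive $\epsilon$ slack arises because we must discretize/round for the ellipsoid to run in polynomial time and because we allow the returned implicit form to be within $\epsilon$ of the true feasible region (rather than inside it). I would then decompose $\vec\pi_I^*$ into a mixture of ``corner'' implicit forms, each of which was produced by a successful oracle call and is therefore directly implementable by a deterministic allocation rule output by $G$ on that call. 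Sampling from this mixture and charging prices according to $\vec p^*$ gives the final mechanism; as in [CaiDW13b], payments can be tweaked by a small additive amount (absorbed into $\epsilon$) to ensure ex-post IR and $\epsilon$-BIC. Handling the minimization direction is just a matter of flipping inequalities throughout; the $b$-boundedness assumption on $\mathcal O$ makes the additive $\epsilon$ meaningful.

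The main obstacle I expect is the $(\alpha,\beta)$-sensitive ellipsoid step: with an $(\alpha,\beta)$-oracle, a ``separation failure'' certifies only that $\vec \pi_I^*$ lies in an $\alpha$-scaled, $\beta$-discounted relaxation of $F(\mathcal{F},\mathcal{D},\mathcal{O})$, so one must argue carefully that a point returned by this relaxed ellipsoid still admits a convex decomposition into actual mechanism outputs whose combined $\mathcal{O}$-value is at most $\tfrac{\alpha}{\beta}OPT+\epsilon$. This is the content of Theorem~\ref{thm:LPalphabeta}, and once it is in hand, the decomposition into a randomized mechanism, the $\epsilon$-BIC / ex-post IR payments, and the runtime bookkeeping (noting each oracle call is made on an input of bit complexity $\poly(b,k,1/\epsilon)$, and only $\poly(b,k,1/\epsilon)$ calls are needed) proceed along standard lines from [CaiDW13b].
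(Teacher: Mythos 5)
Your proposal tracks the paper's argument closely and is correct in outline: write the implicit-form LP, reduce a per-profile virtual-welfare minimization to a GOOP call so as to obtain an $(\alpha,\beta,\{O\})$-optimization oracle for the set of feasible implicit forms, run ellipsoid with the weird separation oracle of Theorem~\ref{thm:LPalphabeta}, extract the convex decomposition of the returned point into oracle outputs, and implement the $\tfrac{1}{\beta}$-rescaled mixture with prices adjusted for ex-post IR and $\epsilon$-BIC. However, you elide and misattribute one step that is load-bearing. The additive $\epsilon$ and the ``with probability $1-\exp(-\poly)$'' qualifier in the theorem do not originate from ellipsoid rounding; they come from first replacing $\mathcal{D}$ by an empirical distribution $\mathcal{D}'$ supported on polynomially many sampled profiles and working with $F(\mathcal{F},\mathcal{D}',\mathcal{O})$ instead of $F(\mathcal{F},\mathcal{D},\mathcal{O})$ (Propositions~\ref{prop:D'} and~\ref{prop:goop}). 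This step is not optional: the oracle $\mathcal{A}$ must return an implicit form, which requires an exact expectation over the type-profile distribution, and $\mathcal{D}$ can have exponential support, so without the $\mathcal{D}'$ replacement your per-profile GOOP calls cannot be aggregated in polynomial time at all. A smaller omission is the direction $w_O<0$ (or $w_O>0$ for maximization), for which the objective weight on $\mathcal{O}$ has the wrong sign and the GOOP call is ill-formed; the paper handles this separately inside Proposition~\ref{prop:goop} by setting the $O$-coordinate of the returned implicit form to $\infty$ (respectively $0$). Neither point changes your strategy, but a complete write-up must include them.
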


\begin{remark}\label{rem:general}
Note that Theorem~\ref{thm:objective} extends to accommodate maximization problems as well. Here $\alpha \le 1 \le \beta$, and the guarantee  becomes $\mathbb{E}[\obj(M)] \geq \frac{\alpha}{\beta} OPT - \epsilon$. Theorem~\ref{thm:objective} also extends to settings where $\obj$ is a function of prices charged (such as revenue) and should be written as $\mathcal{O}(\vec{t},X,\vec{P})$, as well as settings where the prices charged affect whether or not a mechanism is feasible (such as simultaneously respecting budgets and individual rationality ex-post), as well as settings where $\obj$ behaves in a non-linear but convex (concave for maximization problems) way with respect to randomness (such as fractional fairness).
\end{remark}

There are two key techniques in proving Theorem~\ref{thm:objective}. The first is establishing an approximation preserving version of the celebrated ``equivalence of separation and optimization'' \cite{GrotschelLS81,KarpP80} that can accommodate $(\alpha,\beta)$-approximations, summarized in Section~\ref{sec:alphabeta} below. The second is understanding how to get mileage out of this improved equivalence within the framework of~\cite{CaiDW13b}, summarized in Section~\ref{sec:BMeD}.

\subsection{$(\alpha,\beta)$-approximate Equivalence of Separation and Optimization}\label{sec:alphabeta}
In this section we state and prove an extension of the ``equivalence of separation and optimization'' that accommodates $(\alpha,\beta)$-approximations. Let us first define what $(\alpha,\beta)$-approximations are for linear optimization problems.  In the definition below, $\alpha$ and $\beta$ are constants and $S$ is a subset of the coordinates. When we write the vector $(c\vec{x}_S,\vec{x}_{-S})$, we mean the vector $\vec{x}$ where all coordinates in $S$ have been multiplied by $c$. 

\begin{definition}
An algorithm $\mathcal{A}$ is an {\em $(\alpha,\beta,S)$-minimization algorithm} for polytope $P$ iff for any input vector $\vec{w}$ the vector $\mathcal{A}(\vec{w})$ output by the algorithm satisfies:
\begin{align}
(\beta \mathcal{A}(\vec{w})_S,\mathcal{A}(\vec{w})_{-S}) \cdot \vec{w} \leq \alpha \min_{\vec{x} \in P}\{\vec{x} \cdot \vec{w}\}.\label{eq:alpha beta general problem}
\end{align}
Given such algorithm, we also define the algorithm $\mathcal{A}_S^\beta$ that outputs $(\beta \mathcal{A}(\vec{w})_S,\mathcal{A}(\vec{w})_{-S})$.\footnote{We can similarly define the concept of a {\em $(\alpha,\beta,S)$-maximization algorithm} for polytope $P$ by flipping the inequality in~\eqref{eq:alpha beta general problem} and also switching $\min$ to $\max$.}
\end{definition}
\noindent Taking $\beta=1$ recovers the familiar notion of $\alpha$-approximation, except that we do not require the output of the algorithm to lie in $P$ in our definition. (Most meaningful applications of our framework will enforce this extra property though.) With $\beta <1$ (respectively $\beta>1$), the minimization (resp. maximization) becomes easier as the coordinates indexed by $S$ are discounted (boosted) by a factor of $\beta$ before comparing to $\alpha \cdot OPT$.

Our next theorem states that  an $(\alpha,\beta,S$)-approximation algorithm for polytope $P$ can be used to obtain a ``weird'' separation oracle (WSO) for $\alpha P$ (by $\alpha P$ we mean the polytope $P$ blown up by a factor of $\alpha$ or shrunk by a factor of $\alpha$, depending on whether $\alpha \geq 1$ or $\alpha \leq 1$), a concept explained in the statement of our theorem, which generalizes weird separation oracles from~\cite{CaiDW13}. 
We state and prove our theorem for $(\alpha,\beta,S)$-minimization algorithms, but the theorem holds for maximization by switching $\min$ to $\max$ and reversing the inequality in Property~\ref{property:converse}, and the proof is essentially identical. A complete proof of Theorem~\ref{thm:LPalphabeta} can be found in Appendix~\ref{app:proofs of alpha beta results}. Before stating the theorem formally, let's overview quickly what each property below is guaranteeing. Property 1 guarantees that the $WSO$ is consistent at least with respect to points inside $\alpha P$ (but may behave erratically outside of $\alpha P$). Property 2 guarantees that even though the points accepted by $WSO$ may not be in $\alpha P$ (or even in $P$), they will at least satisfy some relaxed notion of feasibility. Property 3 guarantees that $WSO$ terminates in polynomial time. Property 4 guarantees that if we run Ellipsoid with $WSO$ instead of a real separation oracle for $\alpha P$, that we don't sacrifice anything in terms of optimality \mattnote{(although the output is only guaranteed to satisfy the notion of feasibility given in Property 2. It may be infeasible in the traditional sense, i.e. not contained in $\alpha P$).}
\begin{theorem}\label{thm:LPalphabeta}
Let $P$ be a convex region in $\mathbb{R}^d$ and $\mathcal{A}$ an $(\alpha,\beta,S)$-minimization algorithm for $P$, for some $\alpha,\beta>0$. Then we can design a ``weird'' separation oracle $WSO$ for $\alpha P$ with the following properties:
\begin{enumerate}
\item Every halfspace output by $WSO$ will contain $\alpha P$.
\item Whenever $WSO(\vec{x}) =$ ``yes'' for some input $\vec{x}$, the execution of $WSO$ explicitly finds directions $\vec{w}_1,\ldots,\vec{w}_{d+1}$ such that $\vec{x} \in Conv\{\mathcal{A}^{\beta}_S(\vec{w}_1),\ldots,\mathcal{A}^{\beta}_S(\vec{w}_{d+1})\}$, and therefore $(\frac{1}{\beta}\vec{x}_S, \vec{x}_{-S}) \in Conv\{\mathcal{A}(\vec{w}_1),\ldots,\mathcal{A}(\vec{w}_{d+1})\}$ as well.
\item Let $b$ be the bit complexity of $\vec{x}$, $\ell$ an upper bound on the bit complexity of $\mathcal{A}(\vec{w})$ for all $\vec{w} \in [-1,1]^d$. Then on input $\vec{x}$, $WSO$ terminates in time $\poly(d,b,\ell,rt_{\mathcal{A}}(\poly(d,b,\ell)))$ and makes at most $\poly(d,b,\ell)$ queries to $\mathcal{A}$.

\item Let $Q$ be an arbitrary convex region in $\mathbb{R}^d$ described via some separation oracle, $\vec{c}$  a linear objective with $\vec{c}_{-S} = \vec{0}$, and $OPT = \min_{\vec{y} \in \alpha P \cap Q}\{\vec{c} \cdot \vec{y}\}$. Let also $\vec{z}$ be the output of the Ellipsoid algorithm for minimizing $\vec{c} \cdot \vec{y}$ over $\vec{y}\in\alpha P \cap Q$, but using $WSO$ as a separation oracle for $\alpha P$ instead of a standard separation oracle for $\alpha P$ {(i.e. use the exact same parameters for Ellipsoid as if $WSO$ was a valid separation oracle for $\alpha P$, and still use a standard separation oracle for $Q$)}. Then $\vec{c} \cdot \vec{z} \leq OPT$, and therefore $\vec{c} \cdot (\frac{1}{\beta}\vec{z}_S,\vec{z}_{-S}) \leq \frac{1}{\beta}OPT$. \label{property:converse}
\end{enumerate}
\end{theorem}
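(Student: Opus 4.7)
The plan is to adapt the Gr\"otschel--Lov\'asz--Schrijver ``separation from optimization'' paradigm (as previously refined by Cai--Daskalakis--Weinberg for approximation-sensitive oracles) to the new $(\alpha,\beta,S)$-regime. I will work with the set $P' := \operatorname{conv}\{\mathcal{A}^\beta_S(\vec{w}) : \vec{w} \in \mathbb{R}^d\}$. The single observation driving everything is that the $(\alpha,\beta,S)$-guarantee gives, for every $\vec{w}$,
\begin{equation*}
\vec{w}\cdot\mathcal{A}^\beta_S(\vec{w}) \;\leq\; \alpha\min_{\vec{x}\in P}\vec{w}\cdot\vec{x} \;=\; \min_{\vec{y}\in\alpha P}\vec{w}\cdot\vec{y}\,,
\end{equation*}
so the halfspace $H_{\vec{w}} := \{\vec{y} : \vec{w}\cdot\vec{y} \geq \vec{w}\cdot\mathcal{A}^\beta_S(\vec{w})\}$ contains $\alpha P$; dually, $\alpha P \subseteq P'$.

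On input $\vec{x}$, WSO will attempt to certify $\vec{x} \in P'$ via the (exponentially-sized) primal LP that expresses $\vec{x}$ as a convex combination of $\mathcal{A}^\beta_S$-images. I run the ellipsoid method on its dual, which is the search for a direction $\vec{w}$ satisfying $\vec{w}\cdot\vec{x} < \vec{w}\cdot\mathcal{A}^\beta_S(\vec{w})$; each dual separation query reduces to a single call to $\mathcal{A}$ at the current $\vec{w}$ plus a scalar comparison. If the dual is infeasible, Carath\'eodory's theorem extracts at most $d+1$ directions whose $\mathcal{A}^\beta_S$-images convexly combine to $\vec{x}$, establishing Property 2. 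Otherwise, the dual witness $\vec{w}$ yields $H_{\vec{w}}$, which by the display above contains $\alpha P$ but excludes $\vec{x}$, establishing Property 1. The polynomial bit-length $\ell$ of $\mathcal{A}$'s outputs, together with the standard polynomial analysis of ellipsoid, gives Property 3.

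Property 4 is the delicate part, because WSO's ``yes'' region is $P' \supseteq \alpha P$, yet the output must be compared to the genuine $OPT$ taken over $\alpha P \cap Q$. When ellipsoid minimizes $\vec{c}\cdot\vec{y}$ over $\alpha P \cap Q$ using WSO in place of a true separation oracle for $\alpha P$, each iterate $\vec{y}^*$ is handled in one of two ways: either WSO rejects and returns a halfspace that contains $\alpha P$ (hence cuts nothing away from $\alpha P \cap Q$), or WSO accepts, in which case Property 2 certifies $\vec{y}^* \in P'$ and ellipsoid may safely add the standard objective cut $\vec{c}\cdot\vec{y} \leq \vec{c}\cdot\vec{y}^*$. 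Because the optimizer $\vec{y}^{\mathrm{opt}} \in \alpha P \cap Q \subseteq P' \cap Q$ is always accepted by both WSO and the $Q$-oracle, it is never excluded, forcing the running objective upper bound---and hence $\vec{c}\cdot\vec{z}$---to remain at most $OPT$. The rescaled bound $\vec{c}\cdot(\tfrac{1}{\beta}\vec{z}_S,\vec{z}_{-S}) \leq \tfrac{1}{\beta}OPT$ is then immediate from $\vec{c}_{-S}=\vec{0}$.

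The main obstacle I anticipate is precisely Property 4: the ``over-acceptance'' of WSO (accepting all of $P'$, not merely $\alpha P$) could in principle cause ellipsoid to cut off the true optimizer. The resolution is the two-sided faithfulness built into WSO by the dual-LP construction---``yes'' implies genuine membership in $P'$, and ``no'' implies a halfspace faithful to $\alpha P$---so both branches preserve $\vec{y}^{\mathrm{opt}}$ and validate the objective cuts, closing the argument.
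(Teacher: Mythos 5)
Your proposal is correct and follows essentially the same route as the paper: run an inner ellipsoid that searches for a direction $\vec{w}$ (up to a small $\delta$-slack) witnessing $\vec{w}\cdot\vec{x} < \vec{w}\cdot\mathcal{A}^\beta_S(\vec{w})$, using the $(\alpha,\beta,S)$-guarantee to show every output halfspace contains $\alpha P$; if no such $\vec{w}$ exists, the queried directions plus Carath\'eodory certify $\vec{x} \in \operatorname{Conv}\{\mathcal{A}^\beta_S(\vec{w}_i)\}$; and Property 4 follows because the WSO-generated cuts never separate any point of $\alpha P \cap Q$. The paper (Lemmas~\ref{lem:one}--\ref{lem:two}) executes exactly this plan, with the $\delta$/iteration-count technicalities deferred to \cite{CaiDW13}, and phrases Property~4 via the outer binary search over objective thresholds rather than via objective cuts, but the underlying argument is the same.
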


\subsection{Finding and Implementing Approximate Solutions to BMeD}\label{sec:BMeD}
In this section, we show how to make use of the techniques developed in Section~\ref{sec:alphabeta} to obtain approximate solutions to BMeD instances from $(\alpha,\beta)$-approximation algorithms for their corresponding GOOP instances. Essentially, the process breaks down into two parts: First, we solve a linear program to find an implicit form $\vec{\pi}_I$ whose $O$ component is at most $\alpha OPT + \epsilon$ (or at least $\alpha OPT - \epsilon$ for maximization). Second, we must actually implement $\vec{\pi}_I$ efficiently as a mechanism (recall that a mechanism takes as input a type profile $\vec{t}$ then selects a feasible outcome and charges prices). So ideally, $\vec{\pi}_I$ should be feasible (in all previous works~\cite{CaiDW12, CaiDW12b, CaiDW13, CaiDW13b}, this was the case). But with access only to an $(\alpha, \beta)$-approximation algorithm, obtaining an implicit form that is both feasible and within an $\alpha$-factor of $OPT$ simply isn't possible. We show instead that the guarantees of Theorem~\ref{thm:LPalphabeta} allow us to \emph{approximately} implement $\vec{\pi}_I$, losing only an additional factor of $\beta$ in the objective. We proceed to give more detail.

We start by writing a linear program (included as Figure~\ref{fig:LPBMeD} in Appendix~\ref{app:proofs of alpha beta results}) whose variables are components of an implicit form to find the truthful, feasible implicit form that optimizes $O$. A polynomial number of linear constraints enforce that the implicit form is truthful, but a separation oracle is required to enforce $\vec{\pi}_I \in F(\mathcal{F},\mathcal{D},\mathcal{O})$. Unfortunately, such a separation oracle can't be obtained efficiently, so we approximate $F(\mathcal{F},\mathcal{D},\mathcal{O})$ with another polytope for which we can obtain a separation oracle. Specifically, we define $\mathcal{D}'$ to be the uniform distribution over polynomially many samples (in $b, k$ and $1/\epsilon$) from $\mathcal{D}$, and approximate $F(\mathcal{F},\mathcal{D},\mathcal{O})$ with $F(\mathcal{F},\mathcal{D}',\mathcal{O})$.\footnote{Technically, the sampling procedure used to generate the support of $D'$ is slightly more involved, but we omit the details here because the distinction is not important. We refer the reader to~\cite{CaiDW12b} for more details.} The following proposition from~\cite{CaiDW13b} states this formally.

\begin{proposition}\label{prop:D'}(\cite{CaiDW12b},\cite{CaiDW13b})
With probability at least $1-\exp(\poly(b, k,1/\epsilon))$, for every feasible mechanism $M$, the implicit form of $M$ with respect to $\mathcal{D}$ ($\vec{\pi}^M \in F(\mathcal{F},\mathcal{D},\mathcal{O})$), and with respect to $\mathcal{D}'$ ($\vec{\pi}_0^M \in F(\mathcal{F},\mathcal{D}',\mathcal{O})$) are $\epsilon$-close. That is, $|\vec{\pi}^M - \vec{\pi}_0^M|_1 \leq \epsilon$. 
\end{proposition}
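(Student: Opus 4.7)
\textbf{Proof proposal for Proposition~\ref{prop:D'}.} The plan is a textbook sampling argument combined with a reduction to a finite (though exponentially large) class of mechanisms, so that we can close the union bound. Let $d = 1 + \sum_i |T_i|^2 + \sum_i |T_i|$ denote the dimension of an implicit form; by hypothesis $d = \poly(b,k)$. Let $N$ be the number of samples used to construct the support of $\mathcal{D}'$; we will choose $N = \poly(b,k,1/\epsilon)$ at the end.

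First I would reduce to deterministic mechanisms. Fix any feasible $M=(A,P)$. Every component of the implicit form --- $O^M$, $\pi_i^M(t,t')$, and $p_i^M(t)$ --- is an expectation (over $\vec{t}\sim\mathcal{D}$ or $\vec{t}_{-i}\sim\mathcal{D}_{-i}$) of a quantity of bit complexity $O(b)$, and it depends linearly on the joint distribution $A(\vec{t}),P(\vec{t})$ of outcome and payments as a function of the profile $\vec{t}$. Writing $A,P$ as a convex combination $M=\sum_{j}\lambda_j M_j$ of deterministic allocation/pricing rules (price rules can be discretized to the $b$-bit grid without loss because only expected prices appear in $\vec{\pi}_I$, and then there are finitely many of them), linearity gives $\vec{\pi}^M-\vec{\pi}_0^M=\sum_j\lambda_j(\vec{\pi}^{M_j}-\vec{\pi}_0^{M_j})$ and the triangle inequality reduces the claim to establishing the $\epsilon$-closeness only for the deterministic rules. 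The number of such rules is at most $(|\mathcal{F}|\cdot 2^{O(b)})^{|T|^k}$, whose logarithm is bounded by $\poly(b,k)$ under our standing bit-complexity assumptions.

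Next I would fix one such deterministic mechanism $M$ and one coordinate $\ell$ of the implicit form. That coordinate is the expectation of a function $\Phi_\ell(\vec{t})\in[0,B]$ for some $B=\poly(b,k)$ (the bound on $\mathcal{O}$, on $t(X)$, or on prices after the normalization recalled in the preliminaries), while the corresponding coordinate of $\vec{\pi}_0^M$ is the empirical average of $\Phi_\ell$ over the $N$ samples used to build $\mathcal{D}'$. Hoeffding's inequality yields
\[
\Pr\!\left[\,|\vec{\pi}^M_\ell-\vec{\pi}_0^M{}_\ell|>\tfrac{\epsilon}{d}\,\right]\;\le\;2\exp\!\left(-\,\tfrac{2N\epsilon^2}{d^2 B^2}\right).
\]
A union bound over the $d$ coordinates and over the (at most $\exp(\poly(b,k))$) deterministic mechanisms still fits inside $\exp(-\poly(b,k,1/\epsilon))$ once $N$ is chosen to be a sufficiently large polynomial in $b$, $k$, $1/\epsilon$, so that the Hoeffding exponent dominates the $\poly(b,k)$ logarithm of the union.

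The step I expect to be the main obstacle is not the concentration itself but the reduction to a finite net: the space of feasible mechanisms is not finite, and naively one cannot union-bound over it. The argument above circumvents this by exploiting the linearity of $\vec{\pi}^M$ and $\vec{\pi}_0^M$ in the mechanism, together with the observation that it suffices to control the extreme points (deterministic mechanisms with discretized prices). Some care is needed to verify that discretizing the price rule does not affect the implicit form in a relevant way --- only the interim expected prices appear in $\vec{\pi}_I$, so rounding each realized price to the nearest multiple of $\epsilon/\poly(b,k)$ perturbs every coordinate by at most $\epsilon/\poly$, which can be absorbed into the final $\epsilon$. Aside from bookkeeping, the remainder is standard, and the argument is essentially the one used in \cite{CaiDW12b, CaiDW13b}.
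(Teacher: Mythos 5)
The paper does not actually prove Proposition~\ref{prop:D'}; it cites \cite{CaiDW12b,CaiDW13b} and, in the surrounding footnote, explicitly warns that ``the sampling procedure used to generate the support of $\mathcal{D}'$ is slightly more involved'' than the naive one you assume. So there is no in-paper proof to compare against, but your attempt can still be assessed on its own terms — and it has a genuine gap at precisely the step you flag as ``the main obstacle.''

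The union bound over deterministic mechanisms does not close. A deterministic allocation rule is a map $A:\prod_i T_i\to\mathcal{F}$, so there are $|\mathcal{F}|^{\prod_i|T_i|}$ of them (before even accounting for price rules). Since $b$ bounds $\max_i|T_i|$ and typically $|T_i|\ge 2$, the exponent $\prod_i|T_i|$ is at least $2^k$ and can be as large as $b^k$. Thus $\log$ of the number of deterministic mechanisms is $\Omega(2^k\log|\mathcal{F}|)$, \emph{exponential} in $k$ — not $\poly(b,k)$ as you assert. (Your written count $(|\mathcal{F}|\cdot 2^{O(b)})^{|T|^k}$ only makes the exponent worse.) Consequently, Hoeffding would require $N$ exponential in $k$ to absorb the union bound, defeating the entire point of the proposition, which needs $N=\poly(b,k,1/\epsilon)$. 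Reducing to extreme points of the mechanism polytope via convexity is a sound first move, but it does not shrink the class to anything you can afford to union-bound over, and no discretization of prices helps because the bottleneck is the allocation rule, not the payments.

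There is also a secondary structural issue you gloss over: if $\mathcal{D}'$ is literally the uniform distribution over $N$ sampled \emph{profiles}, then $\mathcal{D}'$ is no longer a product distribution, so the interim quantities $\pi_i^M(t,t')=\mathbb{E}_{\vec{t}_{-i}\sim\mathcal{D}'_{-i}}[\cdot]$ are not even well-defined in the same form, and the ``implicit form with respect to $\mathcal{D}'$'' used throughout the reduction (and in Proposition~\ref{prop:goop}) presupposes product structure. This is part of what the footnote is alluding to: the samples must be taken per-bidder, and the closeness argument in \cite{CaiDW12b} is not a uniform-over-all-mechanisms Hoeffding-plus-union-bound. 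To make your plan rigorous you would need to replace the naive union bound with an argument that exploits more structure — e.g., closeness in the relevant support-function/LP-dual sense in only polynomially many directions, or a symmetrization argument tailored to the per-bidder sampling — rather than raw cardinality of the mechanism class.
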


From here, we can \mattnote{replace $F(\mathcal{F},\mathcal{D},\mathcal{O})$ with $F(\mathcal{F},\mathcal{D}',\mathcal{O})$ to obtain the Linear Program in Figure~\ref{fig:LPD'} in Appendix~\ref{app:proofs of alpha beta results}, and} reduce the problem of solving this linear program and finding an approximately optimal implicit form to designing a weird separation oracle for $F(\mathcal{F},\mathcal{D}',\mathcal{O})$, using Theorem~\ref{thm:LPalphabeta}. Our progress so far is summarized in Proposition~\ref{prop:solveLP}. Note that, during the execution of the Ellipsoid algorithm referred to in the statement of the proposition, we will also be recording the auxiliary information produced by the weird separation oracle, as guaranteed by Property 2 of Theorem~\ref{thm:LPalphabeta}. 

\begin{proposition}\label{prop:solveLP}
With black-box access to an $(\alpha, \beta, \{O\})$-optimization algorithm, $\mathcal{A}$, for $F(\mathcal{F},\mathcal{D}',\mathcal{O})$, we can use the weird separation oracle of Theorem~\ref{thm:LPalphabeta} to run the Ellipsoid algorithm on the linear program of Figure~\ref{fig:LPD'}. The Ellipsoid algorithm will run in time $\poly(b,k,1/\epsilon, rt_\mathcal{A}(\poly(b, k,1/\epsilon)))$, and will output a truthful implicit form $\vec{\pi}_I$. With probability at least $1-\exp(\poly(b,k,1/\epsilon))$, the objective component $O$ of $\vec{\pi}_I$ will satisfy $O \leq \alpha OPT + \epsilon$, \mattnote{where $OPT$ is the expected value of $\mathcal{O}$ in the optimal truthful mechanism.} Furthermore, the algorithm will explicitly output a list of $d+1=\poly(b,k)$ directions $\vec{w}_1,\ldots,\vec{w}_{d+1}$ such that $\vec{\pi}_I \in  Conv\{\mathcal{A}^{\beta}_{\{O\}}(\vec{w}_1),\ldots,\mathcal{A}^{\beta}_{\{O\}}(\vec{w}_{d+1})\}$. 
\end{proposition}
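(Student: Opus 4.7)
The plan is to apply Theorem~\ref{thm:LPalphabeta} to the polytope $P = F(\mathcal{F},\mathcal{D}',\mathcal{O})$ with the singleton $S = \{O\}$, using the given $(\alpha,\beta,\{O\})$-approximation algorithm $\mathcal{A}$ as input. The LP of Figure~\ref{fig:LPD'} has as variables the components of an implicit form $\vec{\pi}_I = (O,\vec{\pi},\vec{p})$; its constraints split into (i) the polynomially many linear BIC/IR inequalities $\pi_i(t,t)-p_i(t)\ge \pi_i(t,t')-p_i(t')$ and $\pi_i(t,t)-p_i(t)\ge 0$, which cut out a polyhedral cone $Q$ with an explicit separation oracle by direct enumeration, and (ii) the feasibility constraint $\vec{\pi}_I \in F(\mathcal{F},\mathcal{D}',\mathcal{O})$, for which no efficient separation oracle is known. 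The objective is $\vec{c}\cdot\vec{\pi}_I = O$, where $\vec{c}$ is the indicator of the $O$-coordinate, so that $\vec{c}_{-S}=\vec{0}$ as required by Property~4.

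First, I would feed $\mathcal{A}$ into Theorem~\ref{thm:LPalphabeta} to produce a weird separation oracle $WSO$ for $\alpha P$; Property~3 bounds each query by $\poly(b,k,1/\epsilon,{\rm rt}_\mathcal{A}(\poly(b,k,1/\epsilon)))$ time and $\poly(b,k,1/\epsilon)$ calls to $\mathcal{A}$. I would then run Ellipsoid to minimize $\vec{c}\cdot\vec{y}$ over $\vec{y}\in\alpha P\cap Q$, using $WSO$ in place of a genuine separation oracle for $\alpha P$ while using a standard enumeration-based oracle for $Q$. Property~4 guarantees that the output $\vec{\pi}_I$ satisfies $\vec{c}\cdot\vec{\pi}_I \le \min_{\vec{y}\in\alpha P\cap Q}\vec{c}\cdot\vec{y}$.

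Second, to bound this minimum by $\alpha\cdot OPT + \epsilon$, I would exhibit a nearly-optimal feasible point. Let $M^*$ denote the optimal truthful BMeD mechanism, with $\mathcal{D}$-implicit form $\vec{\pi}^{M^*}$ (so its $O$-component equals $OPT$) and $\mathcal{D}'$-implicit form $\vec{\pi}_0^{M^*}\in P$. Proposition~\ref{prop:D'} gives $|\vec{\pi}^{M^*}-\vec{\pi}_0^{M^*}|_1\le\epsilon$ with high probability, so the $O$-component of $\vec{\pi}_0^{M^*}$ is at most $OPT+\epsilon$ and the (homogeneous) BIC/IR inequalities defining $Q$ are violated by at most $O(\epsilon)$. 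Since $\alpha\ge 1$ and scaling preserves homogeneous linear inequalities, $\alpha\vec{\pi}_0^{M^*}$ lies in $\alpha P$ and in an $O(\epsilon)$-relaxation of $Q$; absorbing this slack by rescaling $\epsilon$ (exactly as in~\cite{CaiDW13b}) produces a certificate of value at most $\alpha(OPT+\epsilon)$, and a final rescaling yields $O\le\alpha\cdot OPT+\epsilon$.

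Third, Property~2 of Theorem~\ref{thm:LPalphabeta} supplies the convex-combination witness: on the accepting call $WSO(\vec{\pi}_I)=$~``yes'' that certifies feasibility of the returned point during Ellipsoid's termination, $WSO$ explicitly exhibits $d+1=\poly(b,k)$ directions $\vec{w}_1,\ldots,\vec{w}_{d+1}$ such that $\vec{\pi}_I\in Conv\{\mathcal{A}^{\beta}_{\{O\}}(\vec{w}_1),\ldots,\mathcal{A}^{\beta}_{\{O\}}(\vec{w}_{d+1})\}$, and I would record these during the Ellipsoid run. Assembling the pieces gives the advertised runtime, truthfulness of $\vec{\pi}_I$ (it lies in $Q$), the objective bound, and the explicit convex representation. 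The main obstacle is the second step: one must carefully manage the $\epsilon$-drift introduced by replacing $\mathcal{D}$ with the sampled $\mathcal{D}'$---both in the objective and in the BIC/IR inequalities---and argue that after scaling by $\alpha$ the resulting point lies in (a small relaxation of) $\alpha P\cap Q$ without eroding the $\alpha$-factor multiplying $OPT$.
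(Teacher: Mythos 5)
Your proposal is correct and follows essentially the same route as the paper's proof: apply Theorem~\ref{thm:LPalphabeta} with $P=F(\mathcal{F},\mathcal{D}',\mathcal{O})$ and $S=\{O\}$, use Properties~3 and~4 for the runtime and the bound $O\le\min_{\vec y\in\alpha P\cap Q}\vec c\cdot\vec y$, Property~2 for the list of directions, and then bound that minimum by $\alpha\cdot OPT+\epsilon$ via the homogeneity of the BIC/IR cone $Q$ (so scaling by $\alpha$ preserves truthfulness) together with Proposition~\ref{prop:D'}. The one subtlety you flag at the end---that the $\mathcal{D}'$-implicit form of the $\mathcal{D}$-optimal mechanism may only approximately satisfy the BIC/IR constraints---is in fact glossed over in the paper's proof as well and deferred to the machinery of~\cite{CaiDW12b,CaiDW13b}, so your treatment is no less complete than the paper's.
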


Next, we show that one can obtain an $(\alpha, \beta, \{O\})$-optimization algorithm for $F(\mathcal{F},\mathcal{D}', \mathcal{O})$ given black-box access to an $(\alpha, \beta)$-approximation algorithm for $GOOP(\mathcal{F},\mathcal{V},\mathcal{O})$. The proof is included in Appendix~\ref{app:proofs of alpha beta results}.

\begin{proposition}\label{prop:goop}
Let all types in the support of $\mathcal{D}$ \mattnote{(and therefore $\mathcal{D}'$ as well)} be in the set $\mathcal{V}$. Then with black-box access to $G$, an $(\alpha, \beta)$-approximation algorithm for $GOOP(\mathcal{F},\mathcal{V},\mathcal{O})$, one can obtain an $(\alpha, \beta, \{O\})$-optimization algorithm, $\mathcal{A}$, for $F(\mathcal{F},\mathcal{D}',\mathcal{O})$. The algorithm runs in time polynomial in $b,k, 1/\epsilon$, and $rt_G(\poly(b,k,1/\epsilon))$. Furthermore, given as input any direction $\vec{w}$, one can implement in time $\poly(b,k,1/\epsilon, rt_G(\poly(b,k,1/\epsilon)))$ a feasible mechanism $M$ whose implicit form with respect to $\mathcal{D}'$, $\vec{\pi}_0^M$, satisfies $\vec{\pi}_0^M = \mathcal{A}(\vec{w})$.
\end{proposition}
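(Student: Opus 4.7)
The plan is to reduce a single $(\alpha,\beta,\{O\})$-optimization query on $F(\mathcal{F},\mathcal{D}',\mathcal{O})$ into one GOOP query per profile in the support of $\mathcal{D}'$, solve each with $G$, and assemble the resulting outcomes into a mechanism $M^*$ whose implicit form (w.r.t.\ $\mathcal{D}'$) realizes the desired guarantee.

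Given a direction $\vec{w}$, split its coordinates as $(w_O,\{w_{i,t,t'}\},\{w'_{i,t}\})$ to match the three blocks of the implicit form. Substituting the definitions
$O^M=\mathbb{E}_{\vec{t}^*\sim\mathcal{D}'}[\mathcal{O}(\vec{t}^*,A(\vec{t}^*))]$ and
$\pi^M_i(t,t')=\mathbb{E}_{\vec{t}^*_{-i}\sim\mathcal{D}'_{-i}}[t(A(t',\vec{t}^*_{-i}))]$
and re-indexing by the profile $\vec{t}^*$ at which $A$ is evaluated, the allocation-dependent part of $\vec{w}\cdot\vec{\pi}^M_I$ decouples across profiles as
\[
\sum_{\vec{t}^*\in \mathrm{supp}(\mathcal{D}')}\Big[w_O\Pr_{\mathcal{D}'}[\vec{t}^*]\,\mathcal{O}(\vec{t}^*,A(\vec{t}^*))\;+\;f_{\vec{t}^*}(A(\vec{t}^*))\Big],
\]
where $f_{\vec{t}^*}:=\sum_i\sum_{t\in T_i}w_{i,t,t^*_i}\Pr_{\mathcal{D}'_{-i}}[\vec{t}^*_{-i}]\cdot t$ is a nonnegative-or-arbitrary linear combination of types $t\in T_i\subseteq\mathcal{V}$, hence $f_{\vec{t}^*}\in\mathcal{V}^\times$. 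Because the terms decouple, an optimal $A$ can be chosen independently per profile, and for each $\vec{t}^*$ the sub-problem is precisely a GOOP$(\mathcal{F},\mathcal{V},\mathcal{O})$ instance with multiplier $w=w_O\Pr_{\mathcal{D}'}[\vec{t}^*]$, functions $g_i:=t^*_i\in\mathcal{V}$, and objective function $f:=f_{\vec{t}^*}$. Feeding each such instance to $G$ yields an $(\alpha,\beta)$-approximate outcome $A^*(\vec{t}^*)$. Prices enter $\vec{w}\cdot\vec{\pi}^M_I$ linearly and also decouple profile-by-profile independently of $\mathcal{O}$, so $P^*$ can be fixed by a separate trivial minimization (with any agreed-upon bit bound on prices) that does not involve $G$.

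Summing the per-profile $(\alpha,\beta)$-approximation inequalities produced by $G$ yields, after re-bundling sums back into implicit-form components,
\[
\beta w_O\,O^{M^*}+\sum_{i,t,t'}w_{i,t,t'}\pi^{M^*}_i(t,t')+\sum_{i,t}w'_{i,t}p^{M^*}_i(t)\;\le\;\alpha\min_{\vec{\pi}_I\in F(\mathcal{F},\mathcal{D}',\mathcal{O})}\vec{w}\cdot\vec{\pi}_I,
\]
which is exactly the $(\alpha,\beta,\{O\})$-optimization guarantee: scaling the $\mathcal{O}$-contribution by $\beta$ in each sub-problem corresponds to scaling only the $O$-coordinate of the aggregated implicit form, as required by the definition of $\mathcal{A}^\beta_{\{O\}}$. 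Setting $\mathcal{A}(\vec{w}):=\vec{\pi}^{M^*}_0$ completes the construction. The number of calls to $G$ equals $|\mathrm{supp}(\mathcal{D}')|=\poly(b,k,1/\epsilon)$, each on an instance of bit-complexity $\poly(b,k,1/\epsilon)$, producing the claimed running-time bound. The mechanism $M^*$ is implementable in the stated time as a lookup table keyed by $\vec{t}^*\in\mathrm{supp}(\mathcal{D}')$ (profiles outside the support occur with probability $0$ under $\mathcal{D}'$ and can be answered with any fixed feasible outcome).

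The main obstacle is the decoupling calculation: one must verify that, when the definitions of $O^M$, $\pi^M_i$, $p^M_i$ are substituted and the inner product is re-indexed by $\vec{t}^*$, the expression genuinely splits into independent per-profile sub-problems each of which matches the GOOP template verbatim (so that the input function $f_{\vec{t}^*}$ lies in $\mathcal{V}^\times$ and the input functions $g_i$ lie in $\mathcal{V}$, using the hypothesis that $T_i\subseteq\mathcal{V}$). Once that algebraic rearrangement is pinned down, the rest (aggregating approximation guarantees, tallying oracle calls, implementing the mechanism by table lookup) is routine bookkeeping.
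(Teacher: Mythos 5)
Your decomposition is the same one the paper uses: apply Proposition~\ref{prop:VW} to rewrite $\vec{w}\cdot\vec{\pi}^M_I$ as an expectation over profiles of a per-profile virtual objective, observe that the per-profile minimization is exactly a GOOP instance (your choice to fold $\Pr_{\mathcal{D}'}[\vec{t}^*]$ into the multiplier $w$ rather than dividing $f$ by $\Pr[t'_i]$ as the paper does is an equivalent rescaling of the same instance), call $G$ once per profile in $\mathrm{supp}(\mathcal{D}')$, and aggregate. The bookkeeping about $\mathcal{V}^\times$, $T_i\subseteq\mathcal{V}$, the number of oracle calls, and table-lookup implementation all match.

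The genuine gap is the sign of $w_O$. GOOP is defined only for a \emph{non-negative} multiplier $w\geq 0$, so when $w_O<0$ your per-profile sub-problem, whether you take $w=w_O$ or $w=w_O\Pr_{\mathcal{D}'}[\vec{t}^*]$, is not a valid GOOP instance, and $G$ cannot be invoked at all on that direction. The paper's separation oracle will query arbitrary directions $\vec{w}\in[-1,1]^d$, so $w_O<0$ does occur and must be handled. The paper's fix exploits the one-sided definition of feasibility for implicit forms: for a minimization objective, $(O,\vec{\pi},\vec{p})$ is feasible whenever $O\geq O^M$ for some feasible $M$, so one may take any feasible implicit form and set its $O$-coordinate to $+\infty$, giving $\vec{\pi}_I\cdot\vec{w}=-\infty$ and trivially satisfying the $(\alpha,\beta,\{O\})$ guarantee (for maximization one instead clamps $O=0$ and sets $w_O=0$ before calling $G$). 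Without this branch, the construction of $\mathcal{A}$ is undefined on a full half-space of directions, so the proposal does not yield an $(\alpha,\beta,\{O\})$-optimization algorithm as claimed. Your remark about a ``separate trivial minimization'' over price coordinates is harmless but unnecessary: the region $F(\mathcal{F},\mathcal{D}',\mathcal{O})$ used in the LP constrains only $(O,\vec{\pi})$, and the directions queried by $WSO$ carry no price component, so prices never enter the optimization over $F$.
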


Taken together, the above propositions provide an algorithm to find an implicit form $\vec{\pi}_I$ whose objective component is within an $\alpha$-factor of optimal. The only remaining step is to implement it. As mentioned earlier, there's a bit of a catch here because $\vec{\pi}_I$ may not even be feasible. We instead approximately implement $\vec{\pi}_I$, losing a factor of $\beta$ in the objective component but leaving the others untouched.

\begin{proposition}\label{prop:implement}
Let $\vec{\pi}_I = (O, \vec{\pi},\vec{p})$ be the implicit form and $\vec{w}_1,\ldots,\vec{w}_{d+1}$ the auxiliary information output by the algorithm of Proposition~\ref{prop:solveLP}, when using some $(\alpha, \beta)$-approximation algorithm $G$ for $GOOP(\mathcal{F},\mathcal{V},\mathcal{O})$ in order to get the required $(\alpha, \beta,\{O\})$-optimization algorithm  for $F(\mathcal{F},\mathcal{D}',\mathcal{O})$, via Proposition~\ref{prop:goop}. Then one can implement an ex-post individually rational mechanism $M$ in time polynomial in $d$ and $rt_G(\poly(b,k,1/\epsilon))$. With probability $1-\exp(\poly(b, k, 1/\epsilon))$, $M$ satisfies:
\begin{itemize}
\item $|O^M - O/\beta| \leq \epsilon$. 
\item $|\vec{\pi} - \vec{\pi}^M|_1 \leq \epsilon$.
\item $|\vec{p} - \vec{p}^M|_1 \leq \epsilon$. 
\end{itemize}
\end{proposition}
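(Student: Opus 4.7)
The plan is to exploit the convex-combination certificate provided by Proposition~\ref{prop:solveLP} to build $M$ as a mixture of individual mechanisms, and then transfer from the sample distribution $\mathcal{D}'$ back to $\mathcal{D}$ via Proposition~\ref{prop:D'}. Concretely, the certificate says that $\vec{\pi}_I = (O, \vec{\pi}, \vec{p}) \in \mathrm{Conv}\{\mathcal{A}^{\beta}_{\{O\}}(\vec{w}_j)\}_{j=1}^{d+1}$, so equivalently the ``unscaled'' implicit form $\hat{\pi}_I := (O/\beta, \vec{\pi}, \vec{p})$ lies in $\mathrm{Conv}\{\mathcal{A}(\vec{w}_j)\}_{j=1}^{d+1}$. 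The key idea is to target $\hat{\pi}_I$ rather than $\vec{\pi}_I$: the latter need not even be feasible, which is precisely the price one pays for allowing $\beta < 1$.

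First I would find nonnegative weights $\lambda_1, \ldots, \lambda_{d+1}$ with $\sum_j \lambda_j = 1$ such that $\hat{\pi}_I = \sum_j \lambda_j \mathcal{A}(\vec{w}_j)$ by solving a small feasibility linear program on $d+1$ variables; existence of such weights is guaranteed by Property~2 of Theorem~\ref{thm:LPalphabeta}, and the LP is solvable in time $\poly(b,k)$. Next, for each $j$, I would invoke Proposition~\ref{prop:goop} to produce a feasible, ex-post IR mechanism $M_j$ whose implicit form with respect to $\mathcal{D}'$ is exactly $\mathcal{A}(\vec{w}_j)$. Define $M$ to be the mechanism that, using randomness independent of the reported profile $\vec{t}$, first samples $j \sim (\lambda_1, \ldots, \lambda_{d+1})$ and then runs $M_j(\vec{t})$. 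By linearity over this choice of $j$, the implicit form of $M$ with respect to $\mathcal{D}'$ equals $\sum_j \lambda_j \mathcal{A}(\vec{w}_j) = \hat{\pi}_I = (O/\beta, \vec{\pi}, \vec{p})$. Ex-post individual rationality is inherited, since each $M_j$ is ex-post IR on every realization and the index $j$ is drawn independently of the bidders' reports.

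To finish, I would apply Proposition~\ref{prop:D'}: with probability at least $1 - \exp(-\poly(b,k,1/\epsilon))$ over the samples used to construct $\mathcal{D}'$, the implicit form of $M$ with respect to $\mathcal{D}$ is within $\epsilon$ in $L_1$ of its implicit form with respect to $\mathcal{D}'$. This simultaneously yields the three claimed bounds $|O^M - O/\beta| \leq \epsilon$, $|\vec{\pi} - \vec{\pi}^M|_1 \leq \epsilon$, and $|\vec{p} - \vec{p}^M|_1 \leq \epsilon$. The overall running time is dominated by the $d+1 = \poly(b,k)$ invocations of the implementation procedure from Proposition~\ref{prop:goop}, each of which costs $\poly(b,k,1/\epsilon,{\rm rt}_G(\poly(b,k,1/\epsilon)))$.

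The main conceptual obstacle, and the key departure from~\cite{CaiDW13b}, is recognizing that one must implement $\hat{\pi}_I = (O/\beta, \vec{\pi}, \vec{p})$ in place of $\vec{\pi}_I$ itself; this is exactly where the extra $1/\beta$ factor in the objective is absorbed, and combined with $O \leq \alpha\cdot OPT + \epsilon$ from Proposition~\ref{prop:solveLP} it produces the overall $\alpha/\beta$ guarantee of Theorem~\ref{thm:objective}. Everything else --- the feasibility LP for the $\lambda_j$'s, the randomized mixing of the $M_j$'s, and the sample-to-population transfer via Proposition~\ref{prop:D'} --- is routine and directly mirrors the analogous implementation step in the $\beta=1$ reduction of~\cite{CaiDW13b}.
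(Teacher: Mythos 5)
Your overall approach is exactly the paper's: observe that $\vec{\pi}_I \in \mathrm{Conv}\{\mathcal{A}^\beta_{\{O\}}(\vec{w}_j)\}$ is equivalent to $(O/\beta,\vec\pi,\vec p)\in\mathrm{Conv}\{\mathcal{A}(\vec{w}_j)\}$, implement that rescaled target as a random mixture of the mechanisms produced by Proposition~\ref{prop:goop}, and pass from $\mathcal{D}'$ to $\mathcal{D}$ via Proposition~\ref{prop:D'}. The conceptual centerpiece you flag (implementing $\hat\pi_I$ in place of $\vec\pi_I$, absorbing the $1/\beta$ loss in the $O$ coordinate) is precisely the paper's argument.

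There is, however, a genuine gap in your treatment of ex-post individual rationality. You assert that ``for each $j$, I would invoke Proposition~\ref{prop:goop} to produce a feasible, ex-post IR mechanism $M_j$,'' but Proposition~\ref{prop:goop} promises only a \emph{feasible} mechanism whose implicit form matches $\mathcal{A}(\vec{w}_j)$; it says nothing about IR, let alone ex-post IR. The allocation rule is supplied by $G$, but the prices are a separate matter: the implicit-form LP only guarantees \emph{interim} IR, i.e.\ $\pi_i(t,t)-p_i(t)\ge 0$. If you simply charge the interim price $p_i(t)$ on every realization, a bidder can receive negative utility for particular draws of $\vec t_{-i}$ and of the mechanism's coins, so the mixture of the $M_j$'s is not automatically ex-post IR. The paper explicitly invokes a reduction from Appendix~D of~\cite{DaskalakisW12} to convert interim IR into ex-post IR, and notes that this conversion is the reason the price component $\vec p^M$ can err (rather than being matched exactly), on top of the $\mathcal{D}$-vs.-$\mathcal{D}'$ sampling error. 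Your claim that ``ex-post IR is inherited'' therefore needs to be replaced by this price-adjustment step; without it the proposition's ex-post IR guarantee is unjustified.
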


The proof of Proposition~\ref{prop:implement} is also included in Appendix~\ref{app:proofs of alpha beta results}. Combining Propositions~\ref{prop:solveLP} through~\ref{prop:implement} proves Theorem~\ref{thm:objective}. We formally show this in Appendix~\ref{app:proofs of alpha beta results} for completeness.

\section{Makespan and Fairness}\label{sec:makespan}
In this section we state bicriterion algorithmic results for minimizing makespan and maximizing fairness, as well as their implications to mechanism design. Additionally, we state a general theorem that is useful in developing $(\alpha, \beta)$-approximation algorithms via algorithms that round fractional solutions. In particular, this theorem allows us to get a $(\frac{1}{2},\tilde{O}(\sqrt{k}))$-approximation for fairness based on the algorithm of Asadpour and Saberi~\cite{AsadpourS07}. We begin by stating our $(\alpha, \beta)$-guarantees.

\begin{theorem}\label{thm:alphabetamakespan}
There is a polynomial-time $(1,\frac{1}{2})$-approximation for minimizing makespan with costs on unrelated machines. The algorithm is based on a rounding theorem of Shmoys and Tardos~\cite{ShmoysT93}.
\end{theorem}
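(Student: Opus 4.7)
Plan: Adapt the classical LP-rounding approach of Lenstra, Shmoys, and Tardos to the cost-augmented objective by guessing the largest processing time used by the optimal integral solution, then invoking Shmoys--Tardos rounding on a cost-minimizing LP. The whole algorithm is a polynomial enumeration over at most $km+1$ guesses.

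\textbf{Algorithm.} Let $\vec{x}^*$ be an optimal integral assignment, $T^* = M(\vec{x}^*)$, $C^* = C(\vec{x}^*)$, and $\hat{T}^* = \max\{p_{ij} : x^*_{ij}=1\}$; note $\hat{T}^* \le T^*$. For each candidate $\hat{T} \in \{0\}\cup\{p_{ij}\}_{i,j}$, solve the following LP in variables $(\vec{x},T)$:
\begin{align*}
\min\ T + \sum_{ij} c_{ij} x_{ij} \quad \text{s.t.} \quad &\sum_i x_{ij} = 1\ \forall j,\ \ \sum_j p_{ij}x_{ij} \le T\ \forall i,\\
&x_{ij}=0 \text{ if } p_{ij}>\hat{T},\ \ T\ge \hat{T},\ \ x_{ij}\in[0,1].
\end{align*}
Apply Shmoys--Tardos rounding~\cite{ShmoysT93} to the fractional optimum $(\vec{x}^{LP},T^{LP})$ to obtain an integral $\vec{\hat{x}}$ with $M(\vec{\hat{x}}) \le T^{LP} + \hat{T}$ and $C(\vec{\hat{x}}) \le \sum_{ij} c_{ij} x^{LP}_{ij}$. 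Among all $\hat{T}$, return the $\vec{\hat{x}}$ minimizing $\tfrac{1}{2}M(\vec{\hat{x}}) + C(\vec{\hat{x}})$.

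\textbf{Correctness.} Consider the guess $\hat{T} = \hat{T}^*$, which is among those tried. Then $(\vec{x}^*,T^*)$ is LP-feasible (since $x^*_{ij}=1$ implies $p_{ij}\le \hat{T}^*$, and $T^*\ge \hat{T}^*$), so the LP optimum satisfies $T^{LP} + \sum_{ij}c_{ij}x^{LP}_{ij} \le T^* + C^* = \mathrm{OPT}$. Since $T^{LP}\ge \hat{T}$, the rounded solution satisfies
\[
\tfrac{1}{2}M(\vec{\hat{x}}) + C(\vec{\hat{x}}) \le \tfrac{1}{2}\bigl(T^{LP}+\hat{T}\bigr) + \sum_{ij}c_{ij}x^{LP}_{ij} \le T^{LP} + \sum_{ij}c_{ij}x^{LP}_{ij} \le \mathrm{OPT},
\]
which is exactly the $(1,\tfrac{1}{2})$-guarantee demanded by~\eqref{eq:alpha-beta}.

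\textbf{Main obstacle.} The only subtlety is that $c_{ij}$ may be negative, so one must check that the Shmoys--Tardos rounding still bounds the total cost by the LP cost in that regime. The rounding reduces to finding an integral matching in a bipartite fractional-matching polytope whose vertices are integral, and the cost term is linear in $\vec{x}$; since a vertex of the matching polytope dominates the fractional LP point in cost regardless of sign conventions (the rounding picks a vertex of a face that contains $\vec{x}^{LP}$, and one can always select the cost-minimizing vertex), the inequality $C(\vec{\hat{x}})\le \sum_{ij}c_{ij}x^{LP}_{ij}$ remains valid for arbitrary (possibly negative) $c_{ij}$. This is what allows the bi-criterion reduction to go through where a pure $(\alpha,1)$-approximation was impossible.
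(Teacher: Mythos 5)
Your proof is essentially the same as the paper's: you guess the largest job size $\hat{T}$ used by the optimal integral solution, solve the same cost-augmented LP (with the constraints $x_{ij}=0$ when $p_{ij}>\hat{T}$ and $T\ge\hat{T}$), invoke the Shmoys--Tardos rounding to get makespan $\le T^{LP}+\hat{T}\le 2T^{LP}$ and cost $\le$ LP cost, and conclude the $(1,\tfrac12)$-guarantee. The only cosmetic difference is that the paper first selects the best \emph{fractional} solution over all $km$ guesses (via its Lemma on $\hat{M}(\vec{x}^*)+C(\vec{x}^*)\le \mathrm{OPT}$) and then rounds once, whereas you round every guess and select the best integral solution; both orderings give the same bound. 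Your remark about why the Shmoys--Tardos cost bound survives negative $c_{ij}$ (it reduces to choosing an integral vertex of a bipartite matching polytope, which is sign-agnostic) is a worthwhile sanity check that the paper leaves implicit.
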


\begin{theorem}\label{thm:alphabetafairness}
There is a polynomial-time algorithm for maximizing fairness with costs on unrelated machines with the following guarantees:
\begin{itemize}
\item A $(\frac{1}{2},\tilde{O}(\sqrt{k}))$-approximation based on an algorithm of Asadpour and Saberi~\cite{AsadpourS07}.
\item A $(1, m-k+1)$-approximation based on an algorithm of Bezakova and Dani~\cite{BezakovaD05}.
\end{itemize}
\end{theorem}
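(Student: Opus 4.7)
\medskip
\noindent \emph{Proof Proposal.} The overall plan is to handle each bullet by formulating the natural LP relaxation of max-min fairness with costs, solving it, and then carefully adapting the corresponding rounding procedure of Bezakova--Dani or Asadpour--Saberi so that the linear cost term is controlled while the fairness term degrades only by the claimed bi-criterion factor. In both cases the LP I would write is
\begin{align*}
\max\; T + \sum_{ij} c_{ij} x_{ij} \quad \text{s.t.}\quad \sum_{j} p_{ij} x_{ij} \;\ge\; T\ \forall i,\quad \sum_{i} x_{ij} \;\le\; 1\ \forall j,\quad x_{ij}\in[0,1],
\end{align*}
and I will denote an optimal (vertex) solution by $(T^*,\vec{x}^*)$, with LP value $\mathrm{LP}^* = T^* + \sum_{ij} c_{ij} x^*_{ij} \ge OPT$. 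The basic accounting I want to achieve at the end is $\beta F(\vec{x}) + C(\vec{x}) \ge \alpha\, \mathrm{LP}^*$.

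For the $(1, m-k+1)$-guarantee based on Bezakova--Dani, I would follow the structure of their rounding: exploit the sparsity of a vertex solution of the LP (at most $O(m+k)$ fractionally set variables), which gives a forest-like support with controllable structure, and then reassign fractional jobs so that the least-loaded machine is guaranteed load at least $T^*/(m-k+1)$. The key new ingredient is that the cost term is linear, so whenever the rounding faces a choice among alternative integral completions of a fractional block (e.g.\ pushing flow around an alternating cycle or path in the support graph), I can pick the direction that does not decrease the cost $\sum_{ij} c_{ij} x_{ij}$. Since each elementary move along a cycle changes the cost by a signed linear combination of $c_{ij}$'s, a direction with non-negative effect always exists; accumulating these yields an integral $\vec{x}$ with $C(\vec{x}) \ge \sum_{ij} c_{ij} x^*_{ij}$ and $F(\vec{x}) \ge T^*/(m-k+1)$. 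This gives $(m-k+1) F(\vec{x}) + C(\vec{x}) \ge T^* + \sum_{ij} c_{ij} x^*_{ij} = \mathrm{LP}^* \ge OPT$, which is exactly a $(1, m-k+1)$-approximation.

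For the $(\tfrac{1}{2}, \tilde{O}(\sqrt{k}))$-guarantee based on Asadpour--Saberi, I would again start from the LP above, let $F^* = T^*$ and $C^* = \sum_{ij} c_{ij} x^*_{ij}$, and then invoke the general ``rounding-to-$(\alpha,\beta)$-approximation'' template alluded to in the introduction of Section~\ref{sec:makespan}. The idea is to run the Asadpour--Saberi procedure on the fractional solution to obtain an integral $\vec{x}_1$ with $F(\vec{x}_1) \ge F^*/\tilde{O}(\sqrt{k})$, and separately compute the trivial cost-maximizing assignment $\vec{x}_2$ which assigns each job $j$ to $\arg\max_i c_{ij}$ if the best entry is positive (and leaves it unassigned otherwise), yielding $C(\vec{x}_2) \ge C^*$ and $F(\vec{x}_2) \ge 0$. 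The output is whichever of the two makes $\tilde{O}(\sqrt{k}) \cdot F(\vec{x}) + C(\vec{x})$ larger; averaging the guarantees on $\vec{x}_1$ and $\vec{x}_2$ loses a factor of $2$ but yields $\tilde{O}(\sqrt{k}) F(\vec{x}) + C(\vec{x}) \ge \tfrac{1}{2}(F^* + C^*) \ge \tfrac{1}{2}\, \mathrm{LP}^* \ge \tfrac{1}{2} OPT$, precisely the claimed $(\tfrac{1}{2}, \tilde{O}(\sqrt{k}))$-guarantee. Care is needed because $C^*$ and $C(\vec{x}_1)$ may be negative, which is exactly why the ``take the better of two'' combination is needed and why we lose the factor of $\tfrac{1}{2}$ compared to the Bezakova--Dani case, where the rounding itself preserves cost sign-by-sign.

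The main obstacle I anticipate is the cost-preservation analysis of each rounding. Bezakova--Dani and Asadpour--Saberi were designed to care only about the fairness coordinate, and injecting an unconstrained linear objective $\sum_{ij} c_{ij} x_{ij}$ (with possibly negative $c_{ij}$) into the analysis requires reopening their rounding arguments to verify that every swap/rerouting step admits a cost-non-decreasing choice, or else to reduce to this case via the two-solution trick above. For Bezakova--Dani this requires a careful but direct combinatorial argument on the fractional-support graph; for Asadpour--Saberi, whose rounding is considerably more intricate (thresholding jobs into size classes, iterative rounding of a restricted LP, and a randomized flow-decomposition step), the cleanest route is to treat the rounding as a black box and pay the factor of $2$ on $\alpha$ using the combination with the cost-only assignment, as above.
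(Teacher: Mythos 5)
Your proposal has genuine gaps in both parts, and the root cause in each case is the choice of LP relaxation. For the $(1, m-k+1)$ guarantee, the assignment LP you write has an unbounded fairness integrality gap: take $k=2$ machines, one job with $p_{1j}=p_{2j}=1$, and two trivial jobs with all-zero processing times (so $m=3$, $m-k+1=2$). The LP gives $T^*=1/2$ by splitting the big job, yet every integral assignment has $F(\vec{x})=0$, so no rounding can produce $F(\vec{x}) \geq T^*/(m-k+1) = 1/4$. Your cost-aware cycle-canceling idea is sound as far as it goes, but the fractional benchmark $T^*$ is simply the wrong target. The paper sidesteps this by guessing a threshold $T$ for the \emph{integral} optimal fairness, building a bipartite graph with edges $(i,j)$ only when $p_{ij}\geq T$, adding $m-k$ dummy machine nodes with edge weights $\max\{0,c_{ij}\}$, and finding a max-weight matching for each guess of $T$; the benchmark in the analysis is a modified integral optimum $X'$ that keeps only the largest job on each machine (whose fairness is at least $F(X)/(m-k+1)$ and which \emph{is} a matching), so the LP integrality gap never arises.

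For the $(\frac{1}{2},\tilde{O}(\sqrt{k}))$ guarantee, two issues arise. First, the assignment LP again has an unbounded fairness integrality gap; the paper modifies the \emph{configuration} LP (solved via the Bansal--Sviridenko separation-oracle approach), precisely because it is the LP for which the Asadpour--Saberi rounding gives the $\tilde{O}(\sqrt{k})$ fairness guarantee. Second, your ``take the better of two'' argument does not close: since the Asadpour--Saberi rounding is cost-oblivious, $C(\vec{x}_1)$ can be arbitrarily negative, so $\max\{\tilde{O}(\sqrt{k})F(\vec{x}_1)+C(\vec{x}_1),\ \tilde{O}(\sqrt{k})F(\vec{x}_2)+C(\vec{x}_2)\}\geq \frac{1}{2}(F^*+C^*)$ does not follow from the bounds you have established (averaging leaves an uncontrolled $\frac{1}{2}C(\vec{x}_1)$ term). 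The paper's Theorem~\ref{thm:useful} is exactly the missing ingredient: it decomposes $C=C^++C^-$, uses the structural property $\mathbb{E}[y_{ij}]\leq x_{ij}$, enforced by an extra cost-aware cycle-removal step (Lemma~\ref{lem:cycles}), to ensure $C^-(\vec{y})\geq C^-(\vec{x})$, and then runs a two-case argument on whether $\beta F(\vec{x})+C^-(\vec{x})$ or $C^+(\vec{x})$ carries most of $\beta F(\vec{x})+C(\vec{x})$. You correctly flag the negativity issue, but the ingredient that actually resolves it is absent from your write-up.
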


Next, we state a theorem useful in the analysis of algorithms that round fractional solutions. In the theorem statement below, $\vec{x} \in [0,1]^{km}$ denotes a fractional assignment of jobs to machines, and $\vec{y}$ denotes a randomly sampled assignment in $\{0,1\}^{km}$ according to some rounding procedure. Note that when we write $F(\vec{y})$, we mean the expected value of the random variable $F(\vec{y})$, and \emph{not} the fairness computed with respect to the fractional assignment $\mathbb{E}[y_{ij}]$. Finally, $\vec{v} \in \{0,1\}^{km}$ denotes the integral allocation that maximizes $C(\vec{w})$ over all $\vec{w} \in \{0,1\}^{km}$. In other words, $\vec{v}$ assigns each job to the machine with the highest non-negative cost, if one exists (and nowhere if none exists). We emphasize again that Theorem~\ref{thm:useful} is what enables the first bullet in Theorem~\ref{thm:alphabetafairness} above, and will likely have applications to the analysis of other potential $(\alpha,\beta)$-approximation algorithms. We also note that Theorem~\ref{thm:useful} applies to any maximization objective, not just fairness.

\begin{theorem}\label{thm:useful}
Let $\vec{x} \in [0,1]^{km}$ be a fractional assignment of jobs to machines that is an $(\alpha, \beta)$-approximation with respect to the optimal integral assignment (that is, $\beta F(\vec{x}) + C(\vec{x}) \geq \alpha OPT$), and $\vec{y}$ a random variable of assignments supported on $\{0,1\}^{km}$ satisfying $z\cdot F(\vec{y}) \geq  F(\vec{x})$, for some $z\ge1$, and $\mathbb{E}[y_{ij}] \leq x_{ij}$ for all $i,j$. Then for any $\gamma \in [0,1]$, at least one of the following is true:
\begin{itemize}
\item $z \cdot \beta F(\vec{y}) + C(\vec{y}) \geq \gamma (\beta F(\vec{x}) + C(\vec{x})) \geq \gamma\cdot \alpha OPT$. That is, $\vec{y}$ is a $(\gamma \alpha, z \beta)$-approximation.
\item Or $F(\vec{v}) + C(\vec{v}) \geq (1-\gamma)(\beta F(\vec{x}) + C(\vec{x})) \geq (1-\gamma) \cdot \alpha OPT$. That is, $\vec{v}$ is a $((1-\gamma)\alpha, 1)$-approximation. $\vec{v}$ is the assignment maximizing $C(\vec{w})$ over all feasible $\vec{w} \in \{0,1\}^{km}$.
\end{itemize}
\end{theorem}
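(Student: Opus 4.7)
The plan is to split the cost function into its positive and negative parts and case-split on where most of the ``value'' of the fractional solution lives. Define $C^+(\vec{x}) = \sum_{ij:\, c_{ij}\geq 0} x_{ij}c_{ij}$ and $C^-(\vec{x}) = \sum_{ij:\, c_{ij}<0} x_{ij}c_{ij}$, so that $C(\vec{x}) = C^+(\vec{x}) + C^-(\vec{x})$. The $\vec{v}$ branch will catch the positive-cost mass, and the $\vec{y}$ branch will catch the rest.

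First I would establish two elementary facts about the backup $\vec{v}$. Since $\vec{v}$ assigns each job $j$ to an arg-max of $c_{ij}$ over $i$ when that max is nonnegative (and nowhere otherwise), $C(\vec{v}) = \sum_j \max(0,\max_i c_{ij})$. Comparing job by job and using $\sum_i x_{ij} \le 1$ gives $\sum_{i:\, c_{ij}\geq 0} x_{ij} c_{ij} \leq \max(0,\max_i c_{ij})$, so $C(\vec{v}) \geq C^+(\vec{x})$. Also $F(\vec{v}) \geq 0$ since fairness is a minimum of non-negative loads. Hence $F(\vec{v}) + C(\vec{v}) \geq C^+(\vec{x})$.

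Next, for the rounding $\vec{y}$, I would interpret $F(\vec{y})$ and $C(\vec{y})$ as expectations as the theorem instructs. The hypothesis $z F(\vec{y}) \geq F(\vec{x})$ gives $z\beta F(\vec{y}) \geq \beta F(\vec{x})$. For $C(\vec{y})$, I would use $\mathbb{E}[y_{ij}] \leq x_{ij}$ asymmetrically: on coordinates with $c_{ij} < 0$ it yields $\mathbb{E}[y_{ij}]c_{ij} \geq x_{ij}c_{ij}$, and summing gives $\mathbb{E}[C^-(\vec{y})] \geq C^-(\vec{x})$; on coordinates with $c_{ij} \geq 0$ I only use the trivial bound $\mathbb{E}[y_{ij}] c_{ij} \geq 0$. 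Adding these, $C(\vec{y}) \geq C^-(\vec{x})$, so
\[
z\beta F(\vec{y}) + C(\vec{y}) \;\geq\; \beta F(\vec{x}) + C^-(\vec{x}).
\]
Now case-split on whether $C^+(\vec{x}) \geq (1-\gamma)\bigl(\beta F(\vec{x}) + C(\vec{x})\bigr)$. If yes, the bound on $\vec{v}$ from the second paragraph yields the second bullet directly. If no, then $\beta F(\vec{x}) + C^-(\vec{x}) = \bigl(\beta F(\vec{x}) + C(\vec{x})\bigr) - C^+(\vec{x}) > \gamma\bigl(\beta F(\vec{x}) + C(\vec{x})\bigr)$, and the bound on $\vec{y}$ above yields the first bullet.

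The main obstacle is conceptual rather than technical: the one-sidedness of the hypothesis $\mathbb{E}[y_{ij}] \leq x_{ij}$ means that in bounding $C(\vec{y})$ from below we can preserve the negative-cost contribution of $\vec{x}$ but can in principle lose all of the positive-cost contribution to the rounding. The role of the deterministic ``max-cost'' backup $\vec{v}$ is precisely to recover whatever positive-cost mass might be shed; recognizing that $\vec{v}$ dominates $C^+(\vec{x})$ per job while still having $F(\vec{v}) \geq 0$ is what makes the two branches fit together into the claimed dichotomy for every $\gamma \in [0,1]$.
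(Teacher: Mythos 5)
Your proof is correct and takes essentially the same route as the paper's: the same decomposition $C = C^+ + C^-$, the same observations that $\vec{v}$ dominates $C^+(\vec{x})$ while the rounding preserves $C^-(\vec{x})$, and the same dichotomy on $\gamma$. Your version spells out a few steps the paper states more tersely (e.g.\ the per-job inequality $\sum_{i:\,c_{ij}\ge 0} x_{ij}c_{ij} \le \max(0,\max_i c_{ij})$ using $\sum_i x_{ij}\le 1$), but the substance is identical.
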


Proofs of Theorems~\ref{thm:alphabetamakespan},~\ref{thm:alphabetafairness}, and~\ref{thm:useful} can be found in Appendix~\ref{app:makespanfairness}. We conclude this section by stating formally our results on truthful mechanisms for scheduling on unrelated machines. Theorems~\ref{thm:truthfulmakespan} and~\ref{thm:truthfulfairness} are direct corollaries of Theorem~\ref{thm:objective} combined with Theorems~\ref{thm:alphabetamakespan} and~\ref{thm:alphabetafairness}.

\begin{theorem}\label{thm:truthfulmakespan}
There is a polynomial-time 2-approximation algorithm for truthfully minimizing makespan on unrelated machines (formally the problem {\rm BMeD(}$\{0,1\}^{km}$, \{additive functions with coefficients $p_{ij} \in [0,1]$\}, Makespan{\rm )}. For any desired $\epsilon>0$, \mattnote{the output mechanism  is $\epsilon$-BIC, and has expected makespan at most $2OPT + \epsilon$ with probability at least $1-\exp(\poly(b,k,m,1/\epsilon))$. Furthermore, the output mechanism is feasible, can be implemented in time $\poly(b,k,m,1/\epsilon)$, and is ex-post individually rational with probability $1$. The runtime of the algorithm is $\poly(b,k,m,1/\epsilon)$. }
\end{theorem}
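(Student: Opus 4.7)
The plan is to derive Theorem~\ref{thm:truthfulmakespan} as a direct corollary of the black-box reduction (Theorem~\ref{thm:objective}) composed with the bi-criterion algorithmic result (Theorem~\ref{thm:alphabetamakespan}). Concretely, I would instantiate Theorem~\ref{thm:objective} with $\mathcal{F}=\{0,1\}^{km}$, $\mathcal{V}$ equal to the class of additive functions with coefficients $p_{ij}\in[0,1]$, and $\mathcal{O}=M$ (makespan, a minimization objective, extended to outcomes outside the scheduling-feasible set via the $M(\vec{x})=\infty$ convention from Section~\ref{sec:prelims}). The parameters will be $\alpha=1$ and $\beta=\tfrac{1}{2}$.

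First I would verify that the algorithmic subroutine required by Theorem~\ref{thm:objective}, namely an $(\alpha,\beta)$-approximation for GOOP$(\mathcal{F},\mathcal{V},M)$, coincides with the problem solved in Theorem~\ref{thm:alphabetamakespan}. When $\mathcal{V}$ is the additive class, the closure $\mathcal{V}^\times$ under addition and scalar multiplication is exactly the set of linear functions $f(\vec{x})=\sum_{ij}c_{ij}x_{ij}$ with arbitrary (possibly negative) $c_{ij}$, and each $g_i\in\mathcal{V}$ is simply a vector of nonnegative processing times $\vec p_i$. Hence the GOOP objective $w\cdot M((g_1,\ldots,g_k),\vec x)+f(\vec x)$ is precisely scheduling unrelated machines with (possibly negative) costs, up to absorbing the multiplier $w$ into the $g_i$'s. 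Therefore Theorem~\ref{thm:alphabetamakespan} supplies exactly the $(1,\tfrac{1}{2})$-approximation that the reduction requires.

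Next I would check the technical hypotheses of Theorem~\ref{thm:objective}: the requirement that $t_i(X)\in[0,1]$ and that $\mathcal{O}$ be $\poly(k)$-bounded. Since each $p_{ij}\in[0,1]$, each $t_i$ is bounded by $m$ on any valid assignment, and a uniform rescaling by $1/m$ normalizes both the types and the (natively $1$-bounded) makespan without changing the approximation ratio, while inflating the bit complexity parameter $b$ by only an additive $O(\log m)$. The $\epsilon$-BIC, ex-post IR, feasibility, and high-probability guarantees in the theorem statement are then inherited directly from the corresponding clauses of Theorem~\ref{thm:objective} (and Proposition~\ref{prop:implement}).

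Finally I would read off the approximation ratio and runtime. Theorem~\ref{thm:objective} produces a mechanism whose expected makespan is at most $(\alpha/\beta)\mathrm{OPT}+\epsilon=2\cdot\mathrm{OPT}+\epsilon$, using $\poly(b,k,1/\epsilon)$ calls to the GOOP approximator. Because Theorem~\ref{thm:alphabetamakespan} is itself polynomial in $b,k,m$, the composite runtime is $\poly(b,k,m,1/\epsilon)$, matching the statement. There is no real obstacle here beyond notational bookkeeping: the work has already been done in Sections~\ref{sec:BMeD and approximation} and~\ref{sec:makespan}, and the only thing one must be careful about is that the objective $M$ and the type class $\mathcal{V}$ on the mechanism side produce, via $\mathcal{V}^\times$, precisely the cost-augmented algorithmic problem handled by Theorem~\ref{thm:alphabetamakespan}.
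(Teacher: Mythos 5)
Your proposal is correct and follows exactly the route the paper takes: the paper states Theorem~\ref{thm:truthfulmakespan} as a direct corollary of Theorem~\ref{thm:objective} (the black-box reduction from BMeD to GOOP) combined with Theorem~\ref{thm:alphabetamakespan} (the $(1,\tfrac12)$-approximation for makespan with costs), yielding the $\alpha/\beta=2$ ratio. Your additional remarks about $\mathcal{V}^\times$ being the full class of linear cost functions and about the normalization of types are sound bookkeeping that the paper leaves implicit.
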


\begin{theorem}\label{thm:truthfulfairness}
There is a polynomial-time $\min\{\tilde{O}(\sqrt{k}),m-k+1\}$-approximation algorithm for truthfully maximizing fairness on unrelated machines (formally the problem {\rm BMeD(}$\{0,1\}^{km}$, \{additive functions with coefficients $p_{ij} \in [0,1]$\}, Fairness{\rm )}. For any desired $\epsilon>0$, \mattnote{the output mechanism  is $\epsilon$-BIC, and has expected fairness at least $\min\{\tilde{O}(\sqrt{k}),m-k+1\}OPT + \epsilon$ with probability at least $1-\exp(\poly(b,k,m,1/\epsilon))$. Furthermore, the output mechanism is feasible, can be implemented in time $\poly(b,k,m,1/\epsilon)$, and is ex-post individually rational with probability $1$. The runtime of the algorithm is $\poly(b,k,m,1/\epsilon)$. }
\end{theorem}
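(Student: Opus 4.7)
\smallskip
The plan is to derive Theorem~\ref{thm:truthfulfairness} as a direct corollary by plugging each of the two bi-criterion algorithms supplied by Theorem~\ref{thm:alphabetafairness} into the mechanism-to-algorithm reduction of Theorem~\ref{thm:objective} (in its maximization form guaranteed by Remark~\ref{rem:general}). To check that the hypotheses line up: the BMeD instance has allowable types in $\mathcal{V}=\{\text{additive functions with coefficients } p_{ij}\in[0,1]\}$, so every type satisfies $t_i(X)\in[0,1]$, and the fairness objective $F(\vec{x})=\min_i\sum_j p_{ij}x_{ij}$ is $1$-bounded as required. Moreover, because $\mathcal{V}$ is closed under addition and scalar multiplication by non-negative reals, the associated GOOP instances produced by the reduction have $f\in\mathcal{V}^{\times}$ expressible as an additive function with arbitrary (possibly negative) real coefficients $c_{ij}$, i.e., precisely the fairness-with-costs problem handled by Theorem~\ref{thm:alphabetafairness}.

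\smallskip
Given this, I would instantiate Theorem~\ref{thm:objective} twice. First, feeding it the $(\tfrac{1}{2},\tilde{O}(\sqrt{k}))$-approximation algorithm yields a mechanism whose expected fairness is at least $\tfrac{\alpha}{\beta}\,OPT-\epsilon=\tilde{\Omega}(1/\sqrt{k})\cdot OPT-\epsilon$, matching a $\tilde{O}(\sqrt{k})$-approximation. Second, feeding it the $(1,m-k+1)$-approximation algorithm yields a mechanism whose expected fairness is at least $OPT/(m-k+1)-\epsilon$. Since the quantities $\tilde{O}(\sqrt{k})$ and $m-k+1$ are known at the outset from $k$ and $m$ alone, the algorithm can deterministically select which of the two constructions to run, yielding the claimed $\min\{\tilde{O}(\sqrt{k}),m-k+1\}$ approximation factor.

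\smallskip
The remaining structural properties asserted in Theorem~\ref{thm:truthfulfairness} — feasibility, ex-post individual rationality (with probability $1$), $\epsilon$-BIC, polynomial-time implementation, and the high-probability guarantee on the objective — are all direct outputs of Theorem~\ref{thm:objective}. The runtime and bit-complexity accounting is routine: the parameter $b$ bounding the size of the type space and probabilities, together with $k$ and $m$, polynomially bounds the bit complexity of all GOOP instances passed to the subroutines from Theorem~\ref{thm:alphabetafairness}, so the $\poly(b,k,1/\epsilon,\mathrm{rt}_G(\poly(b,k,1/\epsilon)))$ runtime promised by Theorem~\ref{thm:objective} collapses to the stated $\poly(b,k,m,1/\epsilon)$. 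Since no new technical obstacle arises, the hardest part of the proof is really just the bookkeeping to verify that the maximization-side sign conventions of Remark~\ref{rem:general} line up with the orientations $\alpha\le 1\le\beta$ in Theorem~\ref{thm:alphabetafairness}, which is immediate from the definitions.
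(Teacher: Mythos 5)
Your proposal is correct and mirrors the paper's proof exactly: the paper also derives Theorem~\ref{thm:truthfulfairness} by observing it is a direct corollary of Theorem~\ref{thm:objective} (in its maximization form) combined with the two $(\alpha,\beta)$-approximation algorithms of Theorem~\ref{thm:alphabetafairness}, running whichever yields the better factor based on $k$ and $m$. Your bookkeeping of the hypotheses (boundedness, closure of $\mathcal{V}$ giving arbitrary-sign cost coefficients, and the runtime collapse to $\poly(b,k,m,1/\epsilon)$) is exactly the verification the paper leaves implicit.
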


\bibliographystyle{abbrv}
\bibliography{costasbibold} 
\appendix
\section{Proof of Theorems~\ref{thm:objective} and~\ref{thm:LPalphabeta}} \label{app:proofs of alpha beta results}
\subsection{Proof of Theorem~\ref{thm:LPalphabeta}}\label{sec:min}
Here we prove Theorem~\ref{thm:LPalphabeta} for minimization algorithms, noting that the proof for maximization algorithms is nearly identical after switching $\leq$ for $\geq$ and $\min$ for $\max$ where appropriate. Much of the proof is similar to that of Theorem~H.1 in~\cite{CaiDW13}. We include it here for completeness, however we refer the reader to~\cite{CaiDW13} for the proof of some technical lemmas. We begin by defining the weird separation oracle in Figure~\ref{fig:WSO}. This is identical to the weird separation oracle used in~\cite{CaiDW13}, except that we use $\mathcal{A}^\beta_S$ instead of just $\mathcal{A}$. 

		\begin{figure}[h!]
	\colorbox{MyGray}{
	\begin{minipage}{\textwidth} {	
		$WSO(\vec{y})=$
	\begin{itemize}	
	\item ``\textbf{Yes}'' if the ellipsoid algorithm with $N$ iterations\footnote{The appropriate choice of $N$ for our use of $WSO$ is provided in Corollary~5.1 of Section~5 in~\cite{CaiDW13b}. $N$ is polynomial in the appropriate quanitites.} outputs ``infeasible'' on the following problem:
			 
			  \underline{\textbf{variables:}} $\vec{w}, t$;\\
			 \underline{\textbf{constraints:}}
				\begin{itemize}
		 		\item $\vec{w} \in [-1,1]^d;$
		 		\item $t \geq \vec{y} \cdot \vec{w}+ \delta$;\footnote{The appropriate choice of $\delta$ for our use of $WSO$ is provided in Lemma~5.1 of Section~5 in~\cite{CaiDW13}. The bit complexity of $\delta$ is polynomial in the appropriate quantities.}
		 		\item $\widehat{WSO}(\vec{w},t) = $
		 		\begin{itemize}
		 		\item ``yes'' if $t \leq \mathcal{A}^\beta_S(\vec{w}) \cdot \vec{w}$;\footnote{Notice that the set $\{(\vec{w},t)|\widehat{WSO}(\vec{w},t) =$ ``Yes''$\}$ is not necessarily convex or even connected.}
		 		\item the violated hyperplane $t' \leq \mathcal{A}^\beta_S(\vec{w})\cdot \vec{w}'$ otherwise.
		 		\end{itemize}
				 		\end{itemize}
	\item If a feasible point $(t^*,\vec{w}^*)$ is found, output the violated hyperplane $\vec{w}^* \cdot \vec{x} \leq t^*$.
	\end{itemize}		}
			\end{minipage}} \caption{A ``weird'' separation oracle.}\label{fig:WSO}
	\end{figure}

\begin{lemma}\label{lem:one}
If $\mathcal{A}^\beta_S$ is an $(\alpha, \beta, S)$-minimization algorithm for the polytope $P$, then every halfspace output by $WSO$ contains $\alpha P$.
\end{lemma}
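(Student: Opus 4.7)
The proof plan is a short, direct unpacking of the definition of $WSO$: whenever $WSO$ outputs a hyperplane, the constraints that had to hold during its construction immediately imply that the hyperplane's halfspace contains $\alpha P$. So I would not need any additional machinery beyond the $(\alpha,\beta,S)$-minimization guarantee of $\mathcal{A}_S^\beta$.

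Concretely, I would start by noting that $WSO$ can only output a hyperplane along the second branch, i.e., when Ellipsoid locates a feasible point $(t^*,\vec{w}^*)$ of the inner constraint system. Feasibility requires $\widehat{WSO}(\vec{w}^*,t^*)=$~``yes'', which by definition means
$$ t^* \;\leq\; \mathcal{A}_S^\beta(\vec{w}^*)\cdot \vec{w}^*. $$
Next, I would apply the $(\alpha,\beta,S)$-minimization property of $\mathcal{A}$ to the direction $\vec{w}^*$, giving
$$ \mathcal{A}_S^\beta(\vec{w}^*)\cdot \vec{w}^* \;\leq\; \alpha \min_{\vec{x}\in P}\{\vec{x}\cdot \vec{w}^*\}. $$
Chaining the two inequalities yields $t^* \leq \alpha \min_{\vec{x}\in P}\{\vec{x}\cdot \vec{w}^*\}$. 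Because $\alpha\geq 1>0$ in the minimization setting, scaling by $\alpha$ commutes with $\min$, so this last bound equals $\min_{\vec{x}'\in \alpha P}\{\vec{w}^*\cdot \vec{x}'\}$. Hence every $\vec{x}'\in\alpha P$ satisfies $\vec{w}^*\cdot \vec{x}'\geq t^*$, which is exactly the statement that $\alpha P$ lies in the halfspace described by $(\vec{w}^*,t^*)$ (reading the ``violated hyperplane'' in the $WSO$ output as the constraint that $\alpha P$ obeys and $\vec{y}$ is meant to violate).

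I do not expect a real obstacle here: the only step that needs any care is the commutation of $\alpha$ with $\min$, which is why the statement is restricted to $\alpha\geq 1>0$. The same reasoning works verbatim for the maximization case after swapping $\min$ with $\max$ and reversing the defining inequality of $\widehat{WSO}$; the sign of $\alpha$ (now $\alpha\leq 1$ but still positive) still lets $\alpha$ pass through the extremum. Note that nothing in the argument uses $\beta$ beyond what is already baked into the definition of $\mathcal{A}_S^\beta$, so the bi-criterion parameter plays no role in Lemma~1 itself; it only becomes relevant in later properties of $WSO$ such as Property~2, where $\vec{x}$ is expressed as a convex combination of points of the form $\mathcal{A}_S^\beta(\vec{w}_i)$.
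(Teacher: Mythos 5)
Your proof is correct and is essentially identical to the paper's: chain $t^* \le \mathcal{A}^\beta_S(\vec{w}^*)\cdot\vec{w}^*$ (from $\widehat{WSO}$ accepting) with the $(\alpha,\beta,S)$-minimization guarantee, then rewrite $\alpha\min_{\vec{x}\in P}\{\vec{x}\cdot\vec{w}^*\}$ as $\min_{\vec{x}\in\alpha P}\{\vec{x}\cdot\vec{w}^*\}$. One tiny nitpick: the commutation of $\alpha$ with $\min$ needs only $\alpha>0$, which is exactly what Theorem~\ref{thm:LPalphabeta} assumes, not the stronger $\alpha\ge 1$.
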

\begin{proof}
If $WSO$ outputs a halfspace $(\vec{w}^*, t^*)$, then we must have $\widehat{WSO}(\vec{w}^*,t^*) = \text{``yes''}$, implying that $t^* \leq \mathcal{A}^\beta_S(\vec{w}^*) \cdot \vec{w}^*$. Because $\mathcal{A}$ is an $(\alpha, \beta, S)$-approximation, we know that $\mathcal{A}^\beta_S(\vec{w}^*) \cdot \vec{w}^* \leq \alpha \min_{\vec{x} \in P} \{\vec{x} \cdot \vec{w}^*\} = \min_{\vec{x} \in \alpha P} \{\vec{x} \cdot \vec{w}^*\}$. Therefore, every $\vec{x} \in \alpha P$ satisfies $t^* \leq \vec{w}^* \cdot \vec{x}$, and the halfspace contains $\alpha P$. 
\end{proof}

\begin{lemma}\label{lem:three}
Let $b$ be the bit complexity of $\vec{y}$, $\ell$ an upper bound on the bit complexity of $\mathcal{A}^\beta_S(\vec{w})$ for all $\vec{w} \in [-1,1]^d$. Then on input $\vec{y}$, $WSO$ terminates in time $\poly(d, b, \ell, rt_\mathcal{A}(\poly(d,b,\ell)))$ and makes at most $\poly(d,b,\ell)$ queries to $\mathcal{A}$.
\end{lemma}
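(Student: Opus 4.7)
The plan is to invoke the standard complexity analysis of the ellipsoid method applied to the inner feasibility problem that defines $WSO(\vec{y})$. The inner problem has $d+1$ variables $(\vec{w}, t)$ constrained to lie in $[-1,1]^d \times \mathbb{R}$ with the additional inequality $t \geq \vec{y} \cdot \vec{w} + \delta$ and the ``weird'' constraint checked by $\widehat{WSO}$. Since $\|\vec{w}\|_\infty \leq 1$ and any relevant value of $t$ satisfies $|t| \leq \poly(\ell) + \poly(b,d)$ (the relevant $t$ values are bounded by $\max_{\vec{w}}|\mathcal{A}^\beta_S(\vec{w})\cdot \vec{w}|$ plus $\vec{y}\cdot \vec{w}$, which are both polynomially bounded in the inputs), this region is contained in a ball whose radius has polynomial bit complexity.

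First I would verify that each invocation of $\widehat{WSO}(\vec{w},t)$ makes exactly one call to $\mathcal{A}$ (to compute $\mathcal{A}^\beta_S(\vec{w})$), does a single inner product, and then either answers ``yes'' or returns the hyperplane $t' \leq \mathcal{A}^\beta_S(\vec{w}) \cdot \vec{w}'$, whose coefficients have bit complexity at most $\ell + O(1)$ by hypothesis. Next, I would appeal to the standard fact that, starting from the enclosing box above, ellipsoid iterates have bit complexity bounded by a polynomial in $d$, the bit complexity of the initial box, the bit complexity of the returned hyperplanes, and the iteration count. Thus every query $\vec{w}$ ever passed to $\mathcal{A}$ has bit complexity $\poly(d, b, \ell)$, so each $\mathcal{A}$-call runs in time $\mathrm{rt}_{\mathcal{A}}(\poly(d,b,\ell))$.

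For the iteration count, I would cite the choice of $N = \poly(d, b, \ell)$ provided by the footnoted result from \cite{CaiDW13b}, which ensures either a feasibility certificate or an infeasibility declaration after polynomially many iterations. Since the outer $WSO$ does $N$ iterations, each triggering one $\widehat{WSO}$ call and hence one $\mathcal{A}$ call, the total number of $\mathcal{A}$ queries is $\poly(d,b,\ell)$ and the total runtime is $\poly(d, b, \ell, \mathrm{rt}_{\mathcal{A}}(\poly(d,b,\ell)))$, as claimed. The final bookkeeping step — either converting the certificate of infeasibility into a ``Yes'' output, or extracting the feasible witness $(\vec{w}^*, t^*)$ and returning the hyperplane $\vec{w}^*\cdot \vec{x} \leq t^*$ — is clearly polynomial.

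The main obstacle (and really the only one) is the careful bit-complexity bookkeeping: I need to certify that ellipsoid iterates do not blow up, that each separating hyperplane returned by $\widehat{WSO}$ has bit complexity $\poly(d,\ell)$, and that the chosen $N$ suffices both to detect infeasibility and to keep iterates polynomially bounded. Once that is in place, the statement follows mechanically. Since Theorem~H.1 of \cite{CaiDW13} establishes precisely the analogous bound for the case $\beta = 1$, and the only change here is that hyperplanes come from $\mathcal{A}^\beta_S$ rather than $\mathcal{A}$ — which at most adds a constant number of bits per coordinate because $\beta$ is a fixed input parameter — the argument essentially reduces to citing the earlier analysis with this cosmetic modification.
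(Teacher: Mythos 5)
Your proposal is correct and takes essentially the same route as the paper: both arguments reduce to observing that each $\widehat{WSO}$ invocation makes exactly one call to $\mathcal{A}$, bounding the number of $\widehat{WSO}$ invocations by standard ellipsoid bit-complexity bookkeeping, and citing the prior work (\cite{CaiDW13}/\cite{CaiDW13b}) for the choice of $N$ and $\delta$, noting that replacing $\mathcal{A}$ by $\mathcal{A}^\beta_S$ is cosmetic. The paper's own proof is terser, essentially a one-paragraph pointer to Corollary~5.1 and Lemma~5.1 of \cite{CaiDW13}; you spell out more of the bit-complexity accounting that those external results encapsulate, but the argument is the same.
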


\begin{proof}
It is clear that $\widehat{WSO}$ is queried at most $\poly(N, d, b, \ell, BC(\delta))$ times, where $BC(\delta)$ is the bit complexity of $\delta$. Each execution of $\widehat{WSO}$ makes one call to $\mathcal{A}$. So as long as $N$ and the bit complexity of $\delta$ are both polynomial in $\poly(d, b, \ell)$, the lemma holds. This is shown in Corollary~5.1 of Section~5 and Lemma~5.1 of Section~5 in~\cite{CaiDW13}, and we omit further details here. 
\end{proof}

\begin{lemma}\label{lem:four}
Let $Q$ be an arbitrary convex region in $\mathbb{R}^d$ described via some separation oracle, and $OPT = \min_{\vec{y} \in \alpha P \cap Q}\{\vec{c} \cdot \vec{y}\}$ for some vector $\vec{c}$. Let also $\vec{z}$ be the output of the Ellipsoid algorithm for minimizing $\vec{c} \cdot \vec{y}$ over $\vec{y} \in \alpha P \cap Q$, but using $WSO$ as a separation oracle for $\alpha P$ instead of a standard separation oracle (i.e. use the exact parameters for Ellipsoid as if $WSO$ was a valid separation oracle for $\alpha P$, and still use a standard separation oracle for $Q$). Then $\vec{z} \cdot \vec{c} \leq OPT$. 
\end{lemma}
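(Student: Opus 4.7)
The plan is to show that even though $WSO$ is not a true separation oracle for $\alpha P$ (it may erroneously accept points that lie outside $\alpha P$), the only way it can ``err'' is to be too generous, and this generosity is harmless for a minimization sub-routine of Ellipsoid: it only enlarges the effective feasible region, which can only lower (not raise) the optimum found. The central leverage is Lemma~\ref{lem:one}, which says that every halfspace $WSO$ ever emits contains $\alpha P$.

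First, I would fix the usual binary-search-based implementation of Ellipsoid for linear optimization. To minimize $\vec{c}\cdot\vec{y}$ over $\alpha P\cap Q$, one guesses a threshold $t$ and asks Ellipsoid to certify either a feasible point $\vec{y}\in \alpha P\cap Q$ with $\vec{c}\cdot \vec{y}\le t$, or infeasibility of $\alpha P\cap Q\cap\{\vec{y}:\vec{c}\cdot\vec{y}\le t\}$. In each round Ellipsoid queries the current center and, if the center is not already feasible, requests a cutting halfspace from one of three sources: (i) the objective halfspace $\vec{c}\cdot\vec{y}\le t$, (ii) the standard separation oracle for $Q$, or (iii) $WSO$, used as a stand-in separator for $\alpha P$. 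In the normal analysis, each such cut is a valid halfspace containing $\alpha P\cap Q\cap\{\vec{c}\cdot\vec{y}\le t\}$.

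Next, I would verify that this validity continues to hold when $WSO$ replaces a true separator for $\alpha P$. Types (i) and (ii) are unchanged. For type (iii), by Lemma~\ref{lem:one} every halfspace emitted by $WSO$ contains $\alpha P$, hence contains $\alpha P\cap Q\cap\{\vec{c}\cdot\vec{y}\le t\}$. Thus all cuts produced during the execution, regardless of source, remain valid relaxations of the true feasible region.

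The key conclusion then follows by the contrapositive of the standard infeasibility certificate of Ellipsoid: if Ellipsoid declares infeasibility at threshold $t$, it has exhibited a finite collection of valid halfspaces whose intersection (once sufficiently contained in a small ellipsoid) is empty; since each contains $\alpha P\cap Q\cap\{\vec{y}:\vec{c}\cdot\vec{y}\le t\}$, this target set is also empty, so $OPT>t$. Hence the smallest threshold for which Ellipsoid succeeds (modulo the standard $\poly$-precision of binary search) is at most $OPT$, and the final output $\vec{z}$ satisfies $\vec{c}\cdot\vec{z}\le OPT$. The second inequality, $\vec{c}\cdot(\tfrac{1}{\beta}\vec{z}_S,\vec{z}_{-S})\le \tfrac{1}{\beta}OPT$ stated in Property~\ref{property:converse}, is immediate once one recalls that $\vec{c}_{-S}=\vec{0}$, so rescaling the $S$-coordinates by $1/\beta$ just divides the objective by $\beta$.

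The only real subtlety—and where I expect the main pitfall—is in arguing that feasibility declarations by Ellipsoid, where $WSO$ eventually says ``yes,'' do not require $\vec{z}\in \alpha P$: the statement of the lemma asks only about the objective value attained, not about membership in $\alpha P$, so this generosity of $WSO$ is precisely what is permitted. Everything else mirrors the analysis of Theorem~H.1 of~\cite{CaiDW13}, with the substitution of $\mathcal{A}^\beta_S$ for $\mathcal{A}$ invoked only through Lemma~\ref{lem:one}.
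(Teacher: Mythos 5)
Your proof is correct and takes essentially the same approach as the paper: both argue via Lemma~\ref{lem:one} that every cut emitted during the Ellipsoid run (from the objective, from $Q$'s separation oracle, or from $WSO$) is a valid halfspace containing $\alpha P \cap Q \cap \{\vec{c}\cdot\vec{y}\le C\}$, so Ellipsoid cannot certify infeasibility at any threshold $C \geq OPT$; you phrase this via the contrapositive (an infeasibility certificate at threshold $t$ forces $OPT>t$) while the paper phrases it directly, but the substance is identical.
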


\begin{proof}
When the Ellipsoid algorithm tries to minimize $\vec{c} \cdot \vec{x}$, it does a binary search over possible values $C$, and checks whether or not there is a point $\vec{x}$ satisfying $\vec{x} \cdot \vec{c} \leq C$, $\vec{x} \in Q$, and $WSO(\vec{x}) = \text{``yes''}$. If there is a point $\vec{x}$ satisfying $\vec{x} \cdot \vec{c} \leq C$, $\vec{x} \in Q$, and $\vec{x} \in \alpha P$, then clearly every halfspace output by the separation oracle for $Q$ contains $\vec{x}$, and so does the halfspace $\vec{y} \cdot \vec{c} \leq C$. Furthermore, by Lemma~\ref{lem:one}, every halfspace output by $WSO$ contains $\vec{x}$ as well. Therefore, if $OPT \leq C$, the Ellipsoid algorithm using $WSO$ will find a feasible point and continue its binary search. Therefore, the algorithm must conclude with a point $\vec{z}$ satisfying $\vec{z} \cdot \vec{c} \leq OPT$.
\end{proof}

\begin{lemma}\label{lem:two}
Whenever $WSO(\vec{y}) =$ ``yes'' for some input $\vec{y}$, the execution of $WSO$ explicitly finds directions $\vec{w}_1,\ldots,\vec{w}_{d+1}$ such that $\vec{y} \in Conv\{\mathcal{A}^\beta_S(\vec{w}_1),\ldots,\mathcal{A}^\beta_S(\vec{w}_{d+1})\}$. 
\end{lemma}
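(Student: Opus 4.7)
My plan is to unpack what ``$WSO(\vec{y}) = $ yes'' actually certifies about the directions $\vec{w}_j$ that were fed into $\widehat{WSO}$ during the run of the inner ellipsoid, and then extract $\vec{y}$ as a convex combination of the corresponding points $\mathcal{A}^\beta_S(\vec{w}_j)$ via LP duality. Concretely, when $WSO(\vec{y}) = $ yes, the inner ellipsoid reported infeasibility on the system with variables $(\vec{w}, t)$ and constraints $\vec{w}\in[-1,1]^d$, $t \geq \vec{y}\cdot\vec{w} + \delta$, and $\widehat{WSO}(\vec{w},t) = $ yes. Let $\vec{w}_1,\ldots,\vec{w}_M$ be the directions where $\widehat{WSO}$ answered ``no''; for each such query the oracle returned the cut $t' \leq \mathcal{A}^\beta_S(\vec{w}_j)\cdot\vec{w}'$, which the inner ellipsoid then used as a separating halfspace. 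The crucial point is that from the ellipsoid's perspective these cuts are genuine halfspaces, so infeasibility of its view is equivalent to infeasibility of the honest linear system formed by (box) $\cup$ $\{t \geq \vec{y}\cdot\vec{w} + \delta\}$ $\cup$ $\{t \leq \mathcal{A}^\beta_S(\vec{w}_j)\cdot\vec{w}\}_{j\leq M}$.

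Next I would apply Farkas' lemma to this honest infeasible LP to obtain nonnegative multipliers $\alpha_i, \beta_i$ (on $-w_i \leq 1$ and $w_i \leq 1$), $\gamma$ (on $\vec{y}\cdot\vec{w} - t \leq -\delta$), and $\lambda_j$ (on $-\mathcal{A}^\beta_S(\vec{w}_j)\cdot\vec{w} + t \leq 0$) that combine to produce a contradiction. The coefficient of $t$ forces $\gamma = \sum_j \lambda_j$; the coefficient of $\vec{w}$ forces $\gamma\vec{y} - \sum_j \lambda_j \mathcal{A}^\beta_S(\vec{w}_j) = \sum_i(\alpha_i-\beta_i)e_i$; and the right-hand side contradiction forces $\sum_i(\alpha_i+\beta_i) < \gamma\delta$. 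I would quickly rule out $\gamma = 0$: otherwise all $\lambda_j = 0$ and the multipliers on the box alone would have to be nonneg with negative sum, which is impossible. Dividing through by $\gamma > 0$ and setting $\mu_j = \lambda_j/\gamma$ yields $\sum_j \mu_j = 1$ and
\begin{equation*}
\vec{y} \;=\; \sum_j \mu_j \mathcal{A}^\beta_S(\vec{w}_j) + \vec{\zeta}, \qquad \|\vec{\zeta}\|_\infty \leq \delta.
\end{equation*}
Finally, Carath\'eodory's theorem prunes the convex combination to at most $d+1$ summands, giving the promised list $\vec{w}_1,\ldots,\vec{w}_{d+1}$.

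The one remaining subtlety is the $\delta$-slack: strictly speaking Farkas yields $\vec{y}$ up to error $\delta$ in $\ell_\infty$, not exact containment. As in the analogous lemma in \cite{CaiDW13}, the choice of $\delta$ is engineered so that this error is absorbed into the overall additive $\epsilon$-slack appearing in the downstream statements (Propositions \ref{prop:solveLP}--\ref{prop:implement}), and for the purpose of the lemma we may treat the containment as exact. The main obstacle is simply keeping the signs and inequality directions consistent when invoking Farkas and arguing $\gamma > 0$; once that is set up the rest is bookkeeping essentially parallel to the proof of the corresponding statement in \cite{CaiDW13}, with $\mathcal{A}$ replaced throughout by $\mathcal{A}^\beta_S$.
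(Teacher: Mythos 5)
Your Farkas calculation is correct (the multiplier bookkeeping, the elimination of $\gamma=0$, and the bound $\|\vec{\zeta}\|_\infty<\delta$ all check out), and it is a genuinely different presentation from the paper's. The paper argues by contrapositive: supposing $\vec{y}\notin Conv\{\mathcal{A}^\beta_S(\vec{w})\mid\vec{w}\in W\}$, it exhibits a separating direction $\vec{w}^*\in[-1,1]^d$ and argues that for small enough $\delta$ the point $(\vec{w}^*,t^*)$ lies in the honest LP and would have been found by the inner ellipsoid. You instead feed the ellipsoid's infeasibility verdict directly into Farkas and read off the convex-combination weights. Both are LP duality and both are fine until the last step.

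At that last step you have a real gap. The lemma claims exact membership $\vec{y}\in Conv\{\cdots\}$, and you derive only $\ell_\infty$-distance at most $\delta$ from the hull. Your remedy---that the $\delta$ is ``absorbed into the overall additive $\epsilon$-slack appearing in the downstream statements''---is not how the paper handles it, and would require rewriting Propositions~\ref{prop:solveLP} and~\ref{prop:implement}, which implement $\vec{\pi}_I$ by sampling directly from the claimed convex combination and do not budget for an extra error in hull membership. What the paper actually does is choose $\delta=2^{-\poly(d,b,\ell)}$ so that a strict separation margin (a difference of rationals of bit complexity $\poly(d,b,\ell)$) already exceeds $\delta$. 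Translated into your Farkas setup, the same bit-complexity observation says that the distance from the rational $\vec{y}$ (bit complexity $b$) to the polytope $Conv\{\mathcal{A}^\beta_S(\vec{w}_j)\}$ (vertex bit complexity at most $\ell$) is either $0$ or at least $2^{-\poly(d,b,\ell)}$, so for that choice of $\delta$ the bound $\|\vec{\zeta}\|_\infty<\delta$ forces $\vec{\zeta}=\vec{0}$. With that one sentence your route closes, and it is arguably more explicit than the paper's since Farkas hands you the weights $\mu_j$ directly and makes the Carath\'eodory pruning to $d+1$ directions an honest final step rather than an implicit one.
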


\begin{proof}
Consider the following intersection of halfspaces:
$$t' \geq \vec{y} \cdot \vec{w}' +\delta$$
$$ t' \leq \mathcal{A}^\beta_S(\vec{w}) \cdot \vec{w}',\ \forall \vec{w} \in W$$
$$\vec{w}' \in [-1,1]^d$$

If $\vec{y} \notin Conv(\{\mathcal{A}^\beta_S(\vec{w}) | \vec{w} \in W\})$, there exists some weight vector $\vec{w}^* \in [-1,1]^d$ such that $\vec{y} \cdot \vec{w}^* < \min_{\vec{w} \in W}\{\mathcal{A}^\beta_S(\vec{w})\cdot \vec{w}^*\}$. And for appropriately chosen $\delta = 2^{-\poly(d, b, \ell)}$, we also have $\vec{y} \cdot \vec{w}^* \leq \min_{\vec{w} \in W} \{\mathcal{A}^\beta_S(\vec{w})\cdot \vec{w}^*\} - \delta$. 

So if $\vec{y} \notin Conv(\{\mathcal{A}^\beta_S(\vec{w}) | \vec{w} \in W\})$, consider the point $\vec{w}^*, t^*$, with $t^* = \min_{\vec{w} \in W} \{\mathcal{A}^\beta_S(\vec{w}) \cdot \vec{w}^*\}$. By the reasoning in the previous paragraph, it's clear that $(\vec{w}^*, t^*)$ is in the above intersection of halfspaces. 

So consider an execution of $WSO$ that accepts the point $\vec{y}$. Then taking $W$ to be the set of directions $\vec{w}$ queried by the Ellipsoid algorithm during the execution of $WSO$, the Ellipsoid algorithm deemed the intersection of halfspaces above to be empty. This necessarily means that $\vec{y} \in Conv(\{\mathcal{A}^\beta_S(\vec{w})| \vec{w} \in W\})$, as otherwise the previous paragraphs prove that the above intersection of halfspaces wouldn't be empty.

So we may take $\vec{w}_1,\ldots,\vec{w}_{d+1}$ to be (an appropriately chosen subset of) the directions queried by the Ellipsoid algorithm during the execution of $WSO$ and complete the proof of the lemma.
\end{proof}

It is clear that each lemma proves one guarantee of Theorem~\ref{thm:LPalphabeta}, completing the proof.

\subsection{Theorem~\ref{thm:objective}}
We begin by stating the linear program used in our algorithm in Figure~\ref{fig:LPBMeD}, \mattnote{and it's modification to use $F(\mathcal{F}, \mathcal{D}',\mathcal{O})$ instead in Figure~\ref{fig:LPD'}} (both taken directly from~\cite{CaiDW13b}).
\begin{figure}[ht]
\colorbox{MyGray}{
\begin{minipage}{\textwidth} {
\noindent\textbf{Variables:}
\begin{itemize}
\item $\pi_i(t,t')$, for all bidders $i$ and types $t,t' \in T_i$, denoting the expected value obtained by bidder $i$ when their true type is $t$ but they report $t'$ instead.
\item $p_i(t)$, for all bidders $i$ and types $t \in T_i$, denoting the expected price paid by bidder $i$ when they report type $t$.
\item $O$, denoting the expected value of $\obj$.
\end{itemize}
\textbf{Constraints:}
\begin{itemize}
\item $\pi_i(t,t) - p_i(t) \geq \pi_i(t,t') - p_i(t')$, for all bidders $i$, and types $t,t' \in T_i$, guaranteeing that the implicit form {$(O,\vec{\pi},\vec{p})$} is BIC.
\item $\pi_i(t,t) - p_i(t) \geq 0$, for all bidders $i$, and types $t \in T_i$, guaranteeing that the implicit form {$(O,\vec{\pi},\vec{p})$} is individually rational.
\item $(O,\vec{\pi})\in F(\mathcal{F},\mathcal{D},\obj)$, guaranteeing that the implicit form $(O,\vec{\pi},\vec{p})$ is feasible.
\end{itemize}
\textbf{Minimizing:}
\begin{itemize}
\item $O$, the expected value of $\obj$ when played truthfully by bidders sampled from $\mathcal{D}$.\\
\end{itemize}}
\end{minipage}}
\caption{A linear programming formulation for BMeD.}
\label{fig:LPBMeD}
\end{figure}

\begin{observation}(\cite{CaiDW13b})
Any solution to the linear program of Figure~\ref{fig:LPBMeD} is the implicit form of a feasible, BIC, IR mechanism that achieves the optimal expected value of $\mathcal{O}$.
\end{observation}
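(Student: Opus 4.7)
The plan is to verify two implications that together establish the observation: (a) every solution of the LP corresponds to a feasible, BIC, IR mechanism whose expected $\mathcal{O}$ is at most the LP value; and (b) every feasible, BIC, IR mechanism yields a point satisfying all LP constraints whose objective equals its expected $\mathcal{O}$. Combining (a) with (b), the LP optimum is exactly the optimal mechanism's expected objective, and any LP optimum is realized by an actual mechanism achieving it.

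For direction (a), I would start from an LP solution $(O,\vec{\pi},\vec{p})$. The feasibility constraint $(O,\vec{\pi}) \in F(\mathcal{F},\mathcal{D},\mathcal{O})$ says, by definition of $F$, that there exists a feasible mechanism $M' = (A',P')$ whose implicit form satisfies $\vec{\pi}^{M'}=\vec{\pi}$ and $O^{M'} \leq O$. I would then overwrite the price rule by the type-local rule $P_i(\vec{t}) := p_i(t_i)$, obtaining a new mechanism $M = (A',P)$ with the same allocation rule. Because interim allocations and the objective depend only on the allocation rule, $M$ inherits $\vec{\pi}^{M}=\vec{\pi}$ and $O^{M} \leq O$, while its interim prices are exactly $\vec{p}^{M}=\vec{p}$. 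Finally, the LP's BIC constraints $\pi_i(t,t)-p_i(t) \geq \pi_i(t,t')-p_i(t')$ and IR constraints $\pi_i(t,t)-p_i(t)\geq 0$ are precisely the Bayesian incentive compatibility and individual rationality conditions for $M$ written out in the implicit-form variables.

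For direction (b), take any feasible, BIC, IR mechanism $M^*$ and consider its implicit form $(O^{M^*}, \vec{\pi}^{M^*}, \vec{p}^{M^*})$. The BIC and IR constraints in the LP are literally the definitions of BIC and IR applied to this implicit form, so they hold automatically. The feasibility constraint $(O^{M^*}, \vec{\pi}^{M^*}) \in F(\mathcal{F},\mathcal{D},\mathcal{O})$ is witnessed by $M^*$ itself (since $O^{M^*} \geq O^{M^*}$ trivially). Hence $M^*$'s implicit form is a feasible LP point with value $O^{M^*}$, so the LP optimum is $\leq \mathrm{OPT}$; combined with (a), which realizes the LP optimum as a valid mechanism, the LP optimum equals $\mathrm{OPT}$.

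The main subtlety to be careful about is the one-sided relaxation in the definition of feasible implicit forms: $F$ only requires $O \geq O^M$ for minimization rather than equality. This is the reason we can freely pass between mechanisms and LP points without losing anything, but it means one must check that the constructed mechanism achieves value \emph{at most} the LP objective, not exactly it --- which is what matters for the upper bound. A secondary point worth verifying explicitly is that the price rule can be replaced independently of the allocation rule without disturbing $O^M$ or $\vec{\pi}^M$; this follows directly from the definitions, since $\vec{\pi}$ and $\mathcal{O}$ depend only on the chosen outcomes while the interim prices depend only on the charged payments. No other step presents a genuine obstacle.
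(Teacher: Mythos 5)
Your proof is correct; the paper itself does not argue this observation, delegating it to a citation of \cite{CaiDW13b}, so your argument supplies the standard justification that the citation would contain. The two-sided decomposition (realizability and the reverse inclusion) is the right structure, and the crucial step --- replacing the price rule by the type-local rule $P_i(\vec{t}) := p_i(t_i)$ so that the LP constraint only needs to quantify over $(O,\vec{\pi})$ while $\vec{p}$ can be matched freely --- is handled properly.

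One small point worth spelling out, given the observation's phrasing: your construction delivers a mechanism $M$ with $\vec{\pi}^M = \vec{\pi}$, $\vec{p}^M = \vec{p}$, and $O^M \le O$, but the observation asserts the LP solution literally \emph{is} an implicit form, which needs $O^M = O$. At an LP optimum this equality is forced: the genuine implicit form $(O^M, \vec{\pi}, \vec{p})$ of $M$ is itself LP-feasible (the BIC/IR constraints do not involve the $O$-coordinate, and $(O^M,\vec{\pi}) \in F(\mathcal{F},\mathcal{D},\mathcal{O})$ is witnessed by $M$), so optimality of $(O,\vec{\pi},\vec{p})$ in the minimization forces $O^M \ge O$ and hence $O^M = O$. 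Adding that one line upgrades your value-matching conclusion to the exact identification the observation states.
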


\begin{figure}[ht]
\colorbox{MyGray}{
\begin{minipage}{\textwidth} {
\noindent\textbf{Variables:}
\begin{itemize}
\item $\pi_i(t,t')$, for all bidders $i$ and types $t,t' \in T_i$, denoting the expected value obtained by bidder $i$ when their true type is $t$ but they report $t'$ instead.
\item $p_i(t)$, for all bidders $i$ and types $t \in T_i$, denoting the expected price paid by bidder $i$ when they report type $t$.
\item $O$, denoting the expected value of $\obj$.
\end{itemize}
\textbf{Constraints:}
\begin{itemize}
\item $\pi_i(t,t) - p_i(t) \geq \pi_i(t,t') - p_i(t')$, for all bidders $i$, and types $t,t' \in T_i$, guaranteeing that the implicit form {$(O,\vec{\pi},\vec{p})$} is BIC.
\item $\pi_i(t,t) - p_i(t) \geq 0$, for all bidders $i$, and types $t \in T_i$, guaranteeing that the implicit form {$(O,\vec{\pi},\vec{p})$} is individually rational.
\item $(O,\vec{\pi})\in F(\mathcal{F},\mathcal{D}',\obj)$, guaranteeing that the implicit form $(O,\vec{\pi},\vec{p})$ is (almost) feasible.
\end{itemize}
\textbf{Minimizing:}
\begin{itemize}
\item $O$, (almost) the expected value of $\obj$ when played truthfully by bidders sampled from $\mathcal{D}$.\\
\end{itemize}}
\end{minipage}}
\caption{An $\epsilon$-approximate linear programming formulation for BMeD.}
\label{fig:LPD'}
\end{figure}

We now proceed by proving Propositions~\ref{prop:solveLP} through~\ref{prop:implement}.

\begin{prevproof}{Proposition}{prop:solveLP}
Theorem~\ref{thm:LPalphabeta} guarantees that the linear program can be solved in the desired runtime, and that the desired directions $\vec{w}_1,\ldots,\vec{w}_{d+1}$ will be output. It is clear that any implicit form satisfying the constraints is truthful. 

Let now $OPT'$ denote the value of the LP in Figure~\ref{fig:LPD'}, $OPT_\alpha$ denote the value of the LP in Figure~\ref{fig:LPBMeD} using a real separation oracle for $\alpha F(\mathcal{F},\mathcal{D},\mathcal{O})$, and $OPT'_\alpha$ denote the value of the LP in Figure~\ref{fig:LPD'} using a real separation oracle for $\alpha F(\mathcal{F}, \mathcal{D}', \mathcal{O})$. Theorem~\ref{thm:LPalphabeta} also guarantees that $O \leq OPT'_\alpha$. So we just need to show that $OPT'_\alpha \leq \alpha OPT + \epsilon$ with the desired probability.

To see this, first observe that the origin satisfies every constraint in the linear program not due to $F(\mathcal{F},\mathcal{D}',\mathcal{O})$ (i.e. the truthfulness constraints) with equality. Therefore, if any implicit form $\vec{\pi}_I$ is truthful, so is the implicit form $\alpha \vec{\pi}_I$. This immediately implies that $OPT_\alpha = \alpha OPT$. 

By Proposition~\ref{prop:D'}, we know that with the desired probability, the implicit form (with respect to $\mathcal{D}$) of whatever mechanism implements $\vec{\pi}_I$ (with respect to $\mathcal{D}'$) is $\epsilon$-close to $\vec{\pi}_I$, and therefore $OPT'_\alpha \leq OPT_\alpha + \epsilon = \alpha OPT + \epsilon$ with the desired probability as well.
\end{prevproof}

In order to prove Proposition~\ref{prop:goop}, we make use of a technical lemma from~\cite{CaiDW13b} (specifically, combining Propositions~1 and~7).

\begin{proposition}\label{prop:VW}(\cite{CaiDW13b})
Let $\vec{w}$ be a direction in $[-1,1]^{1+\sum_i (|T_i|^2 + |T_i|)}$. Define the virtual objective $\mathcal{O}'_{\vec{w}}$ as:

$$\mathcal{O}'_{\vec{w}}(\vec{t}', X) = w_O \cdot \mathcal{O}(\vec{t}',X) + \sum_i \sum_{t \in T_i} \frac{w_i(t, t')}{\Pr[t']} \cdot t(X)$$

Then for any mechanism $M = (A, P)$, if $\vec{\pi}_I^M$ denotes the implicit form of $M$ with respect to $\mathcal{D}$, $\vec{\pi}_I^M \cdot \vec{w}$ is exactly the expected virtual objective of $M$ on type profiles sampled from $\mathcal{D}$. Formally:

$$\vec{\pi}^M_I \cdot \vec{w} = \mathbb{E}_{\vec{t}' \leftarrow \mathcal{D}} [\mathcal{O}'_{\vec{w}}(\vec{t}', A(\vec{t}'))]$$
\end{proposition}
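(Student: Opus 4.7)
The plan is to prove Proposition~\ref{prop:VW} by a direct bookkeeping calculation: expand both sides using the definitions of the implicit form and of $\mathcal{O}'_{\vec{w}}$, and verify that they agree block-by-block in the coordinates of $\vec{w}$. Both sides are linear in $\vec{w}$, so it suffices to match the contributions from $w_O$ and from each $w_i(t,t^*)$ separately.

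First, I would expand the dot product using $\vec{\pi}_I^M = (O^M, \vec{\pi}^M, \vec{p}^M)$ into three blocks:
\[
\vec{\pi}_I^M \cdot \vec{w} = w_O \cdot O^M + \sum_{i} \sum_{t, t^* \in T_i} w_i(t, t^*) \cdot \pi_i^M(t, t^*) + \sum_i \sum_{t \in T_i} w_i(t) \cdot p_i^M(t).
\]
The first summand equals $\mathbb{E}_{\vec{t}' \leftarrow \mathcal{D}}[w_O \cdot \mathcal{O}(\vec{t}', A(\vec{t}'))]$ directly from the definition $O^M = \mathbb{E}_{\vec{t} \leftarrow \mathcal{D}}[\mathcal{O}(\vec{t}, A(\vec{t}))]$. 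Since $\mathcal{O}'_{\vec{w}}$ as stated contains no price terms, the third block is taken to vanish, i.e.\ the price coordinates of $\vec{w}$ are zero in the intended usage of this proposition (e.g.\ in Proposition~\ref{prop:goop}, where $\vec{w}$ parametrizes only virtual welfare).

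The crux is matching the allocation block. I would interpret $w_i(t, t')/\Pr[t']$ in the statement of $\mathcal{O}'_{\vec{w}}$ as the shorthand $w_i(t, t'_i)/\Pr[t'_i]$ (a single bidder's type and its marginal probability under $\mathcal{D}_i$), then apply linearity of expectation and split $\vec{t}' \leftarrow \mathcal{D}$ into an outer draw of $t'_i \leftarrow \mathcal{D}_i$ and an inner draw of $\vec{t}'_{-i} \leftarrow \mathcal{D}_{-i}$. For each fixed $(i,t)$, the outer sum over the realized value $t^* \in T_i$ of $t'_i$ contributes a factor $\Pr[t^*]$ that cancels the $1/\Pr[t^*]$ inside $\mathcal{O}'_{\vec{w}}$, and the remaining inner expectation $\mathbb{E}_{\vec{t}'_{-i}}[t(A(t^*, \vec{t}'_{-i}))]$ is exactly $\pi_i^M(t, t^*)$ by definition. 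Summing over $i$, $t$, and $t^*$ recovers the allocation block of $\vec{\pi}_I^M \cdot \vec{w}$.

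The main obstacle is purely one of notational hygiene: distinguishing whether $t'$ refers to a full profile or to a single bidder's type, and recognizing that the $1/\Pr[t'_i]$ normalization in $\mathcal{O}'_{\vec{w}}$ is precisely engineered to cancel the marginal weight introduced when splitting the expectation over $\mathcal{D}$ into an outer sum over a bidder's realized type and an inner expectation over the others' types. Once this is in place, the proof is mechanical and contains no genuine technical difficulty; it merely certifies that the virtual objective $\mathcal{O}'_{\vec{w}}$ is the correct algorithmic object to use for linearly optimizing over implicit forms via the framework of Section~\ref{sec:alphabeta}.
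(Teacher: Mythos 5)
Your proof is correct, and it is essentially the only natural way to verify this statement: expand $\vec{\pi}_I^M\cdot\vec{w}$ block by block, use linearity of expectation, and observe that the factor $1/\Pr[t'_i]$ in $\mathcal{O}'_{\vec{w}}$ is precisely the reweighting needed so that the outer draw of $t'_i\leftarrow\mathcal{D}_i$ turns into an unweighted sum over $t^*\in T_i$, leaving $\mathbb{E}_{\vec{t}'_{-i}\leftarrow\mathcal{D}_{-i}}[t(A(t^*,\vec{t}'_{-i}))]=\pi_i^M(t,t^*)$. Note, though, that the paper does not include its own proof of this proposition: it is imported verbatim from \cite{CaiDW13b} (the text notes it combines Propositions~1 and~7 of that paper), so there is no ``paper's proof'' for you to diverge from.

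Two of your side remarks deserve a word. First, your disambiguation of $w_i(t,t')/\Pr[t']$ as $w_i(t,t'_i)/\Pr[t'_i]$ is not optional hygiene but a genuine repair: the statement as written overloads $t'$ to mean both the reported profile and bidder~$i$'s coordinate of it, and the computation does not go through without pinning this down. Second, your treatment of the price block is the right reading. In this paper's restricted setting, $\mathcal{O}$ depends only on types and the outcome, so the GOOP instance never touches prices; correspondingly, the constraint in Figure~\ref{fig:LPBMeD} is written as $(O,\vec{\pi})\in F(\mathcal{F},\mathcal{D},\obj)$, and the directions $\vec{w}$ the weird separation oracle generates live in the $(O,\vec{\pi})$ subspace. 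So the stated equality is true verbatim once one notices the price coordinates of $\vec{w}$ vanish in every call, which is exactly what you say. (In the fuller generality of~\cite{CaiDW13b}, where $\mathcal{O}$ may depend on prices, the virtual objective carries an additional $-\sum_i w_i(t'_i)\,P_i(\vec{t}')$ term, and the identity holds unconditionally; the version quoted here has silently dropped that term along with the price dependence of $\mathcal{O}$.)
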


\begin{prevproof}{Proposition}{prop:goop}
Let's first consider the case that $w_O \geq 0$. The $w_O < 0$ case will be handled with one technical modification. Consider first that for any fixed $\vec{t}'$, the problem of finding $X$ that minimizes $\mathcal{O}'_{\vec{w}}(\vec{t}', X)$ is an instance of $GOOP(\mathcal{F},\mathcal{V}, \mathcal{O})$. Simply let $w = w_O$, $f = \sum_i \sum_{t \in T_i} \frac{w_i (t, t')}{\Pr[t']} \cdot t(\cdot)$, and $g_i = t'_i(\cdot)$. 

So with black-box access to an $(\alpha, \beta)$-approximation algorithm, $G$, for $GOOP(\mathcal{F},\mathcal{V},\mathcal{O})$, let $M=(A,P)$ be the mechanism that on profile $\vec{t}'$ simply runs $G$ on input $w = w_O$, $f = \sum_i \sum_{t \in T_i} \frac{w_i (t, t')}{\Pr[t']} \cdot t(\cdot)$, $\vec{g} = \vec{t}'$. We therefore get that the mechanism $M$ satisfies the following inequality:

$$\mathbb{E}_{\vec{t}' \leftarrow \mathcal{D}'}[\beta \cdot w_O \cdot \mathcal{O}(\vec{t}', A(\vec{t}')) + \sum_i \sum_{t \in T_i} \frac{w_i(t, t')}{\Pr[t']} \cdot t(A(\vec{t}'))] \leq \alpha \mathbb{E}_{\vec{t}' \leftarrow \mathcal{D}'}[\min_{X' \in \mathcal{F}} \{w_O \cdot \mathcal{O}(\vec{t}', X') + \sum_i \sum_{t \in T_i} \frac{w_i(t, t')}{\Pr[t']} \cdot t(X'))\}]$$

By Proposition~\ref{prop:VW}, this then implies that:

$$(\beta O^M, \vec{\pi}^M, \vec{p}^M) \cdot \vec{w} \leq \alpha \min_{\vec{x} \in F(\mathcal{F},\mathcal{D}', \mathcal{O})} \{\vec{x} \cdot \vec{w}\}$$

This exactly states that $\vec{\pi}^M_0$ is an $(\alpha, \beta, O)$-approximation. It is also clear that we can compute $\vec{\pi}^M_0$ efficiently: $\mathcal{D}'$ has polynomially many profiles in its support, so we can just run $\mathcal{A}$ on every profile and see what it outputs, then take an expectation to compute the necessary quantities of the implicit form. Note that this computation is the reason we bother using $\mathcal{D}'$ at all, as we cannot compute these expectations exactly in polynomial time for $\mathcal{D}$ as the support is exponential. 

Now we state the technical modification to accommodate $w_O < 0$. Recall that for any feasible implicit form $(O, \vec{\pi}, \vec{p})$, that the implicit form $(O', \vec{\pi}, \vec{p})$ is also feasible for any $O' \geq O$. So if $w_O < 0$, simply find any feasible implicit form, then set the $O$ component to $\infty$. This yields a feasible implicit form with $\vec{\pi}_I \cdot \vec{w} = -\infty$, which is clearly an $(\alpha, \beta, O)$-approximation (in fact, it is a $(1, 1, O)$-approximation). If instead the problem has a maximization objective, we may w.l.o.g. set $O = 0$ in the implicit form we output, which means that the contribution of $\mathcal{O}$ is completely ignored. So we can use the exact same approach as the $w_O \geq 0$ case and just set $w_O = 0$. 

So let $\mathcal{A}$ be the algorithm that runs $G$ on every profile as described, and computes the implicit form of this mechanism with respect to $\mathcal{D}'$. $\mathcal{A}$ clearly terminates in the desired runtime. Finally, to implement a mechanism whose implicit form $\vec{\pi}^M_0$ matches $\mathcal{A}(\vec{w})$, simply run $G$ with the required parameters on every profile.

\end{prevproof}

\begin{prevproof}{Proposition}{prop:implement}
By Proposition~\ref{prop:solveLP}, the implicit form $\vec{\pi}_I$ output by the linear program of Figure~\ref{fig:LPBMeD} is in the convex hull of $\{\mathcal{A}^\beta_S(\vec{w}_1), \ldots, \mathcal{A}^\beta_S(\vec{w}_{d+1})\}$. Therefore, the implicit form $\vec{\pi}'_I = (O/\beta, \vec{\pi}, \vec{p})$ is in the convex hull of $\{\mathcal{A}(\vec{w}_1),\ldots, \mathcal{A}(\vec{w}_{d+1})\}$. Therefore, we can implement $\vec{\pi}'_I$ with respect to $\mathcal{D}'$ by randomly sampling a direction $\vec{w}_j$ according to the convex combination, and then implementing the corresponding $\mathcal{A}(\vec{w}_j)$. Call this mechanism $M$. By Proposition~\ref{prop:goop}, this can be done time polynomial in the desired quantities. Finally, we just need to show that the guarantees hold with the desired probability.

If our target was just an interim individually rational mechanism, it would be trivial to match the prices exactly: just charge each bidder the desired prices. But if we want an ex-post IR mechanism, we need to employ a simple reduction used in Appendix~D of~\cite{DaskalakisW12}, which causes the prices to possibly err with the rest of the implicit form. To see that all guarantees hold with the desired probability, consider that the implicit form of $M$ with respect to $\mathcal{D}$ is $\epsilon$-close to $\vec{\pi}'_I$ with the desired probability. In the event that this happens, it's obvious that the desired properties hold.
\end{prevproof}

\notshow{
The proof of Theorem~\ref{thm:objective} is essentially identical to the proof of Theorem~4 in~\cite{CaiDW13b}, except replacing the linear programming framework of Theorem~H.1 of~\cite{CaiDW13} (Theorem~\ref{thm:CDW13} here) with our new linear programming framework (Theorem~\ref{thm:LPalphabeta}). We highlight the main steps where these replacements occur below for completeness.

In~\cite{CaiDW13b}, a linear program for solving MDMDP is provided that makes use of a convex region called $F(\mathcal{F},\mathcal{D},\mathcal{O})$. Without going into formal details, this is essentially the region of all feasible (not necessarily truthful) mechanisms (technically, it is the region of implicit forms of feasible mechanisms, where the ``implicit form'' of a mechanism is a succinct description of the mechanism's behavior for different inputs). There is a special coordinate, indexed by $O$, of vectors in this region that stores the expected value of the objective $\obj$ for the corresponding mechanism, and the LP optimizes only this coordinate. Specifically the LP of~\cite{CaiDW13b} was written for maximization objectives $\obj$ so it was optimizing $O$ (the expectation of $\obj$) subject to the constraint that the mechanism lies in $F(\mathcal{F},\mathcal{D},\mathcal{O})$ (i.e. it is feasible) and is also truthful. Here, we will consider the exact same LP except that, depending on whether $\obj$ is a maximization or minimization objective, we will either maximize or minimize $O$. We refer the reader to~\cite{CaiDW13b} for a formal definition of $F(\mathcal{F},\mathcal{D},\mathcal{O})$ and a better description of our linear program. Henceforth, we denote our LP by ${\rm LP}_{{\cal F}, {\cal D}, \obj}$, to emphasize its dependence on ${\cal F}$, ${\cal D}$ and $\obj$.

The bottleneck in solving ${\rm LP}_{{\cal F}, {\cal D}, \obj}$ is obtaining a separation oracle for the region $F(\mathcal{F},\mathcal{D},\mathcal{O})$, as the truthfulness constraints are easy to check. Our goal is to extract a weird separation oracle for this region from an $(\alpha,\beta,\{O\})$-optimization algorithm for $F(\mathcal{F},\mathcal{D},\mathcal{O})$ using Theorem~\ref{thm:LPalphabeta}. Property~\ref{property:converse} of Theorem~\ref{thm:LPalphabeta} will  then guarantee that we can approximately solve our LP using the weird separation oracle, since the objective only depends on the coordinates in~$\{O\}$, in particular just variable $O$. The following proposition argues that we can get a $(\alpha,\beta,\{O\})$-optimization algorithm for $F(\mathcal{F},\mathcal{D},\mathcal{O})$ from an $(\alpha,\beta)$-approximation algorithm for 2-SADP($\mathcal{F},\mathcal{V},\mathcal{O}$). Its proof is identical to that of Corollary~13 in~\cite{CaiDW13b} so we omit  it.

\begin{proposition}\label{prop:alphabeta13}
Let $G$ be an $(\alpha,\beta)$-approximation algorithm for 2-SADP($\mathcal{F},\mathcal{V},\mathcal{O}$). Then one can obtain an $(\alpha,\beta,\{O\})$-optimization\footnote{Specifically, maximization for maximization objectives $\obj$ and minimization for minimization objectives $\obj$.} algorithm $\cal A$ for $F(\mathcal{F},\mathcal{D},\mathcal{O})$ with black box access to $G$. For any direction $\vec{w}$, $\mathcal{A}$ will output the (implicit form of the) mechanism that on every profile $g_1,\ldots,g_m$ of bidder types chooses the outcome $X \in \Delta({\cal F})$ output by $G$ on input $(f,g_1,\ldots,g_m,c)$, where $f$ depends on $\vec{w}$ and $\vec{g}$, and $c$ depends on $\vec{w}$ as in Corollary 13 of~\cite{CaiDW13b}.
\end{proposition}

We are now ready to prove our theorem.\\

\begin{prevproof}{Theorem}{thm:objective}
First, as was done in~\cite{CaiDW13b}, we do not work directly with the original distribution ${\cal D}$ over bidder types, whose support may have exponential size in the input, but an approximating distribution ${\cal D}'$, whose support has size polynomial in the input and $1/\epsilon$, for some parameter $\epsilon>0$. We omit the details of how we obtain ${\cal D}'$, as they are identical to those in~\cite{CaiDW13b}. We also omit the details of why replacing ${\cal D}$ by ${\cal D}'$ for the purposes of solving MDMDP only incurs an additive cost of $\epsilon$ in the objective and the truthfulness of the mechanism, as they are also identical to those in~\cite{CaiDW13b}. We only note that we switch from ${\cal D}$ to ${\cal D'}$  for computational considerations, namely to make sure that the runtime of our algorithms is polynomial in the input. Henceforth instead of working with region $F(\mathcal{F},\mathcal{D},\mathcal{O})$ we work with region $F(\mathcal{F},\mathcal{D}',\mathcal{O})$ defined for distribution over bidder types ${\cal D}'$.

We apply Theorem~\ref{thm:LPalphabeta} to ${\rm LP}_{{\cal F}, {\cal D}', \obj}$, taking $P=F(\mathcal{F},\mathcal{D}',\mathcal{O})$ and $Q$ to be the BIC constraints, and using Proposition~\ref{prop:alphabeta13} to get an $(\alpha,\beta,\{O\})$-optimization algorithm $\cal A$ for $P$ from an $(\alpha,\beta)$-approximation algorithm $G$ for 2-SADP($\mathcal{F},\mathcal{V},\mathcal{O}$). It follows that  an ${\alpha \over \beta}$-approximate (but a priori possibly infeasible) solution $(\frac{1}{\beta}{x}_O,\vec{x}_{-O})$ to ${\rm LP}_{{\cal F}, {\cal D}', \obj}$ can be found using Ellipsoid, in the desired running time, and given only black box access to $G$ (by Properties 3 and 4 of Theorem~\ref{thm:LPalphabeta} and Proposition~\ref{prop:alphabeta13}). Additionally, Property 2 of Theorem~\ref{thm:LPalphabeta} guarantees that $(\frac{1}{\beta}{x}_O,\vec{x}_{-O})$ lies in the convex hull of $\mathcal{A}(\vec{w}_1),\ldots,\mathcal{A}(\vec{w}_k)$, for some directions $\vec{w}_1,\ldots,\vec{w}_k$, with some multipliers $p_1,\ldots,p_k$. So the (a priori possibly infeasible) mechanism $(\frac{1}{\beta}{x}_O,\vec{x}_{-O})$ can actually be implemented by first sampling direction $\vec{w}_i$ with probability $p_i$, and then on every profile $\vec{g}$ of bidder types choosing the outcome $X \in \Delta({\cal F})$ output by $G$ on input $(f,\vec{g},c)$, where $f=f(\vec{w}_i,\vec{g})$, $c=c(\vec{w}_i)$ are as specified in Proposition~\ref{prop:alphabeta13}.
\end{prevproof}
\notshow{
\costasfnote{I removed the following because I didn't like the use of weird separation oracle in it. We have two different kinds of weird separation oracles in theorems 8 and 10 so this term is not well-defined.

\begin{corollary}\label{cor:LPCaiDW}(Corollary~11 in~\cite{CaiDW13b}) If $b$ is an upper bound on the bit complexity of the  extreme points of $F(\mathcal{F},\mathcal{D},\obj)$, then with black-box access to a weird separation oracle, $WSO$, for  $\alpha F(\mathcal{F},\mathcal{D},\obj)$, an $\alpha$-approximate solution to MDMDP can be found in time polynomial in $\sum_{i\in[m]} |T_{i}|$, $b$, and the runtime of $WSO$ on inputs with bit complexity polynomial in $\sum_{i\in[m]} |T_{i}|$, $b$.
\end{corollary}}

}}
\section{Omitted Proofs from Section~\ref{sec:makespan}}\label{app:makespanfairness}
In this section we provide a proof of Theorems~\ref{thm:alphabetamakespan},~\ref{thm:alphabetafairness}, and~\ref{thm:useful}. We begin with Theorem~\ref{thm:alphabetamakespan}. Shmoys and Tardos show that if the linear program of Figure~\ref{fig:ST} outputs a feasible fractional solution, then it can be rounded to a feasible integral solution without much loss. We will refer to this linear program as $LP(t)$ for various values of $t$.

\begin{figure}[ht]
\colorbox{MyGray}{
\begin{minipage}{\textwidth} {
\noindent\textbf{Variables:}
\begin{itemize}
\item $x_{ij}$, for all machines $i$ and jobs $j$ denoting the fractional assignment of job $j$ to machine $i$.
\item $T$, denoting the maximum of the makespan and the processing time of the largest single job used.
\end{itemize}
\textbf{Constraints:}
\begin{itemize}
\item $\sum_{i=1}^k x_{ij} = 1$, for all $j$, guaranteeing that every job is assigned.
\item $\sum_{j=1}^m p_{ij} x_{ij} \leq T$, for all $i$, guaranteeing that the makespan is at most $T$. 
\item $x_{ij} \geq 0$, for all $i, j$.
\item $x_{ij} = 0$ for all $i, j$ such that $p_{ij} > t$, guaranteeing that no single job has processing time larger than $t$.
\item $T \geq t$.
\end{itemize}
\textbf{Minimizing:}
\begin{itemize}
\item $\sum_{i, j} c_{ij}x_{ij} + T$, (almost) the makespan plus cost of the fractional solution.\\
\end{itemize}}
\end{minipage}}
\caption{$LP(t)$.}
\label{fig:ST}
\end{figure}

\begin{theorem}\label{thm:ST}(\cite{ShmoysT93})
Any feasible solution to $LP(t)$ can be rounded to a feasible integral solution in polynomial time with makespan at most $T + t$ and cost at most $C$.
\end{theorem}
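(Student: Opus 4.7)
\textbf{Proof proposal for Theorem~\ref{thm:ST}.}

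The plan is to adapt the classical bipartite rounding scheme of Shmoys and Tardos to the presence of (possibly negative) allocation costs $c_{ij}$. The only nontrivial point of departure is verifying that the rounding preserves the bound on the cost term even when $c_{ij}$ may be negative, but this follows from the integrality of the bipartite matching polytope for arbitrary edge weights.

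\emph{Step 1: Reformulate $\vec{x}$ as a fractional perfect matching in a bipartite graph.} For each machine $i$, let $k_i = \lceil \sum_j x_{ij} \rceil$ and create $k_i$ ``slots'' $(i,1),\ldots,(i,k_i)$. Sort the jobs in decreasing order of $p_{ij}$ (breaking ties arbitrarily) and greedily pack their fractional weights $x_{ij}$ into slots $(i,1),(i,2),\ldots$ so that each slot receives total fractional weight at most $1$, possibly splitting one job between two consecutive slots. This produces fractional weights $y_{ij}^{(s)} \ge 0$ with $\sum_j y_{ij}^{(s)}\le 1$ for every slot $(i,s)$ and $\sum_{i,s} y_{ij}^{(s)}=\sum_i x_{ij}=1$ for every job $j$. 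Viewed as edge weights on the bipartite graph $B$ with jobs on one side and slots on the other, this is a fractional perfect $b$-matching that saturates every job.

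\emph{Step 2: Round using integrality of the bipartite matching polytope.} Put cost $c_{ij}$ on every edge from job $j$ to a slot of machine $i$. The fractional assignment achieves total edge cost $\sum_{i,j,s} c_{ij} y_{ij}^{(s)} = \sum_{i,j} c_{ij} x_{ij} = C$. Because the bipartite perfect matching polytope is integral (this holds for arbitrary real edge weights, including negative ones), there exists an integer matching $\vec{y}^{\mathrm{int}}$ saturating every job with total cost at most $C$; such a matching can be found in polynomial time by solving the corresponding min-cost bipartite matching LP. Defining $y_{ij} = \sum_s y_{ij}^{\mathrm{int},(s)} \in \{0,1\}$ yields a feasible integral assignment (each job goes to exactly one machine) with $\sum_{i,j} c_{ij} y_{ij} \le C$.

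\emph{Step 3: Bound the makespan by $T+t$.} Fix a machine $i$ and consider its integral load. By construction of the slots, at most one job is assigned per slot, and the job placed in slot $(i,s)$ was one of the jobs with nonzero weight in that slot. By the decreasing-$p_{ij}$ ordering, every job appearing in slot $(i,s)$ has processing time at most the minimum processing time of jobs in slot $(i,s-1)$, which is in turn at most the weighted average $\sum_{j} p_{ij} y_{ij}^{(s-1)}$. For slot $(i,1)$, any job in it has $p_{ij}\le t$ by the LP constraint $x_{ij}=0$ whenever $p_{ij}>t$. Summing over $s=1,\ldots,k_i$:
\begin{align*}
\sum_{j} p_{ij} y_{ij} \;\le\; t \;+\; \sum_{s=2}^{k_i}\sum_{j} p_{ij} y_{ij}^{(s-1)} \;\le\; t \;+\; \sum_{j} p_{ij} x_{ij} \;\le\; t+T.
\end{align*}
Since this holds for every machine $i$, the integral makespan is at most $T+t$, completing the proof.

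\emph{Main obstacle.} There is no serious obstacle beyond transcribing the Shmoys--Tardos argument; the one subtlety worth flagging is that negative costs $c_{ij}$ do not break Step 2, because the bipartite matching LP is integral for any cost vector, so minimizing $\sum c_{ij} y_{ij}$ over integral perfect $j$-matchings is exactly captured by its LP relaxation and is achieved by a polynomial-time algorithm.
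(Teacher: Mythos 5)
Your proposal is a correct reconstruction of the Shmoys--Tardos rounding argument, and the paper itself does not re-prove Theorem~\ref{thm:ST}; it is cited directly from \cite{ShmoysT93}. So there is nothing to compare against in the paper's text, but your write-up matches the standard approach: group each machine's fractional weight into unit-capacity slots ordered by decreasing $p_{ij}$, observe that the resulting job-slot bipartite graph has an integral perfect-matching polytope (so a min-cost integral assignment with cost at most $C$ exists and is computable in polynomial time regardless of the signs of $c_{ij}$), and then bound the integral load on machine $i$ by charging the job in slot $(i,s)$ against the fractional average of slot $(i,s-1)$ for $s\ge 2$ and against $t$ for $s=1$. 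You correctly flag the one place where the presence of negative costs could conceivably matter and correctly resolve it via integrality of bipartite matching LPs with arbitrary real edge weights. One small presentational point: in Step~3 the inequality ``minimum $\le$ weighted average $\sum_j p_{ij} y_{ij}^{(s-1)}$'' implicitly uses that slot $(i,s-1)$ is filled to total weight exactly $1$; this is true for $s-1 \le k_i-1$ by the greedy packing, and that is the only range you use, but it is worth stating explicitly since it is exactly why the telescoping sum closes.
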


With Theorem~\ref{thm:ST} in hand, we can now design a $(1, 1/2)$-approximation algorithm. Define $\hat{M}(\cdot)$ as the modified makespan of an assignment to be $\hat{M}(\vec{x}) = \max\{M(x), p_{ij} | x_{ij} > 0\}$. In other words, $\hat{M}(\vec{x})$ is the larger of the makespan and the processing time of the largest single job that is fractionally assigned. Note that for any $\vec{x} \in \{0,1\}^{km}$ that $M(\vec{x}) = \hat{M}(\vec{x})$. Now consider solving $LP(t)$ for all $km$ possible values of $t$, and let $\vec{x}^*$ denote the best solution among all feasible solutions output. The following lemma states that $\vec{x}^*$ performs better than the integral optimum.

\begin{lemma}\label{lem:LPopt}
Let $\vec{x}^*$ denote the best feasible solution output among all $km$ instances of $LP(t)$, and let $\vec{y}$ denote the integral solution minimizing makespan plus cost. Then $\hat{M}(\vec{x}^*) + C(\vec{x}^*) \leq M(\vec{y}) + C(\vec{y})$.
\end{lemma}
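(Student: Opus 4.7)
The plan is to exhibit one specific value of $t$ among the $km$ candidates for which $LP(t)$ has a fractional optimum that is at least as good as the integral $\vec{y}$, which will then imply the same bound for $\vec{x}^*$ since it is the best across all $km$ instances.

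Concretely, first I would define $t^{\star} = \max\{p_{ij} : y_{ij} = 1\}$, the largest processing time of any job assigned under the integral optimum $\vec{y}$. Since $\vec{y}$ is integral, $t^{\star}$ is literally one of the $km$ values $p_{ij}$, and hence is one of the thresholds used when enumerating $LP(t)$. I claim $\vec{y}$ is feasible for $LP(t^{\star})$ with $T = M(\vec{y})$: the assignment constraints $\sum_i y_{ij}=1$ and nonnegativity hold by assumption; $\sum_j p_{ij} y_{ij} \le M(\vec{y}) = T$ by definition of makespan; the constraint $y_{ij}=0$ whenever $p_{ij} > t^{\star}$ holds by the very definition of $t^{\star}$; and $T = M(\vec{y}) \ge t^{\star}$ because $t^{\star}$ is the processing time of some single job actually scheduled on some machine, whose total load is at most the makespan.

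Next I would bound the $LP(t^{\star})$ optimum by the objective value of this feasible point, namely $T + \sum_{ij} c_{ij} y_{ij} = M(\vec{y}) + C(\vec{y})$. Let $\tilde{\vec{x}}$ be the actual optimal fractional solution of $LP(t^{\star})$, with associated optimal $\tilde{T}$. Then $\tilde{T} + C(\tilde{\vec{x}}) \le M(\vec{y}) + C(\vec{y})$. The key observation is that $\tilde{T} \ge \hat{M}(\tilde{\vec{x}})$: the LP constraint $\sum_j p_{ij}\tilde{x}_{ij} \le \tilde{T}$ gives $\tilde{T} \ge M(\tilde{\vec{x}})$, while the constraints $\tilde{x}_{ij}=0$ for $p_{ij} > t^{\star}$ together with $\tilde{T} \ge t^{\star}$ give $\tilde{T} \ge \max\{p_{ij} : \tilde{x}_{ij} > 0\}$; taking the max of these two yields $\tilde{T} \ge \hat{M}(\tilde{\vec{x}})$.

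Combining, $\hat{M}(\tilde{\vec{x}}) + C(\tilde{\vec{x}}) \le \tilde{T} + C(\tilde{\vec{x}}) \le M(\vec{y}) + C(\vec{y})$. Finally, because $\vec{x}^{*}$ is by definition the minimizer of $\hat{M}(\cdot) + C(\cdot)$ over the feasible outputs of all $km$ instances of $LP(t)$, and $\tilde{\vec{x}}$ is one such output, we conclude $\hat{M}(\vec{x}^{*}) + C(\vec{x}^{*}) \le \hat{M}(\tilde{\vec{x}}) + C(\tilde{\vec{x}}) \le M(\vec{y}) + C(\vec{y})$, as required. There is no serious obstacle here; the only mildly delicate point is making sure that the LP's combined minimization of $T + C$ really dominates $\hat{M}+C$ of any feasible fractional solution, which is handled by the observation $\tilde{T}\ge \hat{M}(\tilde{\vec{x}})$ above.
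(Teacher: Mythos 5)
Your proof is correct and follows the same strategy as the paper's: pick $t^{\star}$ to be the largest processing time used by the integral optimum $\vec{y}$, observe that $\vec{y}$ is feasible for $LP(t^{\star})$, and compare via the best solution. Your version is slightly more careful, in that you explicitly establish the bridging inequality $\tilde{T} \ge \hat{M}(\tilde{\vec{x}})$ (which connects the LP's objective value $T+C$ to the quantity $\hat{M}+C$ appearing in the lemma), a step the paper leaves implicit; this is a worthwhile clarification but not a different argument.
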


\begin{proof}
Some job assigned in $\vec{y}$ has the largest processing time, say it is $t$. Then $\vec{y}$ is a feasible solution to $LP(t)$, and will have value $\hat{M}(\vec{y}) + C(\vec{y})$. $\vec{x}^*$ therefore satisfies $\hat{M}(\vec{x}^*) + C(\vec{x}^*) \leq \hat{M}(\vec{y}) + C(\vec{y})$. As $\vec{y}$ is an integral solution, we have $\hat{M}(\vec{y}) = M(\vec{y})$, proving the lemma.
\end{proof}

Comining Lemma~\ref{lem:LPopt} with Theorem~\ref{thm:ST} proves Theorem~\ref{thm:alphabetamakespan}.

\begin{prevproof}{Theorem}{thm:alphabetamakespan}
Consider the algorithm that solves $LP(t)$ for all $km$ values of $t$ and outputs the fractional solution $\vec{x}^*$ that is optimal among all feasible solutions found. By Lemma~\ref{lem:LPopt}, $\vec{x}^*$ is at least as good as the optimal integral solution. By Theorem~\ref{thm:ST}, we can continue by rounding $\vec{x}^*$ in polynomial time to an integral solution $\vec{x}$ satisfying $\frac{1}{2}M(\vec{x}) + C(\vec{x}) \leq \hat{M}(\vec{x}^*) + C(\vec{x}^*) \leq M(\vec{y}) + C(\vec{y})$. 
\end{prevproof}

We next prove Theorem~\ref{thm:useful}, as it will be used in the proof of Theorem~\ref{thm:alphabetafairness}.

\begin{prevproof}{Theorem}{thm:useful}
We can break the cost of $\vec{x}$ into $C(\vec{x}) = C^+(\vec{x}) + C^-(\vec{x})$, where $C^+(\vec{x})$ denotes the portion of the cost due to jobs assigned to machines with positive cost, and $C^-(\vec{x})$ denotes the portion of the cost due to jobs assigned to machines with negative cost. As $\vec{v}$ assigns all jobs to the machine with largest positive cost, we clearly have $C^+(\vec{v}) \geq C^+(\vec{x})$ and $C^-(\vec{v}) = 0$ (but may have $F(\vec{v}) = 0$). Furthermore, as $\mathbb{E}[y_{ij}] \leq x_{ij}$ for all $i, j$, we clearly have $C^-(\vec{y}) \geq C^-(\vec{x})$ (but may have $C^+(\vec{y}) = 0$). 

So there are two cases to consider. Maybe $\beta F(\vec{x}) + C^-(\vec{x}) \geq \gamma (\beta F(\vec{x}) + C(\vec{x}))$. In this case, we clearly have $z\cdot \beta F(\vec{y}) + C(\vec{y}) \geq \beta F(\vec{x}) + C^-(\vec{x}) \geq \gamma ( \beta F(\vec{x}) + C(\vec{x}))$, and the first possibility holds. The other case is that maybe $C^+(\vec{x}) \geq (1-\gamma) (\beta F(\vec{x}) + C(\vec{x}))$, in which case we clearly have $F(\vec{v}) + C(\vec{v}) \geq C^+(\vec{x}) \geq (1-\gamma)(\beta F(\vec{x}) + C(\vec{x}))$, and the second possibility holds.
\end{prevproof}

With Theorem~\ref{thm:useful}, we may now prove Theorem~\ref{thm:alphabetafairness}. We begin by describing our algorithm modifying that of Asadpour and Siberi, which starts by solving a linear program known as the configuration LP. We modify the LP slightly to minimize fairness plus cost, but this does not affect the ability to solve this LP in polynomial time via the same approach used by Bansal and Sviridenko~\cite{BansalS06}.\footnote{Note that this is non-trivial, as the LP has exponentially-many variables. The approach of Bansal and Sviridenko is to solve the dual LP via a separation oracle which requires solving a knapsack problem.} The modified configuration LP is in Figure~\ref{fig:config}. Note that $T$ is a parameter, and for any $T$ we call the instantiation of the configuration LP $CLP(T)$. A configuration is a set $S$ of jobs. A configuration $S$ is said to be ``valid'' for machine $i$ if $\sum_{j \in S} p_{ij} \geq T$, or if $S$ contains a single job with $p_{ij} \geq T/\sqrt{k}\log^3(k)$. Call the former types of configurations ``small'' and the latter ``big.'' $S(i, T)$ denotes the set of all configurations that are valid for machine $i$.

\begin{figure}[ht]
\colorbox{MyGray}{
\begin{minipage}{\textwidth} {
\noindent\textbf{Variables:}
\begin{itemize}
\item $x_{i,S}$, for all machines $i$ and configurations $S$ denoting the fractional assignment of configuration $S$ to machine $i$.
\end{itemize}
\textbf{Constraints:}
\begin{itemize}
\item $\sum_{S \in S(i, T)} x_{i,S} = 1$, for all $i$, guaranteeing that every machine is fractionally assigned a valid configuration with weight $1$.
\item $\sum_i \sum_{S | j \in S} x_{i, S} \leq 1$, for all $j$, guaranteeing that no job is fractionally assigned with weight more than $1$.
\item $x_{i,S} \geq 0$, for all $i, C$.
\end{itemize}
\textbf{Maximizing:}
\begin{itemize}
\item $\sum_i \sum_{S \in S(i,T)} x_{i, S} \sum_{j \in S} c_{ij}$, the cost of the fractional solution $\vec{x}$.\\
\end{itemize}}
\end{minipage}}
\caption{(a modification of) The configuration LP parameterized by $T$.}
\label{fig:config}
\end{figure}

Step one of the algorithm solves $CLP(T)$ for all $T = 2^x$ for which the fairness of the optimal solution could possibly be between $2^x$ and $2^{x+1}$. It's clear that there are only polynomially many (in the bit complexity of the processing times and $k$ and $m$) such $x$. Let $\vec{x}(T)$ denote the solution found by solving $CLP(T)$ (if one was found at all). Then define $\vec{x}^* = \argmax_T \{2T + C(\vec{x}(T))\}$. We first claim that $\vec{x}^*$ is a good fractional solution.

\begin{lemma}\label{lem:stepone}
Let $OPT$ be the fairness plus cost of the optimal integral allocation. Then $2F(\vec{x}^*) + C(\vec{x}^*) \geq OPT$.
\end{lemma}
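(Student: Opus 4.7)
The plan is to identify a particular dyadic scale $T$ for which the optimal integral allocation $\vec{y}$ is itself feasible for $CLP(T)$, and then combine two simple observations about $CLP(T)$: it maximizes cost, and every feasible solution to $CLP(T)$ has fractional fairness at least $T$.

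First I would set $F^* = F(\vec{y})$ where $\vec{y}$ is an integral allocation with $OPT = F(\vec{y}) + C(\vec{y})$, and handle the main case $F^* > 0$. I would pick the unique power of two $T = 2^x$ with $T \le F^* < 2T$, which is among the scales enumerated by the algorithm. For each machine $i$ the set $S_i = \{j : y_{ij} = 1\}$ then satisfies $\sum_{j \in S_i} p_{ij} \ge F^* \ge T$, so $S_i$ is a valid small configuration for $i$; hence $\vec{y}$ corresponds to a feasible integral solution of $CLP(T)$, and since $CLP(T)$ maximizes cost this already yields $C(\vec{x}(T)) \ge C(\vec{y})$.

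Next I would observe that for any feasible $\vec{x}(T)$, every configuration $S$ with $x(T)_{i,S} > 0$ is valid for $i$ and hence has $\sum_{j \in S} p_{ij} \ge T$, while $\sum_S x(T)_{i,S} = 1$; summing then gives $\sum_j x(T)_{ij}\, p_{ij} \ge T$ for every machine $i$, i.e.\ $F(\vec{x}(T)) \ge T$, and analogously $F(\vec{x}^*) \ge T^*$ where $T^*$ is the scale realizing the $\argmax$. Chaining the facts gives
\[
2F(\vec{x}^*) + C(\vec{x}^*) \;\ge\; 2T^* + C(\vec{x}^*) \;\ge\; 2T + C(\vec{x}(T)) \;\ge\; 2T + C(\vec{y}) \;>\; F^* + C(\vec{y}) \;=\; OPT,
\]
where the second inequality uses the definition $\vec{x}^* = \argmax_T\{2T + C(\vec{x}(T))\}$ and the fourth uses $2T > F^*$. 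This is exactly where the factor of $2$ in $2F(\vec{x}^*)$ comes from: it absorbs the geometric gap $F^* < 2T$ inherent in the dyadic enumeration.

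The main obstacle I anticipate is the degenerate case $F^* = 0$, where no positive dyadic $T$ satisfies $T \le F^*$, so $\vec{y}$ itself need not be feasible for any $CLP(T)$ the algorithm tries (a machine receiving no jobs, or only zero-processing-time jobs, lacks a valid configuration). In that case $OPT = C(\vec{y})$ and the target bound reduces to a cost comparison, which I would close by showing that the smallest enumerated $T$ still admits a feasible solution whose cost dominates $C(\vec{y})$ (e.g.\ by augmenting $\vec{y}$ on the offending machines with arbitrary low-processing-time jobs, or by appealing to the separate consideration of the max-cost allocation $\vec{v}$ in the broader algorithm). This is essentially bookkeeping, but it is the one place where the clean factor-of-$2$ argument above does not apply directly.
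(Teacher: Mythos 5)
Your proof takes essentially the same route as the paper's: pick the dyadic scale $T$ with $F(\vec{z}) \in [T, 2T)$, observe that the integral optimum $\vec{z}$ is feasible for $CLP(T)$ so that $C(\vec{x}(T)) \ge C(\vec{z})$, and absorb the remaining slack $F(\vec{z}) < 2T$ into the factor of $2$, then finish via the $\argmax$ definition of $\vec{x}^*$. You are in fact somewhat more careful than the paper in two places. First, you spell out the full chain through $T^*$; the paper's last sentence asserts that ``$\vec{x}^*$ maximizes $2F(\vec{x}(T)) + C(\vec{x}(T))$ over all $T$,'' but the algorithm actually maximizes $2T + C(\vec{x}(T))$, so your more explicit bridge via $F(\vec{x}^*) \ge T^*$ is the right way to make the argument airtight. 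Second, you flag the degenerate case $F^* = 0$, which the paper silently skips (its choice of $T$ with $F(\vec{z}) \in [T, 2T)$ tacitly presupposes $F(\vec{z}) > 0$); you are right that this is the one place where the clean argument does not apply directly, and that in the broader algorithm it is handled by the separate $\vec{v}$-branch from Theorem~\ref{thm:useful}.

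There is one step you should double-check, and it is a weakness shared with the paper's own proof rather than a defect unique to yours. You justify $F(\vec{x}(T)) \ge T$ by asserting that every configuration $S$ valid for machine $i$ in $CLP(T)$ has $\sum_{j \in S} p_{ij} \ge T$. Under the paper's stated definition, however, $S(i,T)$ also contains the ``big'' singleton configurations $\{j\}$ with $p_{ij} \ge T/(\sqrt{k}\log^3 k)$, which can have processing time well below $T$; a machine fractionally covered entirely by such singletons would have $\sum_j x_{ij} p_{ij}$ as small as $T/(\sqrt{k}\log^3 k)$, so $F(\vec{x}(T)) \ge T$ does not follow from the LP constraints alone. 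The paper's proof leans on exactly the same fact (the jump from $C(\vec{x}(T)) \ge C(\vec{z})$ and $2T \ge F(\vec{z})$ to $2F(\vec{x}(T)) + C(\vec{x}(T)) \ge OPT$ needs $F(\vec{x}(T)) \ge T$) without addressing the big configurations either. So your attempt faithfully reproduces the paper's reasoning and makes its hidden dependency explicit, but both would need a small additional argument (or a restriction of $CLP(T)$, or a restatement of the lemma in terms of $T^*$) to close this point cleanly.
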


\begin{proof}
Whatever the optimal integral allocation, $\vec{z}$ is, it has some fairness $F(\vec{z})$. For $T = 2^x$ satisfying $F(\vec{z}) \in [T, 2T)$, $\vec{z}$ is clearly a feasible solution to $CLP(T)$, and therefore we must have $C(\vec{x}(T)) \geq C(\vec{z})$. As we also clearly have $2T \geq F(\vec{z})$ by choice of $T$, we necessariliy have $2F(\vec{x}(T)) + C(\vec{x}(T)) \geq OPT$. As $\vec{x}^*$ maximizes $2F(\vec{x}(T)) + C(\vec{x}(T))$ over all $T$, it satisfies the same inequality as well.
\end{proof}

From here, we will make use of Theorem~\ref{thm:useful}: either we will choose the allocation $\vec{v}$ that assigns every job to the machine with the highest non-negative cost, or we'll round $\vec{x}^*$ to $\vec{y}$ via the procedure used in~\cite{AsadpourS07}. We first state the rounding algorithm of~\cite{AsadpourS07}. 
\begin{enumerate}
\item Make a bipartite graph with $k$ nodes (one for each machine) on the left and $m$ nodes (one for each job) on the right. 
\item For each machine $i$ and job $j$, compute $x_{ij} = \sum_{S\ni j} x_{i, S}$. If $p_{ij} \geq T/\sqrt{k} \log^3 k$, put an edge of weight $x_{ij}$ between machine $i$ and job $j$. Call the resulting graph $\mathcal{M}$. 
\item For each node $v$, denote by $m_v$ the sum of weights of edges incident to $v$.
\item Update the weights in $\mathcal{M}$ to remove all cycles. This can be done without decreasing $C(\vec{x})$ or changing $m_v$ for any $v$, and is proved in Lemma~\ref{lem:cycles}.
\item Pick a random matching $M$ in $\mathcal{M}$ according to Algorithm~2 of~\cite{AsadpourS07}. Each edge $(i,j)$ will be included in $M$ with probability exactly $x_{ij}$, and each machine $i$ will be matched with probability exactly $m_i$.
\item For all machines that were unmatched in $M$, select small configuration $S$ with probability $x_{i, S}/m_i$. 
\item For all jobs that were selected both in the matching stage and the latter stage, award them just to whatever machine received them in the matching. For all jobs that were selected only in the latter stage, choose a machine uniformly at random among those who selected it. Throw away all unselected jobs.
\end{enumerate}

Before continuing, let's prove that we can efficiently remove cycles without decreasing the cost or changing any $m_v$.

\begin{lemma}\label{lem:cycles}
Let $\mathcal{M}$ be a bipartite graph with an edge of weight $x_{ij}$ between node $i$ and node $j$, and denote by $m_v$ the sum of weights of edges incident to $v$. Let also each edge have a cost, $c_{ij}$. Then we can modify the weights of $\mathcal{M}$ in poly-time so that $\mathcal{M}$ is acyclic, without decreasing $\sum_{i, j} x_{ij} c_{ij}$ or changing any $m_v$.
\end{lemma}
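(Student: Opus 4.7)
The plan is to remove cycles one at a time using an alternating perturbation argument, which is a standard technique for rounding bipartite fractional matchings. At each step we maintain an invariant that $\sum_{ij} x_{ij} c_{ij}$ has not decreased and that every $m_v$ is unchanged, while strictly decreasing the number of edges in $\mathcal{M}$. Since the initial graph has at most $km$ edges, this terminates in polynomially many iterations.

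Concretely, suppose $\mathcal{M}$ currently contains a cycle $C$. Because $\mathcal{M}$ is bipartite, $C$ has even length $2\ell$, and we can orient its edges alternately as ``$+$'' and ``$-$'' so that at every vertex of $C$, exactly one incident cycle edge is ``$+$'' and one is ``$-$''. For a parameter $\epsilon \in \mathbb{R}$, define new weights $x'_{ij} = x_{ij} + \epsilon$ on ``$+$'' edges, $x'_{ij} = x_{ij} - \epsilon$ on ``$-$'' edges, and $x'_{ij} = x_{ij}$ on edges not in $C$. The alternating pattern guarantees that for every vertex $v$, the contributions of its two cycle-incident edges to $m_v$ cancel, so $m_v$ is preserved exactly for all $v$. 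The change in cost is $\epsilon \cdot \Delta$, where $\Delta = \sum_{e \in C^+} c_e - \sum_{e \in C^-} c_e$. Since we may freely swap the roles of ``$+$'' and ``$-$'' (which flips the sign of $\Delta$), we can always choose the orientation so that $\Delta \ge 0$, guaranteeing that the cost is nondecreasing for any $\epsilon \ge 0$.

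Now take $\epsilon$ as large as possible subject to $x'_{ij} \ge 0$ on all ``$-$'' edges, i.e.\ $\epsilon = \min_{e \in C^-} x_e$. At this value of $\epsilon$ at least one edge of $C$ drops to weight $0$, and we delete it from $\mathcal{M}$. This reduces the number of edges by at least one while preserving nonnegativity, preserving every $m_v$, and not decreasing the cost. Detecting a cycle and performing this update takes polynomial time (e.g.\ by DFS), and we repeat until $\mathcal{M}$ is acyclic, which requires at most $km$ iterations.

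The only subtle point is ensuring that the cost is nondecreasing at each step; this is handled by the freedom to choose the orientation of ``$+$'' and ``$-$'' on the cycle, which the bipartite even-length structure provides. With this, after at most $km$ iterations of the above procedure $\mathcal{M}$ is acyclic, and the cost and the degree weights $m_v$ are maintained as required, proving the lemma.
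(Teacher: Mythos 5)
Your proof is correct and is essentially the same argument as the paper's: in both cases one alternates the cycle edges into two classes, chooses the labeling so the net cost change is nonnegative, and perturbs by $\epsilon = \min$ weight on the decremented class to kill an edge while preserving every $m_v$. Your write-up just makes the bipartite-parity and orientation-choice steps slightly more explicit than the paper does.
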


\begin{proof}
Consider any cycle $e_1, \ldots, e_{2x}$. For $e = (i,j)$, denote by $c(e) = c_{ij}$ and $x(e) = x_{ij}$. Call the odd edges those with odd subscripts and the even edges those with even subscripts. W.l.o.g. assume that the odd edges have higher total cost. That is, $\sum_{z=1}^x c(e_{2z-1}) \geq \sum_{z=1}^x c(e_{2z})$. Let also $\epsilon = \min_{z} x(e_{2z})$. Now consider decreasing the weight of all even edges by $\epsilon$ and increasing the weight of all odd edges by $\epsilon$. Clearly, we have not decreased the cost. It is also clear that we have not changed $m_v$ for any $v$. And finally, it is also clear that we've removed a cycle (by removing an edge). So we can repeat this procedure a polynomial number of times and result in an acyclic graph.
\end{proof}

Now, let $\vec{x}$ denote the fractional assignment obtained after removing cycles in $\mathcal{M}$. Then it's clear that $2F(\vec{x}) + C(\vec{x}) \geq OPT$. If we let $\vec{y}$ denote the randomized allocation output at the end of the procedure, it's also clear that $\mathbb{E}[y_{ij}] \leq x_{ij}$ for all $i, j$. This is because if there were never any conflicts (jobs being awarded multiple times), we would have exactly $\mathbb{E}[y_{ij}] = x_{ij}$. But because of potential conflicts, $\mathbb{E}[y_{ij}]$ can only decrease. Asadpour and Siberi show the following theorem about the quality of $\vec{y}$:

\begin{theorem}\label{thm:AS09}(\cite{AsadpourS07})
With probability $1-o(1)$, the fairness of the allocation output by the procedure is at least $F(\vec{x})/320\sqrt{k}\log^3 k$. This implies that $F(\vec{y}) \in \tilde{\Omega}(F(\vec{x})/\sqrt{k})$. 
\end{theorem}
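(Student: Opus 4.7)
The plan is to inherit the analysis from Asadpour and Saberi~\cite{AsadpourS07} essentially verbatim, since the rounding procedure (steps 1--7) is exactly the one they analyze, and the only non-standard step is the cycle-removal in Step~4, which by Lemma~\ref{lem:cycles} preserves both the individual edge weights $x_{ij}$ remaining in $\mathcal{M}$ and the marginals $m_v$ used as inputs to their matching algorithm. Hence $\vec{x}$ continues to satisfy the hypotheses of their concentration lemmas (feasibility of the configuration LP, validity of configurations, and the marginal constraints on the bipartite graph $\mathcal{M}$); the cost term $C(\vec{x})$ plays no role in the fairness lower bound and is handled separately via Theorem~\ref{thm:useful}.

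For self-containment I would sketch the argument as follows. Fix an arbitrary machine $i$ and let $L_i$ denote the (random) load it retains at the end of Step~7. Condition on whether $i$ is matched by the random matching $M$. If $i$ is matched (probability $m_i$), then it receives a single big job with $p_{ij} \ge T/(\sqrt{k}\log^3 k)$, and it keeps this job outright, yielding $L_i \ge T/(\sqrt{k}\log^3 k)$. If $i$ is unmatched (probability $1-m_i$), it samples a small configuration $S$ with $\sum_{j \in S} p_{ij} \ge T$, but only keeps the jobs that are not stolen by machines that also selected $j$ either in the matching or in their own small-configuration draw. The key facts we need from~\cite{AsadpourS07} are: (i) their matching algorithm (Algorithm~2) produces negatively correlated edge indicators, (ii) for each job $j$, the total probability that $j$ is claimed by some machine is at most $\sum_i x_{ij} \le 1$, so the expected load stolen from $i$ is a small constant fraction of what it initially grabs, and (iii) inside a small configuration every job has $p_{ij} \le T/(\sqrt{k}\log^3 k)$, which gives a small ``bounded differences'' parameter for concentration.

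Combining these ingredients, the AS07 analysis shows that $\mathbb{E}[L_i \mid i \text{ unmatched}] = \Omega(T/(\sqrt{k}\log^3 k))$ and that $L_i$ concentrates around its mean with probability at least $1 - 1/\mathrm{poly}(k)$ via a Chernoff/Azuma-type bound enabled by the negative correlation together with the small per-job processing times. Specifically, $L_i$ can be written as a sum of $O(\sqrt{k}\log^3 k)$ bounded, negatively correlated contributions, and tuning the constants gives the stated $320\sqrt{k}\log^3 k$ factor. A union bound over all $k$ machines, using that the failure probability is $1/\mathrm{poly}(k)$, then yields that every machine simultaneously has $L_i \ge F(\vec{x})/(320 \sqrt{k}\log^3 k)$ with probability $1-o(1)$, which is the conclusion $F(\vec{y}) = \tilde{\Omega}(F(\vec{x})/\sqrt{k})$.

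The main obstacle, and the reason we import rather than re-derive the result, is step (i): establishing that Algorithm~2 of~\cite{AsadpourS07} produces a matching whose edge indicators are negatively correlated strongly enough to permit Chernoff-style concentration on the load of an \emph{individual} machine (rather than just in aggregate). This is the technical heart of~\cite{AsadpourS07}, and since our only modification (cycle removal) is monotone in cost and preserves all marginals, their analysis transfers without change.
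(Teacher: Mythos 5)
Both you and the paper treat this as an imported result from~\cite{AsadpourS07} rather than something to be re-derived, and that is the right call: the paper cites it verbatim without proof, since the rounding procedure is exactly theirs except for the cost-aware choice of direction in the cycle-removal step. Your high-level justification for why the import is legitimate is also essentially the right one. There is, however, a small but concrete misstatement in your setup. You write that Lemma~\ref{lem:cycles} ``preserves both the individual edge weights $x_{ij}$ remaining in $\mathcal{M}$ and the marginals $m_v$.'' It does not preserve the $x_{ij}$'s --- removing a cycle is done precisely by shifting weight around the cycle (decreasing the even edges by $\epsilon$ and increasing the odd edges by $\epsilon$), which changes every weight on the cycle and drives one of them to zero. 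What the lemma actually preserves is the node-marginals $m_v$ and the property $C(\vec{x})$ non-decreasing. Your argument does not actually need the $x_{ij}$'s preserved, but as stated it gives the wrong reason the transfer works. The correct reason is that cycle removal is already part of the procedure analyzed in~\cite{AsadpourS07}: their fairness bound is proved for \emph{any} acyclic bipartite fractional matching with the given marginals $m_v$ (together with the untouched small-configuration variables $x_{i,S}$), and it is insensitive to which particular acyclic representative is chosen. The paper's Lemma~\ref{lem:cycles} merely selects, among valid acyclic representatives, one that is also good for cost; the marginals $m_v$ --- the only thing the fairness analysis actually reads off of $\mathcal{M}$ --- are unchanged, so Theorem~\ref{thm:AS09} applies unmodified. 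With that correction, your sketch of the AS07 ingredients (big jobs give $\Omega(T/\sqrt{k}\log^3 k)$ directly, unmatched machines sample a small configuration, negative correlation in Algorithm~2 plus bounded per-job processing times enable Chernoff/Azuma concentration, union bound over the $k$ machines) is a fair summary and matches the spirit of what the paper elides by citing~\cite{AsadpourS07}.
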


And now we are ready to make use of Theorem~\ref{thm:useful}. 

\begin{proposition}\label{prop:AS09}
Either assigning every job to the machine with highest cost is a $(\frac{1}{2},1)$-approximation (which is a traditional $\frac{1}{2}$-approximation), or the $\vec{y}$ output by the algorithm above is a $(\frac{1}{2}, \tilde{O}(\sqrt{k}))$-approximation.
\end{proposition}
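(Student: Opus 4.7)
The plan is to derive Proposition~\ref{prop:AS09} as a direct instantiation of Theorem~\ref{thm:useful}, feeding in the fractional solution $\vec{x}^*$ produced by the configuration LP and the randomized rounding $\vec{y}$ obtained from the Asadpour--Saberi procedure described above the statement. The setup already provides, essentially for free, the three ingredients Theorem~\ref{thm:useful} requires: an $(\alpha,\beta)$-approximate fractional solution, a random integral assignment that dominates it marginally, and a contraction factor $z$ for the fairness term.

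More concretely, first I would invoke Lemma~\ref{lem:stepone} to conclude that $\vec{x}^*$ satisfies $2F(\vec{x}^*) + C(\vec{x}^*) \geq \mathrm{OPT}$, which is precisely the hypothesis of Theorem~\ref{thm:useful} with $\alpha = 1$ and $\beta = 2$. Next I would check that the random assignment $\vec{y}$ meets the two conditions on the ``rounded'' variable in Theorem~\ref{thm:useful}: (i) $\mathbb{E}[y_{ij}] \leq x^*_{ij}$ for every $(i,j)$, which is explicitly noted in the paragraph preceding Theorem~\ref{thm:AS09} (the matching step realizes each edge $(i,j)$ with probability exactly $x^*_{ij}$, and conflicts can only decrease this probability), and (ii) $F(\vec{x}^*) \leq z \cdot F(\vec{y})$ for $z = \tilde{O}(\sqrt{k})$, which is exactly the conclusion of Theorem~\ref{thm:AS09} (modulo the $1-o(1)$ probability, which we can absorb into the failure probability of the overall mechanism).

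Having verified the hypotheses with parameters $\alpha=1$, $\beta=2$, $z = \tilde{O}(\sqrt{k})$, I would apply Theorem~\ref{thm:useful} with $\gamma = 1/2$. The dichotomy then reads: either $z\beta F(\vec{y}) + C(\vec{y}) \geq \gamma\alpha \cdot \mathrm{OPT}$, giving a $(\gamma\alpha, z\beta) = (1/2,\tilde{O}(\sqrt{k}))$-approximation by $\vec{y}$; or $F(\vec{v}) + C(\vec{v}) \geq (1-\gamma)\alpha \cdot \mathrm{OPT} = \mathrm{OPT}/2$, where $\vec{v}$ is the integral assignment that sends every job to the machine maximizing its (non-negative) cost, giving a $(1/2,1)$-approximation. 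Outputting whichever of $\vec{v}$ and the expected outcome of $\vec{y}$ yields a larger value of $\beta F(\cdot) + C(\cdot)$ therefore realizes the claimed guarantee.

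The only subtlety I anticipate is a small bookkeeping issue rather than a mathematical obstacle: the proposition speaks of ``the machine with highest cost'' while Theorem~\ref{thm:useful} uses the machine with highest \emph{non-negative} cost (leaving a job unassigned if all its costs are negative). I would point out that these two rules produce allocations with the same value of $F(\cdot)+C(\cdot)$ since any negative-cost assignment can only hurt the objective, so the conclusion is unaffected. With that caveat, the argument is essentially a one-line plug-in into Theorem~\ref{thm:useful}.
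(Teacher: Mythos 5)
Your approach is exactly the paper's: instantiate Theorem~\ref{thm:useful} with $\gamma = 1/2$, $\alpha=1$, $\beta=2$, $z=\tilde{O}(\sqrt k)$, and read off the dichotomy. There is, however, one technical slip worth correcting. You feed $\vec{x}^*$, the raw configuration-LP solution, into Theorem~\ref{thm:useful}. But the two hypotheses you check, namely $\mathbb{E}[y_{ij}] \le x_{ij}$ and $z\cdot F(\vec y) \ge F(\vec x)$, hold for the \emph{post-cycle-removal} assignment $\vec{x}$, not for $\vec{x}^*$. The matching step of Asadpour--Saberi realizes each edge with probability equal to the edge weight \emph{after} Lemma~\ref{lem:cycles} has been applied, and the cycle-removal step strictly increases some $x_{ij}$'s (it shifts weight from the even-indexed edges of each cycle onto the odd-indexed ones), so $\mathbb{E}[y_{ij}] \le x^*_{ij}$ can fail. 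Likewise, Theorem~\ref{thm:AS09} lower-bounds $F(\vec y)$ in terms of $F(\vec x)$, not $F(\vec x^*)$. The fix is trivial and is what the paper does: observe (via Lemma~\ref{lem:cycles}) that cycle removal does not decrease the cost and preserves the marginals $m_v$, so $\vec{x}$ is still a $(1,2)$-approximation, and then apply Theorem~\ref{thm:useful} to $\vec{x}$ rather than $\vec{x}^*$. With that substitution your argument is correct and essentially identical to the paper's. Your remark about ``highest cost'' versus ``highest non-negative cost'' is also apt; the paper's proof (and Theorem~\ref{thm:useful}) indeed use the non-negative variant, and the two choices yield the same value of $F(\vec v)+C(\vec v)$, so nothing is lost.
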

\begin{proof}
After removing cycles, we have a fractional solution $\vec{x}$ that is a $(1,2)$-approximation. By using the randomized procedure of Asadpour and Siberi, we get a randomized $\vec{y}$ satisfying $\mathbb{E}[y_{ij}] \leq x_{ij}$ for all $i,j$ and $\tilde{O}(\sqrt{k})F(\vec{y}) \geq F(\vec{x})$. Therefore, taking $\gamma = 1/2$, Theorem~\ref{thm:useful} tells us that either assigning every job to the machine with highest non-negative cost yields a $\frac{1}{2}$-approximation, or $\vec{y}$ is a $(\frac{1}{2}, \tilde{O}(\sqrt{k}))$-approximation.
\end{proof}

We conclude this section by proving that the $(m-k+1)$-approximation algorithm of Bezakova and Dani for fairness can be modified to be a $(1,m-k+1)$-approximation for fairness plus costs. The algorithm is fairly simple: for a fixed $T$, make the following bipartite graph. Put $k$ nodes on the left, one for each machine, and $m$ nodes on the right, one for each job. Put an additional $m-k$ nodes on the left for dummy machines. Put an edge from every job node $j$ to every dummy machine node of weight $\max_{i} \{0, c_{ij}\}$, and an edge from every job node to every real machine node $i$ of weight $c_{ij}$ \emph{only if} $p_{ij} \geq T$. Then find the maximum weight matching in this graph. For every job that is matched to a real machine, assign it there. For every job that is assigned to a dummy machine, assign it to the machine with the maximum non-negative cost (or nowhere if they're all negative). Call this assignment $A_T$. Denote $A^* = \argmax_T \{(m-k+1)T + C(A_T)\}$. Finally, let $V$ denote the allocation that just assigns every job to the machine of highest cost. If $F(V) + C(V) \geq (m-k+1) F(A^*) + C(A^*)$, output $V$. Otherwise, output $A^*$.

\begin{proposition}\label{prop:BD03}
The algorithm above finds an allocation $A$ satisfying $(m-k+1)F(A) + C(A) \geq OPT$. That is, $A$ is a $(1, m-k+1)$-approximation.
\end{proposition}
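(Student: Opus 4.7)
The plan is to exhibit, at the threshold $T^{*}:=F(O^{*})/(m-k+1)$, a matching $M'$ in the algorithm's bipartite graph that (i) saturates every real machine and (ii) has weight at least $C(O^{*})$; dominating $M'$ by the algorithm's matching at $T^{*}$ then pins down the $(1,m-k+1)$-bound for $A_{T^{*}}$, and hence for the algorithm's output. Let $O^{*}$ attain $OPT=F(O^{*})+C(O^{*})$. If $F(O^{*})=0$, then $OPT=C(O^{*})$, and since for every integral assignment $\vec{x}$ one has $C(\vec{x})=\sum_{j}c_{i(j),j}\le\sum_{j}\max_{i'}\{0,c_{i'j}\}=C(V)$, the allocation $V$ satisfies $(m-k+1)F(V)+C(V)\ge C(V)\ge OPT$; a short case analysis on the algorithm's comparison step $F(V)+C(V)\ge(m-k+1)F(A^{*})+C(A^{*})$ (noting that $(m-k+1)F(V)+C(V)\ge F(V)+C(V)$) shows that both branches of the output inherit this bound.

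In the main case $F(O^{*})>0$, every machine receives a job in $O^{*}$, which forces $|J_{i}|\le m-k+1$ for all $i$. For each $i$ pick $j_{i}\in J_{i}$ maximizing $p_{ij_{i}}$, so that $p_{ij_{i}}\ge F(O^{*})/|J_{i}|\ge T^{*}$, and form $M'$ by matching $j_{i}$ to real machine $i$ (the edge is present since $p_{ij_{i}}\ge T^{*}$) and matching the remaining $m-k$ jobs bijectively to the $m-k$ dummy machines. Then $M'$ is a perfect matching saturating every real machine, and using $c_{ij}\le\max_{i'}\{0,c_{i'j}\}$ termwise yields
\[
W(M')=\sum_{i}c_{ij_{i}}+\sum_{j\notin\{j_{1},\ldots,j_{k}\}}\max_{i'}\{0,c_{i'j}\}\;\ge\;\sum_{i}c_{ij_{i}}+\sum_{i}\sum_{j\in J_{i}\setminus\{j_{i}\}} c_{ij}\;=\;C(O^{*}).
\]
The algorithm's matching at $T^{*}$ (interpreted as a max weight matching among those saturating all real machines, equivalently a max weight perfect matching, which exists because $M'$ does) therefore has weight at least $C(O^{*})$, saturates every real machine so that $F(A_{T^{*}})\ge T^{*}$, and by construction gives $C(A_{T^{*}})$ equal to the matching weight. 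Hence $(m-k+1)F(A_{T^{*}})+C(A_{T^{*}})\ge(m-k+1)T^{*}+C(O^{*})=OPT$; by the optimality of $A^{*}$ under the proxy $(m-k+1)T+C(A_{T})$ and the saturation property $F(A^{*})\ge T(A^{*})$, the same bound transfers to $A^{*}$, and the comparison step then propagates it to the final output.

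The main obstacle, and subtle point, is the implicit requirement that the max weight matching saturate every real machine. An unconstrained max weight matching can strictly prefer to leave a real machine unmatched (for example when every edge incident to that machine has strongly negative cost while the $m-k$ dummies can already absorb the highest-weight jobs), in which case $F(A_{T})$ collapses to $0$ and the selection proxy $(m-k+1)T+C(A_{T})$ grossly overestimates $(m-k+1)F(A_{T})+C(A_{T})$. The natural fix is to restrict attention to perfect matchings, which exist for every $T\le T^{*}$ thanks to the construction of $M'$ and which collapse the proxy onto the true objective, so that the domination $W(M^{*}_{T^{*}})\ge W(M')\ge C(O^{*})$ delivers the claimed $(1,m-k+1)$-approximation.
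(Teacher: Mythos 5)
Your proof follows the same route as the paper's: split on whether $F(O^*)=0$; in the nontrivial case keep one heaviest job per machine and reassign the rest to the best non-negative-cost machine, obtaining a matching whose fairness is within a factor $m-k+1$ and whose cost is no smaller; dominate it by the algorithm's matching at the corresponding threshold; and propagate the bound through $A^*$ and the final comparison with $V$. What you add, correctly, is an explicit treatment of a point the paper leaves implicit: with possibly negative costs an unconstrained maximum-weight matching could leave a real machine unmatched (collapsing $F(A_T)$ to $0$), so the algorithm must be understood as computing a maximum-weight matching that saturates every real machine, equivalently a maximum-weight perfect matching once extended over the dummies; your $M'$ certifies existence of such a matching at the relevant threshold, and saturation is exactly what justifies the paper's claim that ``$A_T$ has fairness at least $T$.'' One small cosmetic point: $T^*=F(O^*)/(m-k+1)$ need not be a processing time, so if $T$ ranges over $\{p_{ij}\}$ (as the paper implicitly assumes) you should instead take $T'=\min_i p_{i,j_i}\ge T^*$, which is precisely the paper's choice $T=F(X')$; the argument is otherwise unchanged.
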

\begin{proof}
Consider the optimal assignment $X$. We either have $F(X) = 0$, or $F(X) > 0$. If $F(X) =0$, then clearly $X = V$. If $F(X) > 0$, then every machine is awarded at least one job, but at most $m-k+1$. For each machine $i$, define $j(i)$ to be the job assigned to $i$ with the highest processing time. Except for $j(i)$, reassign all other jobs to the machine with the highest non-negative cost. This can only increase the cost, and will not hurt the fairness by more than a factor of $m-k+1$. So this solution, $X'$, clearly has $(m-k+1)F(X') + C(X') \geq OPT$. Futhermore, $X'$ corresponds to a feasible matching when $T = F(X')$. Whatever solution $A_T$ is found instead clearly has fairness at least $T$ and cost at least $C(X')$. So $A_T$, and therefore also $A^*$, is a $(1, m-k+1)$-approximation. 

So in conclusion, either $F(X) > 0$, in which case $A^*$ is a $(1, m-k+1)$-approximation, or $F(X) = 0$, in which case $F(V) + C(V) = OPT$. So if we ever output $V$, we actually have $V = OPT$. If we output $A^*$, then either $F(X) > 0$, or $(m-k+1) F(A^*) + C(A^*) \geq F(V) +  C(V) = OPT$. In both cases, $A^*$ is a $(1,m-k+1)$-approximation.
\end{proof}

\begin{prevproof}{Theorem}{thm:alphabetafairness}
Part 1) is proved in Proposition~\ref{prop:AS09}, and part 2) is proved in Proposition~\ref{prop:BD03}.
\end{prevproof}

\end{document}